\newcommand{\lmerge}{\mathbin{
                  \setlength{\unitlength}{1ex}
                  \begin{picture}(1,1.75)
                  \put(0,0){\line(1,0){1}}
                  \put(0,0){\line(0,1){1.75}}
                  \put(0.45,0){\line(0,1){1.75}}
                  \end{picture}
                 }}
\newcommand{\cmerge}{~|~}
\newcommand{\textl}{~\!\lmerge}
\newcommand{\textc}{|}
\newcommand{\ccslc}{\text{CCS}_{\text{LC}}}
\newcommand{\nil}{\mathbf{0}}
\newcommand{\Act}{\mathcal{A}}
\newcommand{\Acttau}{\mathcal{A}_\tau}
\newcommand{\Var}{\mathcal{V}}
\newcommand{\size}{\mathrm{size}}
\newcommand{\depth}{\mathrm{depth}}
\newcommand{\rdepth}{\mathrm{Rdepth}}
\newcommand{\var}{\mathrm{var}}
\newcommand{\init}{\mathrm{init}}
\newcommand{\der}{\mathrm{der}}
\newcommand{\trans}[1][]{\xrightarrow{\, {#1} \, }}
\newcommand{\ntrans}[1][]{\mathrel{{\trans[#1]}\makebox[0em][r]{$\not$\hspace{2ex}}}{\!}}
\newcommand{\xitrans}{\overset{\xi}\twoheadrightarrow}
\newcommand{\wtrans}[1][]{\overset{#1}\Longrightarrow}
\newcommand{\SOSrule}[2]{\frac{\displaystyle #1}{\displaystyle #2}}
\newcommand{\rel}{\,{\mathcal R}\,}
\newcommand{\Proc}{\mathbf{P}}
\newcommand{\C}{\mathcal{C}}
\newcommand{\dd}{\mathrm{d}}
\newcommand{\e}{\mathfrak{e}}
\newcommand{\E}{\mathcal{E}}
\newcommand{\N}{\mathbb{N}}
\newcommand{\B}{\mathtt{B}}
\newcommand{\bb}{\mathtt{BB}}
\newcommand{\rbb}{\mathtt{RBB}}
\newcommand{\reb}{\mathtt{R}\eta\mathtt{B}}
\newcommand{\rdb}{\mathtt{RDB}}
\newcommand{\rwb}{\mathtt{RWB}}
\title{On the Axiomatisation of Branching Bisimulation Congruence over CCS}
\titlerunning{On the axiomatisation of branching bisimulation congruence over CCS}
\author{Luca Aceto}{Reykjavik University, Iceland 
\and Gran Sasso Science Institute (GSSI), Italy}{}{https://orcid.org/0000-0002-2197-3018}{}
\author{Valentina Castiglioni}{Reykjavik University, Iceland}{}{https://orcid.org/0000-0002-8112-6523}{}
\author{Anna Ing{\'o}lfsd{\'o}ttir}{Reykjavik University, Iceland}{}{https://orcid.org/0000-0001-8362-3075}{}
\author{Bas Luttik}{Eindhoven University of Technology, The Netherlands}{}{https://orcid.org/0000-0001-6710-8436}{}
\authorrunning{L. Aceto, V. Castiglioni, A. Ing\'olfsd\'ottir, and B. Luttik} 
\keywords{Equational basis,
Weak semantics,
CCS,
Parallel composition.}
\begin{document}

\maketitle

\begin{abstract}
In this paper we investigate the equational theory of (the restriction, relabelling, and recursion free fragment of) CCS modulo rooted branching bisimilarity, which is a classic, bisimulation-based notion of equivalence that abstracts from internal computational steps in process behaviour.
Firstly, we show that CCS is not finitely based modulo the considered congruence.
As a key step of independent interest in the proof of that negative result, we prove that each CCS process has a unique parallel decomposition into indecomposable processes modulo branching bisimilarity.
As a second main contribution, we show that, when the set of actions is finite, rooted branching bisimilarity has a finite equational basis over CCS enriched with the left merge and communication merge operators from ACP.
\end{abstract}

%===================================================
% sec - intro
%====================================================

\section{Introduction}

This paper is a new chapter in the saga of the axiomatisation of the \emph{parallel composition operator} $\mathbin{\|}$ (also known as ``\emph{full}'' \emph{merge} \cite{BK84b,BK85}) of the Calculus of Communicating Systems (CCS) \cite{Mi80}.
The saga has its roots in the works \cite{HM80,HM85}, in which Hennessy and Milner studied the \emph{equational  theory} of (recursion free) CCS and proposed a \emph{ground-complete axiomatisation} for it modulo \emph{strong bisimilarity} and \emph{observational congruence}, two classic notions of behavioural \emph{congruence} (i.e., an equivalence relation that is compositional with respect to the language operators) that allow one to establish whether two processes have the same \emph{observable behaviour} \cite{Pa81}.
That axiomatisation included infinitely many axioms, which were instances of the \emph{expansion law} used to ``simulate equationally'' the operational semantics of $\mathbin{\|}$.
Then, Bergstra and Klop showed, in \cite{BK84b}, that a \emph{finite} ground-complete axiomatisation modulo bisimilarity can be obtained by enriching CCS with two auxiliary operators, i.e., the \emph{left merge} $\textl$ and the \emph{communication merge} $\textc$, expressing one step in the pure interleaving and the synchronous behaviour of $\|$, respectively.
Their result was strengthened by Aceto et al.\ in \cite{AFIL09}, where it is proved that, over the fragment of CCS without recursion, restriction and relabelling, the auxiliary operators $\textl$ and $\textc$ allow for finitely axiomatising $\|$ modulo bisimilarity also when CCS terms with variables are considered.
Moreover, in \cite{AILT08} that result is extended to the fragment of CCS with relabelling and restriction, but without communication.
From those studies, we can infer that $\textl$ and $\textc$ are \emph{sufficient} to finitely axiomatise $\mathbin{\|}$ over CCS modulo bisimilarity. 
(\emph{Henceforth, we only consider the recursion, restriction and relabelling free fragment of CCS.})
Moller showed, in \cite{Mo89,Mo90}, that they are also \emph{necessary}.
He considered a minimal fragment of CCS, including only the inactive process, action prefixing, nondeterministic choice and interleaving, and proved that, even in the presence of a single action, bisimilarity does not afford a finite ground-complete axiomatisation over that language. 
Moller's proof technique was then used to show that the same negative result holds if we replace $\textl$ and $\textc$ with the so called \emph{Hennessy's merge} \cite{He88}, which denotes an asymmetric interleaving with communication, or, more generally, with a single binary auxiliary operator satisfying three assumptions given in \cite{ACFIL21}.

The aforementioned works considered equational characterisations of $\mathbin{\|}$ modulo strong bisimilarity.
However, a plethora of behavioural congruences have been proposed in the literature, corresponding to different levels of abstraction from the information on process execution.
Hence, another chapter in the saga consisted in extending the studies recalled above to the behavioural congruences in van Glabbeek's linear time-branching time spectrum \cite{vG90}.
The work \cite{ACILP20} delineated the \emph{boundary} between finite and non-finite axiomatisability of $\mathbin{\|}$ modulo all the congruences in the spectrum.

%==================================================

\subparagraph{Our contribution: branching bisimulation congruence.}

Some information on process behaviour can either be considered irrelevant or be unavailable to an external observer.
\emph{Weak behavioural semantics} have been introduced to study the effects of these unobservable (or \emph{silent}) actions, usually denoted by $\tau$, on the observable behaviour of processes, each semantics considering a different level of abstraction.
A taxonomy of weak semantics is given in \cite{vG93}, and studies on the equational theories of various of these semantics have been carried out over the algebra BCCSP, which consists of the basic operators from CCS and CSP \cite{Ho85} but does not include $\mathbin{\|}$ (see, among others, \cite{AdFGI14,CFvG08,dNH83,vGW96,HM85}).
A finite, ground-complete axiomatisation of parallel composition modulo \emph{rooted weak bisimilarity} (also known as \emph{observational congruence} \cite{HM85}) is provided by Bergstra and Klop in \cite{BK85} over the algebra $\text{ACP}_\tau$ that includes the auxiliary operators $\textl$ and $\textc$.
To the best of our knowledge, the only study on the axiomatisability of CCS's $\mathbin{\|}$ over open terms modulo weak congruences is the negative result from \cite{AACIL21}, which shows that a class of weak congruences (including rooted weak bisimilarity) does not afford a finite, complete axiomatisation over the open terms of the minimal fragment of CCS with interleaving.

In this paper we focus on \emph{branching bisimilarity} \cite{vGW89}, which generalises strong bisimilarity to abstract away from $\tau$-steps of terms while preserving their \emph{branching structure} \cite{vGW89,vGW96}, and its \emph{rooted} version, which is a congruence with respect to CCS operators.

As a first main contribution, we show that \emph{rooted branching bisimilarity affords no finite ground-complete axiomatisation over CCS}.
To this end, we adapt the proof-theoretic technique used by Moller to prove the corresponding negative result for strong bisimilarity.
We remark that, even though the general proof strategy is a natural extension of Moller's, our proof requires a number of original, non-trivial technical results on (rooted) branching bisimilarity.
In particular, we observe that equational proofs of $\tau$-free equations might involve terms having occurrences of $\tau$ in some intermediate steps (see, e.g., page 175 of Moller’s thesis \cite{Mo89}), and our proof of the negative result for rooted branching bisimilarity will account for those uses of $\tau$, thus making our results special for the considered weak congruence.
Moreover, as an intermediate step in our proof, we establish a result of independent interest: we show that \emph{each CCS process has a unique decomposition into indecomposable processes modulo branching bisimilarity}.
A similar result was proven in \cite{Bas16}, but only for interleaving parallel composition.
Here, we extend this result to the full merge operator, including thus the possibility of communication between the parallel components.

Having established the negative result, a natural question is whether the use of the auxiliary operators from \cite{BK84b} can help us to obtain an equational basis for rooted branching bisimilarity.
Hence, as our second main contribution, we consider the language $\ccslc$, namely CCS enriched with $\textl$ and $\textc$, and \emph{we provide a complete axiomatisation for rooted branching bisimilarity over $\ccslc$ that is finite when so is the set of actions over which terms are defined}.
This axiomatisation is obtained by extending the complete axiom system for strong bisimilarity over $\ccslc$ from \cite{AFIL09} with axioms expressing the behaviour of $\textl$ and $\textc$ in the presence of $\tau$-actions (from \cite{BK85}), and with the suitable $\tau$-laws (from \cite{HM85,vGW96}) necessary to deal with rooted branching bisimilarity.
Specifically, we will see that we can express equationally the fact that left merge and communication merge distribute over choice (left merge in one argument, communication merge in both), thus allowing us to expand the behaviour of the parallel components using only a finite number of axioms, regardless of their size.
A key step in the proof of the completeness result consists in another intermediate original contribution of this work: the definition of the semantics of \emph{open} $\ccslc$ terms.

Our contribution can then be summarised as follows:
\begin{enumerate}
\item We show that every branching equivalence class of CCS processes has a unique parallel decomposition into indecomposables. 
\item We prove that rooted branching bisimilarity admits no finite equational axiomatisation over CCS.
\item We define the semantics of open $\ccslc$ terms.
\item We provide a (finite) complete axiomatisation for $\sim_\rbb$ over $\ccslc$.
\end{enumerate}

%===================================================
% sec - background
%=======================================================

\section{Background}
\label{sec:background}

\subparagraph{Labelled transition systems}

As semantic model we consider classic \emph{labelled transition systems} \cite{Ke76}.
We assume a non-empty set of action names $\Act$, and we let $\overline{\Act}$ denote the set of action co-names, i.e., $\overline{\Act}=\{\overline{a} \mid a \in \Act\}$.
As usual, we postulate that $\overline{\overline{a}}=a$ and $a \neq \overline{a}$ for all $a \in \Act$.
Then, we define $\Acttau = \Act \cup \overline{\Act} \cup \{\tau\}$, where $\tau \not \in \Act\cup\overline{\Act}$.
Henceforth, we let $\mu,\nu,\dots$ range over actions in $\Acttau$, and $\alpha,\beta,\dots$ range over actions in $\Act \cup \overline{\Act}$.

\begin{definition}
[Labelled Transition System]
\label{Def:lts}
A {\sl labelled transition system} (LTS) is a triple $(\Proc,\Acttau,\trans[])$, where $\Proc$ is a set of \emph{processes} (or \emph{states}), $\Acttau$ is a set of {\sl actions}, and ${\trans[]} \subseteq \Proc \times \Acttau \times \Proc$ is a ({\sl labelled}) {\sl transition relation}. 
\end{definition}

As usual, we use $p \trans[\mu] p'$ in lieu of $(p,\mu,p') \in {\trans[]}$. 
For each $p \in \Proc$ and $\mu \in \Act$, we write $p \trans[\mu]$ if $p \trans[\mu] p'$ holds for some $p'$, and $p \ntrans[\mu]$ otherwise. 
The \emph{initials} of $p$ are the actions that label the outgoing transitions of $p$, that is, $\init(p) = \{\mu \in \Acttau \mid p \trans[\mu] \}$.

%============================================================

\subparagraph{The language CCS}

We consider the recursion, relabelling and restriction free fragment of Milner's CCS~\cite{Mi89}, which for simplicity we still call CCS, given by the following grammar:
\[
t ::=\; \nil \;|\; 
x \;|\; 
\mu.t \;|\; 
t+t \;|\; 
t \mathbin{\|} t 
\enspace ,
\]
where $x$ is a variable drawn from a countably infinite set $\Var$ disjoint from $\Acttau$, and $\mu \in \Acttau$.
We use the \emph{Structural Operational Semantics} (SOS) framework \cite{Pl81} to equip processes with an operational semantics. 
The SOS rules (or inference rules) for the CCS operators given above are reported in Table~\ref{tab:sos_rules} (symmetric rules for $+$ and $\mathbin{\|}$ are omitted).

\begin{table}[t]
\begin{gather*}
\SOSrule{}{\mu.t \trans[\mu] t} 
\qquad
\SOSrule{t \trans[\mu] t'}{t + u \trans[\mu] t'} 
\qquad 
\SOSrule{t \trans[\mu] t'}{t \mathbin{\|} u \trans[\mu] t' \mathbin{\|} u} 
\qquad
\SOSrule{t \trans[\alpha] t' \quad u \trans[\overline{\alpha}] u'}{t \mathbin{\|} u \trans[\tau] t' \mathbin{\|} u'} 
\end{gather*}
\caption{\label{tab:sos_rules} The SOS rules for CCS operators
($\mu \in \Acttau$, $\alpha \in \Act\cup\overline{\Act}$).
} 
\end{table}

We shall use the meta-variables $t,u,v,w$ to range over process terms, and write $\var(t)$ for the collection of variables occurring in the term $t$.
We use a {\em summation} $\sum_{i\in\{1,\ldots,k\}}t_i$ to abbreviate $t_1+\cdots+t_k$, where the empty sum represents $\nil$.
We call the term $t_j$ ($j \in \{1,\dots,k\}$) a \emph{summand} of $t = \sum_{i \in \{1,\dots k\}} t_i$ if it does not have $+$ as head operator.
The {\sl size} of a term $t$, denoted by $\size(t)$, is the number of operator symbols in $t$. 
A term is {\em closed} if it does not contain any variables.  
Closed terms, or {\sl processes}, will be denoted by $p,q,r$. 
Moreover, we omit trailing $\nil$'s from terms.
A {\sl (closed) substitution} is a mapping from process variables to (closed) terms.
Substitutions are extended from variables to terms, transitions, and rules in the usual way.
Note that $\sigma(t)$ is closed, if so is $\sigma$. 
We let $\sigma[x\mapsto p]$ denote the substitution that maps the variable $x$ into process $p$ and behaves like $\sigma$ on all other variables.
In particular, $[x \mapsto p]$ denotes the substitution that maps the variable $x$ into process $p$ and behaves like the identity on all other variables.

The inference rules in Table~\ref{tab:sos_rules} allow us to derive valid transitions between closed terms.
The operational semantics for our language is then modelled by the LTS whose processes are the closed terms, and whose labelled transitions are those that are provable from the SOS rules.
Henceforth, we let $\Proc$ denote the set of CCS processes.
We remark that whenever $p \trans[\mu] p'$, then $\size(p) > \size(p')$.

%=======================================================

\subparagraph{Branching bisimilarity}

\emph{Branching bisimilarity} is a bisimulation-based behavioural equivalence that abstracts away from computation steps in processes that are deemed unobservable, while preserving their \emph{branching structure}.
The abstraction is achieved by labelling these computation steps with $\tau$, and giving $\tau$-labelled transitions a special treatment in the definition of the behavioural equivalence.
Preservation of the branching structure is mainly due to the \emph{stuttering} nature of branching bisimulation, which guarantees that the behaviour of a term is preserved in the execution of a sequence of silent steps \cite{vGW89,vGW96}.

Let $\trans[\varepsilon]$ denote the reflexive and transitive closure of the transition $\trans[\tau]$. 

\begin{definition}
[Branching bisimilarity]
\label{def:bb}
Let $(\Proc,\Acttau,\trans[])$ be a LTS.
\emph{Branching bisimilarity}, denoted by $\sim_\bb$, is the largest symmetric relation over $\Proc$ such that, whenever $p \sim_\bb q$, if $p \trans[\mu] p'$, then either:
\begin{itemize}
\item $\mu = \tau$ and $p' \sim_\bb q$, or
\item there are processes $q',q''$ such that $q \trans[\varepsilon] q'' \trans[\mu] q'$, $p \sim_\bb q''$, and $p' \sim_\bb q'$.
\end{itemize}
\end{definition}

Branching bisimilarity satisfies the \emph{stuttering property} \cite[Lemma 2.5]{vGW96}: 
\emph{Assume that $p \sim_\bb q$.
Whenever $p \trans[\tau] p_1 \trans[\tau] \dots \trans[\tau] p_n$ and $p_n \sim_\bb q$, for some $n \ge 1$, then $p_i \sim_\bb q$ for all $i=1,\dots,n-1$.}

To guarantee compositional reasoning over a process language, we require a behavioural equivalence $\sim$ to be a \emph{congruence} with respect to all language operators.
This consists in verifying whether, for all $n$-ary operators $f$
\[
\text{if } t_i \sim t_i' \text{ for all } i = 1,\dots,n, \text{ then } f(t_1,\dots,t_n) \sim f(t_1',\dots,t_n').
\]
It is well known that branching bisimilarity is an equivalence relation \cite{vGW96,Ba96}.
Moreover, action prefixing and parallel composition satisfy the \emph{simple BB cool rule format} \cite{vG11} and hence $\sim_\bb$ is compositional with respect to those operators. 
However, $\sim_\bb$ is not a congruence with respect to nondeterministic choice.
To remedy this inconvenience, the \emph{root condition} is introduced:
\emph{rooted branching bisimilarity} behaves like strong bisimilarity on the initial transitions, and like branching bisimilarity on subsequent transitions.

\begin{definition}
[Rooted branching bisimilarity]
\label{def:rbb}
\emph{Rooted branching bisimilarity}, denoted by $\sim_\rbb$, is the symmetric relation over $\Proc$ such that, whenever $p \sim_\rbb q$, if $p \trans[\mu] p'$, then there is a process $q'$ such that $q \trans[\mu] q'$ and $p' \sim_\bb q'$.
\end{definition}

It is well known that rooted branching bisimilarity is an equivalence relation \cite{vGW96,Ba96}, and that $\sim_\rbb$ is a congruence over CCS (see, e.g., \cite{vG11}).

%======================================================

\subparagraph{Equational Logic}

An \emph{axiom system} $\E$ is a collection of (\emph{process}) \emph{equations} $t \approx u$ over the considered language, thus CCS in this paper.
An equation $t \approx u$ is \emph{derivable} from an axiom system $\E$, notation $\E \vdash t \approx u$, if there is an \emph{equational proof} for it from $\E$, namely if $t \approx u$ can be inferred from the axioms in $\E$ using the \emph{rules} of \emph{equational logic}.
The rules over CCS are reported in Table~\ref{tab:equational_logic}. 

\begin{table}[t]
\begin{gather*}
\scalebox{0.9}{($e_1$)}\; t \approx t 
\qquad
\scalebox{0.9}{($e_2$)}\; \frac{t \approx u}{u \approx t} 
\qquad
\scalebox{0.9}{($e_3$)}\; \frac{{t \approx u ~~ u \approx v}}{{t \approx v}} 
\qquad
\scalebox{0.9}{($e_4$)}\; \frac{{t \approx u}}{{\sigma(t) \approx \sigma(u)}} \\[.2cm]
\scalebox{0.9}{($e_4$)}\; \frac{t \approx u}{\mu. t \approx \mu. u}
\qquad
\scalebox{0.9}{($e_5$)}\; \frac{t \approx  u~~ t' \approx u'}{t+t' \approx u+u'}
\qquad
\scalebox{0.9}{($e_6$)}\; \frac{t \approx  u~~ t' \approx u'}{t\mathbin{\|} t' \approx u\mathbin{\|} u'}
\enspace .
\end{gather*}
\caption{\label{tab:equational_logic} The rules of equational logic} 
\end{table}

We assume, without loss of generality, that the substitution rule is only applied on equations $(t \approx u) \in \E$.  
In this case, $\sigma(t) \approx \sigma(u)$ is called a {\em substitution instance} of an axiom in $\E$.
Moreover, by postulating that for each axiom in $\E$ also its symmetric counterpart is present in $\E$, one may assume that the symmetry rule is never used in equational proofs.  

\begin{table*}[t]
\setlength{\tabcolsep}{15pt}
\centering
\begin{tabular}{llll}
\multicolumn{3}{l}{Some axioms for bisimilarity over CCS:} \\[.2cm]
\multicolumn{2}{l}{$\scalebox{0.85}{A0}\quad x + \nil \approx x$}
&
\multicolumn{2}{l}{$\scalebox{0.85}{P0}\quad x \mathbin{\|} \nil \approx x$}
\\
\multicolumn{2}{l}{$\scalebox{0.85}{A1}\quad x+y \approx y+x$}
&
\multicolumn{2}{l}{$\scalebox{0.85}{P1}\quad x \mathbin{\|} y \approx y \mathbin{\|} x$}\\
\multicolumn{2}{l}{$\scalebox{0.85}{A2}\quad (x+y)+z \approx x+(y+z)$} 
&
\multicolumn{2}{l}{$\scalebox{0.85}{P2}\quad (x \mathbin{\|} y) \mathbin{\|} z \approx x \mathbin{\|} (y \mathbin{\|} z)$}\\
\multicolumn{2}{l}{$\scalebox{0.85}{A3}\quad x + x \approx x$}
\\
\hline
\hline\\
\multicolumn{3}{l}{Additional axioms for rooted branching bisimilarity over CCS:} \\[.2cm]
\multicolumn{2}{l}{$\scalebox{0.85}{TB}\quad \mu (\tau(x+y) + y) \approx \mu (x +y)$}
&
\multicolumn{2}{l}{$\scalebox{0.85}{T1}\quad \mu \tau x \approx \mu x$} 
\end{tabular}
\caption{\label{tab:axioms_b} Some axioms for rooted branching bisimilarity.}
\end{table*}

We are interested in equations that are valid modulo some congruence relation $\sim$ over terms.
The equation $t \approx u$ is said to be \emph{sound} modulo $\sim$ if $\sigma(t) \sim \sigma(u)$ for all closed substitutions $\sigma$.
For simplicity, if $t \approx u$ is sound, then we write $t \sim u$.
An axiom system is \emph{sound} modulo $\sim$ if, and only if, all of its equations are sound modulo $\sim$. 
Conversely, we say that $\E$ is \emph{complete} modulo $\sim$ if $t \sim u$ implies $\E \vdash t \approx u$ for all terms $t,u$.
If we restrict ourselves to consider only equations over closed terms then $\E$ is said to be \emph{ground-complete} modulo $\sim$.
We say that $\sim$ has a finite, (ground) complete axiomatisation, if there is a finite axiom system $\E$ that is sound and (ground) complete for $\sim$.

Henceforth, we exploit the associativity and commutativity of $+$ and $\mathbin{\|}$ modulo the relevant behavioural equivalences.
The symbol $=$ will then denote equality modulo A1-A2 and P1-P2 in Table~\ref{tab:axioms_b}.

%=============================================
% sec - roadmap
%==============================================

\section{The main results}
\label{sec:main_results}

Our aim is to study the axiomatisability of rooted branching bisimilarity over CCS.
Our investigations produced, as main outcomes, a negative result (Theorem~\ref{thm:rbb_negative}) and a positive one (Theorem~\ref{thm:rbb_complete_ccslc}).
In detail, in the first part of the paper we prove the following theorem:

\begin{restatable}{theorem}{thmrbbnegative}
\label{thm:rbb_negative}
Rooted branching bisimilarity has no finite equational ground-complete axiomatisation over CCS.
\end{restatable}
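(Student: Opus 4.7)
The plan is to adapt Moller's proof-theoretic technique for strong bisimilarity to the rooted branching setting. The broad strategy is: (i) exhibit an infinite family of closed equations $e_n : t_n \approx u_n$ that are all sound modulo $\sim_\rbb$ and whose two sides have a structurally different parallel decomposition; (ii) define an invariant on terms, grounded in the unique parallel decomposition theorem, such that any equational proof from a given finite sound axiom system $\E$ can alter this invariant by at most an amount depending on the maximum size $N$ of the axioms of $\E$; (iii) choose $n$ large enough that $e_n$ violates this bounded change, so no single $\E$ can prove all of the $e_n$. A natural candidate for $t_n$ is a process of the form $a^n \mathbin{\|} a$ (with $a \in \Act$ and $a^n$ a sequential stack of $a$-prefixes), equated to its fully interleaved summation expansion $u_n$, which has only trivial parallel decomposition. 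Both sides are $\tau$-free, sound equations modulo $\sim_\rbb$, and their sizes grow without bound as $n$ grows.

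The backbone of the invariant is the unique parallel decomposition result announced in the paper: every process $p$ has, modulo $\sim_\bb$, a unique decomposition $p \sim_\bb p_1 \mathbin{\|} \cdots \mathbin{\|} p_k$ into indecomposable factors. This gives a well-defined notion of ``parallel complexity'' (for instance, the maximum norm of an indecomposable factor, or the number of nontrivial factors) that is invariant under $\sim_\bb$, hence under $\sim_\rbb$. The processes $t_n = a^n \mathbin{\|} a$ have an indecomposable factor of norm $n$, while their fully expanded counterparts $u_n$ have only one (itself). Following the Moller-style analysis, one inspects, for each closed substitution instance $\sigma(t) \approx \sigma(u)$ of an axiom of $\E$, how the parallel complexity of $\sigma(t)$ may differ from that of $\sigma(u)$: since any ``new'' factor must originate from the axiom $t \approx u$ itself, the discrepancy is bounded by a function of $N$. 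An inductive argument over the rules of equational logic then shows that $\E \vdash p \approx q$ entails a bounded difference between the parallel complexities of $p$ and $q$, which contradicts the unbounded growth induced by the $e_n$.

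The main obstacle, which is special to the weak setting, is that an equational proof of a $\tau$-free equation may pass through intermediate terms containing occurrences of $\tau$, introduced, for example, by the $\tau$-law T1 or the branching $\tau$-law TB, as already noted on page 175 of Moller's thesis cited in the introduction. Such intermediate $\tau$-prefixes can temporarily reshape the branching structure in ways that are invisible after the $\tau$'s are removed, and the stuttering property of $\sim_\bb$ means that parallel indecomposable factors can be ``shifted'' across $\tau$-transitions. To handle this, I would extend the parallel-complexity invariant (and the unique decomposition theorem itself) to the class of possibly $\tau$-containing intermediate terms encountered in derivations, and verify that the $\tau$-laws and the equational-logic rules preserve the invariant up to the $N$-dependent bound. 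Once the extended invariant is in place, the final counting argument is a direct adaptation of Moller's: for $n$ larger than the constant arising from $N$, no derivation of $e_n$ from $\E$ can exist, contradicting ground-completeness and yielding the theorem.
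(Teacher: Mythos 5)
Your high-level plan follows Moller's proof-theoretic template, which is indeed the route the paper takes, but two of your concrete choices break the argument. First, the family $a^n \mathbin{\|} a \approx u_n$ is the wrong witness: since $a^n$ is a deterministic stack of prefixes with no occurrence of $+$, every step of the interleaving expansion of $a^n \mathbin{\|} a$ only ever needs the sound axiom schema $\mu x \mathbin{\|} \nu y \approx \mu(x \mathbin{\|} \nu y) + \nu(\mu x \mathbin{\|} y)$, plus a $\tau(x \mathbin{\|} y)$ summand when $\mu = \overline{\nu}$, of which there are finitely many instances when $\Act$ is finite; together with A0--A3 and P0--P2 this finite sound system derives all of your equations. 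The obstruction Moller exploits, and that the paper exploits, is that $\mathbin{\|}$ does not distribute over $+$, so the hard instances must have a parallel component that is a sum with unboundedly many summands; this is why the paper uses $a \mathbin{\|} p_n$ with $p_n = \sum_{i=2}^n a a^{\le i}$.

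Second, your invariant cannot work as stated. Any quantity computed from the unique parallel decomposition modulo $\sim_\bb$ (maximum norm of an indecomposable factor, number of nontrivial factors, and so on) is by construction an invariant of the $\sim_\bb$-equivalence class; since the two sides of every sound closed equation are branching bisimilar, such a quantity is exactly equal on both sides of every sound equation, including yours, so it can never witness underivability. Your claim that $u_n$ ``has only trivial parallel decomposition'' confuses the syntactic shape of $u_n$ with its semantic decomposition, which coincides with that of $t_n$. The property has to be syntactic: the paper's $\mathbb{P}_n$ is ``having a $+$-summand rooted branching bisimilar to $a \mathbin{\|} p_n$'', which holds of the left-hand side of $\e_n$ but not of the right-hand side even though the two are $\sim_\rbb$-equivalent; one then proves it is exactly preserved (not merely perturbed by a bounded amount) under provability from $\E$ once $n$ exceeds the size of every term in $\E$. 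Unique parallel decomposition enters as a tool for that preservation proof --- to show that $p_n$ and $a^{\le i}$ are indecomposable and to pin down the parallel components of anything bisimilar to $a \mathbin{\|} p_n$ --- not as the invariant itself. Your concern about intermediate $\tau$'s is well placed and the paper does address it, but it is moot until the family and the invariant are fixed.
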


Given the negative result, it is natural to wonder whether an equational basis for rooted branching bisimilarity can be obtained if we enrich CCS with some auxiliary operators.
Considering the similarities between $\sim_\rbb$ and strong bisimilarity, the principal candidates for this role are the left merge $\textl$ and the communication merge $\textc$ from \cite{BK84b}.
Indeed, we show that if we add those two operators to the syntax of CCS, then we can obtain a complete axiomatisation of rooted branching bisimilarity over the new language, denoted by $\ccslc$.
The desired equational basis is given by the axiom system $\E_\rbb$, which is presented fully in Table~\ref{tab:axioms_rbb} in Section~\ref{sec:completeness}.
$\E_\rbb$ is an extension of the complete axiom system for strong bisimilarity over $\ccslc$ from \cite{AFIL09} with axioms expressing the behaviour of left merge and communication merge in the presence of $\tau$-actions (taken from \cite{BK85}), and with the suitable $\tau$-laws necessary to deal with rooted branching bisimilarity (taken from \cite{HM85,vGW96}).

Formally, our second main contribution consists in a proof of the following theorem:

\begin{restatable}[Completeness]{theorem}{thmrbbcompleteccslc}
\label{thm:rbb_complete_ccslc}
Let $t,u$ be $\ccslc$ terms.
If $t \sim_\rbb u$, then $\E_\rbb \vdash t \approx u$.
\end{restatable}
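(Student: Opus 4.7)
The plan is to adapt the classical head-normal-form strategy for process-algebraic completeness proofs to the setting of rooted branching bisimilarity over $\ccslc$, relying essentially on the operational semantics of open $\ccslc$ terms introduced earlier in the paper.

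First, I would verify soundness of every axiom of $\E_\rbb$ modulo $\sim_\rbb$. The axioms inherited from the strong-bisimilarity basis of \cite{AFIL09}, together with those governing $\textl$ and $\textc$ in the presence of $\tau$ (from \cite{BK85}), are checked by direct bisimulation arguments; for the $\tau$-laws T1 and TB (from \cite{HM85,vGW96}), I would invoke the stuttering property and the root condition of $\sim_\rbb$.

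Second, I would establish an elimination lemma: every open $\ccslc$ term is provably equal to a \emph{head normal form}
\[
\sum_{i \in I}\mu_i.t_i \;+\; \sum_{j \in J} x_j,
\]
with each $t_i$ itself in head normal form. The ingredients are distributivity of $\textl$ and $\textc$ over $+$, the expansion law $x \mathbin{\|} y \approx x\textl y + y\textl x + x\textc y$, and step axioms of the shape $\mu.x \textl y \approx \mu.(x \mathbin{\|} y)$ and $\alpha.x \textc \overline{\alpha}.y \approx \tau.(x \mathbin{\|} y)$, together with the absorption of $\nil$ by P0 and A0. Induction on term size pushes the parallel operators inside prefixes until they disappear from the top level.

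Third, I would prove completeness by induction on the combined size of head-normal-form representatives of $t$ and $u$. For $t = \sum_i \mu_i.t_i + \sum_j x_j$ and $u = \sum_k \nu_k.u_k + \sum_\ell y_\ell$ in head normal form with $t \sim_\rbb u$, the root condition yields a matching between free-variable summands and, for each $\mu_i.t_i$, a summand $\nu_k.u_k$ of $u$ with $\mu_i = \nu_k$ and $t_i \sim_\bb u_k$. The core technical step is to show that $\mu.t' \approx \mu.u'$ is derivable from $\E_\rbb$ whenever $t' \sim_\bb u'$: I would expand both $t'$ and $u'$ as head normal forms, apply the induction hypothesis to subterms reached by matching transitions, and absorb the intervening $\tau$-steps required by branching bisimulation via repeated applications of TB, after saturating the terms with appropriate $\tau$-summands using A3. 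Collating all matches and applying A3 and A0 to eliminate duplicates and $\nil$-summands yields $\E_\rbb \vdash t \approx u$.

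The main obstacle is precisely this last step: turning a branching-bisimulation trajectory $u' \trans[\varepsilon] v \trans[\mu] v'$ matching a transition $t' \trans[\mu] v''$ (with $t' \sim_\bb v$ and $v'' \sim_\bb v'$) into an equational derivation under the prefix. Each intermediate $\tau$-step must be justified equationally by a controlled application of $\mu(\tau(x+y)+y) \approx \mu(x+y)$, which in turn requires first massaging the head normal form so that the relevant summand $y$ is syntactically present. Orchestrating this manipulation uniformly, ensuring the induction parameter strictly decreases at each invocation, respecting the root condition on outermost transitions, and properly handling free variables (which may themselves appear as matched summands and as arguments of $\textl$ and $\textc$ in subterms) is what makes the argument substantially more delicate than in the strong-bisimilarity case of \cite{AFIL09}, and what justifies the preliminary effort of defining the semantics of open $\ccslc$ terms.
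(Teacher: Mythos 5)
Your overall three-step architecture (normal forms, then a lemma deriving $\mu.t' \approx \mu.u'$ from $t' \sim_\bb u'$, then completeness by matching summands under the root condition) is exactly the strategy the paper follows, and your treatment of TB for absorbing intermediate $\tau$-steps is on the right track. However, there is a genuine gap in your second step: the claimed head normal form $\sum_{i\in I}\mu_i.t_i + \sum_{j\in J} x_j$ is not attainable for \emph{open} $\ccslc$ terms, and the theorem is stated for open terms. A term such as $x \lmerge a.\nil$ or $x \cmerge y$ cannot be provably rewritten into that shape --- indeed it is not even rooted branching bisimilar to any term of that shape, since its initial behaviour is parametric in the first action of whatever closed term is substituted for $x$. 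Pushing parallel operators ``inside prefixes until they disappear from the top level'' only works when every top-level component starts with a prefix; the expansion law P and distributivity leave behind irreducible residues of the forms $x \lmerge N$, $(x \cmerge \alpha) \lmerge N$ and $(x \cmerge y) \lmerge N$, which is precisely why the paper's normal forms (Definition~\ref{def:rbb_nf_ccslc}) include these three summand shapes alongside $\mu.N$, rather than plain variable summands.

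This is not a cosmetic difference: retaining those summand shapes forces the rest of the proof through machinery your proposal does not supply. To match a summand $x_j \lmerge N_j$ of $t$ against a summand $y_{\bar\jmath} \lmerge M_{\bar\jmath}$ of $u$ one needs the auxiliary transitions $\trans[\ell]_\rho$ over configurations to conclude $x_j = y_{\bar\jmath}$ and $x_{j,\mu} \mathbin{\|} N_j \sim_\bb y_{\bar\jmath,\mu} \mathbin{\|} M_{\bar\jmath}$, and then a cancellation lemma for the distinguished variables $x_\mu$ (Lemma~\ref{lem:cancellation_xmu} in the paper) to extract $N_j \sim_\bb M_{\bar\jmath}$. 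Moreover, turning $N_j \sim_\bb M_{\bar\jmath}$ into a derivation of $x_j \lmerge N_j \approx y_{\bar\jmath} \lmerge M_{\bar\jmath}$ requires the axiom $\text{TL}: x \lmerge \tau y \approx x \lmerge y$ to strip the $\tau$-prefix produced by the branching-bisimilarity lemma; TL is one of only two axioms added to the strong-bisimilarity basis and your proposal never invokes it (you cite only T1 and TB). Your argument as written would go through for ground-completeness, where the problematic summands do not arise, but not for completeness over open terms, which is the actual content of the theorem and the reason the paper develops the semantics of open $\ccslc$ configurations in the first place.
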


We will also argue that this axiomatisation is finite when so is the set of actions.
Hence, when $\Act$ is finite, $\ccslc$ modulo $\sim_\rbb$ is finitely based, unlike CCS.

Considering the amount of technical results that we will need to fulfil our objectives, we devote Section~\ref{sec:roadmap} to a presentation of the proof strategy that we will apply to obtain Theorem~\ref{thm:rbb_negative}.
Sections~\ref{sec:decomposition}--\ref{sec:negative_result} then present the formalisation of the ideas discussed in that section.
Similarly, in Section~\ref{sec:completeness_roadmap} we give a high-level description of the approach that we will follow to prove Theorem~\ref{thm:rbb_complete_ccslc}.
The technical development of the proof is then reported in Sections~\ref{sec:configurations_bis}--\ref{sec:completeness}.

All the complementary results needed to prove the two theorems, are reported in the Appendix.

%===============================================
% sec - negative strategy
%==============================================

\section{Proof strategy for Theorem~\ref{thm:rbb_negative}}
\label{sec:roadmap}

In this section we present the proof strategy we will apply to obtain Theorem~\ref{thm:rbb_negative}.

Our proof follows the so-called \emph{proof-theoretic approach} to non-finite-axiomatisability results, whose use in the field of process algebra stems from \cite{Mo89,Mo90,Mo90a}, where Moller proved that CCS modulo strong bisimilarity is not finitely based.
In the proof-theoretic approach, the idea is to identify a specific property of terms parametric in $n \ge 0$, say $\mathbb{P}_n$, and show that if $\E$ is an arbitrary finite axiom system that is sound with respect to $\sim_\rbb$, then $\mathbb{P}_n$ is preserved by provability from $\mathcal{E}$ when $n$ is ``\emph{large enough}''.
Next, we exhibit an infinite family of equations $\{\e_n \mid n \ge 0\}$ over closed terms that are all sound modulo $\sim_\rbb$, but are such that only one side of $\e_n$ satisfies $\mathbb{P}_n$, for each $n \ge 0$. 
In particular, this implies that whenever $n$ is ``large enough'' then the sound equation $\e_n$ cannot be proved from $\E$. 
Since $\E$ is an arbitrary finite sound axiom system, it follows that no finite sound axiomatisation can prove all the equations in the family $\{\e_n \mid n \ge 0\}$ and therefore no finite sound axiomatisation is ground complete for CCS modulo modulo $\sim_\rbb$. 

%=======================================================

\subparagraph{The choice of $\mathbb{P}_n$ and the family of equations}

In \cite{Mo89,Mo90,Mo90a} Moller applied the proof method sketched above to prove that strong bisimilarity has no finite, complete axiomatisation over CCS.
The key idea underlying this result is that, since $\mathbin{\|}$ does not distribute over $+$ in either of its arguments modulo strong bisimilarity, then no finite, sound axiom system can ``\emph{expand}'' the initial behaviour of process $a \mathbin{\|} \sum_{i = 1}^n a^i$ (where $a^i = aa^{i-1}$ for each $i = 1,\dots,n$, with $a^0 = \nil$) when $n$ is large. 

Since, by definition, rooted branching bisimilarity behaves exactly like strong bisimilarity on the first step, and parallel composition does not distribute over choice in either of its arguments, modulo $\sim_\rbb$, it is natural to exploit a similar strategy to prove Theorem~\ref{thm:rbb_negative}.
In detail, we will consider, for each $n \ge 2$, the process $p_n = \sum_{i = 2}^n aa^{\le i}$, where $a^{\le i} = \sum_{j = 1}^i a^j$ for each $i = 2,\dots,n$.
Then, for each $n \ge 2$, the property $\mathbb{P}_n$ will consist in having a summand rooted branching bisimilar to the process $a \mathbin{\|} p_n$, and we will show that, when $n$ is large enough, $\mathbb{P}_n$ is an invariant under provability from an arbitrary finite, sound axiom system (Theorem~\ref{thm:rbb_preserves_property}).
Hence, the sound equation
$
\e_n \;\colon\; a \mathbin{\|} p_n \,\approx\, ap_n + \sum_{i = 2}^n a(a \mathbin{\|} a^{\le i})
$
cannot be derived from $\E$ because its right-hand side has no summand that is rooted branching bisimilar to $a \mathbin{\|} p_n$, unlike its left-hand side.
Therefore no finite sound axiom system can prove the infinite family of equations $\{\e_n \mid n \ge 2\}$, yielding the desired negative result.

In proving that $\mathbb{P}_n$ is invariant under provability, one pivotal ingredient will be the fact that processes $p_n$ and $a^{\le i}$, for $n \ge 2$ and $i \in \{2,\dots,n\}$, are \emph{indecomposable}.
The existence of a unique parallel decomposition into indecomposable processes modulo \emph{branching bisimilarity} over CCS with \emph{interleaving parallel composition} was studied in \cite{Bas16}.
In Section~\ref{sec:indecomposables}, we extend the result from \cite{Bas16} to the full merge operator, thus including communication (Proposition~\ref{prop:unique_par}). 

%=================================================================

\subparagraph{The choice of $n$}

The choice of a sufficiently large $n$ plays a crucial
role in proving that $\mathbb{P}_n$ is an invariant under provability from a finite, sound axiom system $\E$ (Theorem~\ref{thm:rbb_preserves_property}).
The key step in that proof deals with the case in which $p \approx q$ is a substitution instance of an equation in $\E$ (Proposition~\ref{prop:rbb_substitution_case}), i.e., $p = \sigma(t)$, $q = \sigma(u)$, and $t \approx u \in \E$ for some terms $t, u$ and closed substitution $\sigma$. 
In this case, assuming that $n > \size(t)$, we can prove that if $p = \sigma(t)$ satisfies $\mathbb{P}_n$ then this is due to the behaviour of $\sigma(x)$ for some variable $x$.
In order to reach this conclusion, in Section~\ref{sec:decomposition}, we study how the behaviour of closed instances of terms may depend on the behaviour of the closed instances of variables occurring in them. 
Moreover, we show that if $t \approx u$ is sound modulo rooted branching bisimilarity and $x$ occurs in $t$, then it occurs also in $u$ (Proposition~\ref{prop:same_var}).
Hence, we can infer that $\sigma(x)$ triggers in $\sigma(u)$ the same behaviour that it induced in $\sigma(t)$, and thus that $q = \sigma(u)$ satisfies $\mathbb{P}_n$. 
All the additional properties of process $a \mathbin{\|} p_n$ used to achieve this conclusion are presented in Appendix~\ref{sec:preliminaries}.

%=========================================
% sec - decomposition
%=========================================

\section{Decomposing the semantics of terms}
\label{sec:decomposition}

In the proofs to follow, we shall sometimes need to establish a correspondence between the behaviour of open terms and that of their closed instances.
In detail, we are interested in the correspondence between a transition $\sigma(t) \trans[\mu] p$, for some term $t$, closed substitution $\sigma$, action $\mu$, and process $p$, and the behaviour of $t$ and that of $\sigma(x)$, for each variable $x$ occurring in $t$.
The simplest case is a direct application of the operational semantics in Table~\ref{tab:sos_rules}.

\begin{lemma}
\label{lem:substitution}
For all terms $t,t'$, substitution $\sigma$, and $\mu \in \Acttau$, if $t \trans[\mu] t'$ then $\sigma(t) \trans[\mu] \sigma(t')$.
\end{lemma}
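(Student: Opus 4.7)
The plan is to proceed by a routine structural induction on the derivation of $t \trans[\mu] t'$ using the SOS rules in Table~\ref{tab:sos_rules}. The crucial observation that makes the statement work is that no SOS rule has a variable as the source of its conclusion: a transition from an open term is always derived by applying operator rules to non-variable syntactic structure, and substitution commutes with that structure in the obvious way. Thus every derivation of $t \trans[\mu] t'$ can be mimicked on $\sigma(t)$ by applying the same rules, one by one, to the substituted subterms.

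Concretely, I would distinguish four cases according to the last SOS rule applied. In the base case, $t = \mu.t_0$ and $t' = t_0$, so $\sigma(t) = \mu.\sigma(t_0)$ and the prefix rule immediately yields $\sigma(t) \trans[\mu] \sigma(t_0) = \sigma(t')$. For the choice rule, $t = t_1 + t_2$ and, up to symmetry, $t_1 \trans[\mu] t'$; the induction hypothesis gives $\sigma(t_1) \trans[\mu] \sigma(t')$, and a single application of the choice rule to $\sigma(t_1) + \sigma(t_2) = \sigma(t)$ concludes. For the interleaving parallel rule, $t = t_1 \mathbin{\|} t_2$ with $t_1 \trans[\mu] t_1'$ and $t' = t_1' \mathbin{\|} t_2$; by the induction hypothesis $\sigma(t_1) \trans[\mu] \sigma(t_1')$, and the parallel rule gives $\sigma(t) = \sigma(t_1) \mathbin{\|} \sigma(t_2) \trans[\mu] \sigma(t_1') \mathbin{\|} \sigma(t_2) = \sigma(t')$. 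Finally, for the communication rule, $t = t_1 \mathbin{\|} t_2$, $\mu = \tau$, $t' = t_1' \mathbin{\|} t_2'$, and there is some $\alpha$ with $t_1 \trans[\alpha] t_1'$ and $t_2 \trans[\overline{\alpha}] t_2'$; two applications of the induction hypothesis followed by an application of the communication rule give $\sigma(t) \trans[\tau] \sigma(t')$.

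There is no real obstacle here; the lemma is a standard substitution lemma for SOS. The only point that deserves a brief remark is why the induction does not need a base case for variables: a variable $x$ admits no outgoing transition, so the hypothesis $t \trans[\mu] t'$ is never satisfied when $t \in \Var$, and no case analysis for that shape is required. This makes the proof go through cleanly regardless of how $\sigma$ acts on variables.
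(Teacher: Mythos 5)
Your proof is correct and matches the paper's intent: the paper states this lemma without proof, treating it as a direct consequence of the SOS rules in Table~\ref{tab:sos_rules}, and your routine induction on the derivation (with the observation that variables afford no transitions) is exactly the standard argument being elided.
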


Let us focus now on the role of variables.
A transition $\sigma(t) \trans[\mu] p$ may also derive from the initial behaviour of some closed term $\sigma(x)$, provided that the collection of initial moves of $\sigma(t)$ depends, in some formal sense, on that of the closed term substituted for the variable $x$. 
In this case, we say that $x$ \emph{triggers the behaviour} of $t$.
To fully describe this situation, we introduce an auxiliary transition relation over open terms.
The notion of {\sl configuration} over terms, which stems from \cite{AFIN06}, will play an important role in their definition.

The presence of communication in CCS entails a complex definition of the semantics of configurations.
In particular, it is necessary to introduce a fresh set of variables $\Var_{\Acttau} = \{x_\mu \mid x \in \Var, \mu \in \Acttau\}$, disjoint from $\Var$, and terms.
Intuitively, the symbol $x_\mu$ denotes that the closed term substituted for an occurrence of variable $x$ has begun its execution (expressed in terms of a $\mu$-action), and it contributes thus to triggering the behaviour of the term in which $x$ occurs (see Example~\ref{ex:variabili_vi_odio} below).
Moreover, we also need to introduce special labels and subscripts for the auxiliary transitions over configurations, which will be of the form $c \trans[\ell]_\rho c'$.
Briefly, the label $\ell$ is used to keep track of the variables that trigger the transition $c \trans[\ell]_\rho c'$.
The subscript $\rho$, instead, will allow us to correctly define the semantics of communication: it will allow us to distinguish a $\tau$-action directly performed by (the term substituted for) a variable $x$ (transition $c \trans[(x)]_\tau c'$, with $\rho = \tau$), from a $\tau$-action resulting from the communication of $x$ with a subterm of the configuration (transition $c \trans[(x)]_{\alpha,\tau} c'$, with $\rho = \alpha,\tau$, where $\alpha$ is the action performed by the term substituted for $x$).

\emph{CCS configurations} are defined over the set of variables $\Var_{\Acttau}$ and CCS terms.

\begin{definition}
The collection of {\sl CCS configurations}, denoted by $\C$, is given by: 
\[
c :: = \; x_\mu \quad | \quad 
t \quad | \quad
c \mathbin{\|} c 
\enspace ,
\quad
\text{ where } t \text{ is a term, and } x_\mu \in \Var_{\Acttau}. 
\]
\end{definition}

\begin{table}[t]
\begin{gather*}
\scalebox{0.9}{$(a_1)$}\; \SOSrule{}{x \trans[(x)]_{\mu} x_{\mu}}
\qquad
\scalebox{0.9}{$(a_2)$}\; \SOSrule{t \trans[\ell]_\rho c}
{t+u \trans[\ell]_\rho c}
\qquad
\scalebox{0.9}{$(a_3)$}\; \SOSrule{t \trans[\ell]_\rho c}
{t \mathbin{\|} u \trans[\ell]_{\rho} c \mathbin{\|} u}
\\[.2cm]
\scalebox{0.9}{$(a_4)$}\; \SOSrule{t \trans[(x)]_\alpha c \quad u \trans[(y)]_{\overline{\alpha}} c'}
{t \mathbin{\|} u \trans[(x,y)]_{\tau} c \mathbin{\|} c'}
\qquad
\scalebox{0.9}{$(a_5)$}\; \SOSrule{t \trans[(x)]_\alpha c \quad u \trans[\overline{\alpha}] u'}
{t \mathbin{\|} u \trans[(x)]_{\,\alpha,\tau} c \mathbin{\|} u'}
\qquad
\scalebox{0.9}{$(a_6)$}\; \SOSrule{t \trans[\alpha] t' \quad u \trans[(x)]_{\overline{\alpha}} c}
{t \mathbin{\|} u \trans[(x)]_{\,\overline{\alpha},\tau} t' \mathbin{\|} c}
\end{gather*}
\caption{\label{tab:ell_rules} Inference rules for the transition relation $\trans[\ell]_\rho$ ($\mu \in \Acttau$, $\alpha \in \Act\cup\overline{\Act}$).}
\end{table}

The auxiliary transitions of the form $\trans[\ell]_{\rho}$ are then formally defined via the inference rules in Table~\ref{tab:ell_rules}, where we omitted the symmetric rules to ($a_2$), ($a_4$), ($a_5$) and ($a_6$).
We have that $\rho \in \Acttau \cup ((\Act \cup \overline{\Act}) \times \{\tau\})$, whereas the label $\ell$ can be either of the form $(x)$ or $(x,y)$, for some variables $x,y \in \Var$.
Given a variable $x$ and a label $\ell$, we write $x \in \ell$ if $x$ occurs in $\ell$.

The distinguished variables $x_\mu$ allow us to keep track of which variable and action trigger the behaviour of the term, and they also allow us to present substitutions in an intuitive fashion.
As explained in the following example, it is precisely because of substitutions (and communication) that we need to make the action $\mu$ explicit in $x_\mu$.

\begin{example}
\label{ex:variabili_vi_odio}
Let $x \in \Var$ and consider the term $x \mathbin{\|} x$.
By rules ($a_1$) and ($a_4$) in Table~\ref{tab:ell_rules}, we have that $x \mathbin{\|} x \trans[(x,x)]_\tau x_{\alpha} \mathbin{\|} x_{\overline{\alpha}}$ because $x \trans[(x)]_\alpha x_{\alpha}$ and $x \trans[(x)]_{\overline{\alpha}} x_{\overline{\alpha}}$.
Hence, given any substitution $\sigma$ such that $\sigma(x) \trans[\alpha] p_1$ and $\sigma(x) \trans[\overline{\alpha}] p_2$, for some terms $p_1,p_2$, we want to be able to correctly infer that $\sigma(x) \mathbin{\|} \sigma(x) \trans[\tau] p_1 \mathbin{\|} p_2$.
Since the two occurrences of $x$, $x_{\alpha}$ and $x_{\overline{\alpha}}$, can be distinguished by the subscripts, the substitution $\sigma[x_\alpha \mapsto p_1, x_{\overline{\alpha}} \mapsto p_2](x_\alpha \mathbin{\|} x_{\overline{\alpha}}) = p_1 \mathbin{\|} p_2$ is well-defined.
Without the subscripts, it would not have been possible to correctly define the substitution $\sigma$ on the configuration $c$ that is the target of $x \mathbin{\|} x \trans[(x,x)]_\tau c$.
\end{example}

\begin{lemma}
\label{lem:var_to_term}
Let $t$ be term and $\sigma$ be a closed substitution.
Let $x,y \in \Var$.
\begin{enumerate}
\item For any $\mu \in \Acttau$, if $\sigma(x) \trans[\mu] p$, for some process $p$, and $t \trans[(x)]_\mu c$, for some configuration $c \in \C$, then $\sigma(t) \trans[\mu] \sigma[x_\mu \mapsto p](c)$.
\item \label{item:x_u_to_t} For any $\alpha \in \Act\cup\overline{\Act}$, if $\sigma(x) \trans[\alpha] p$, for some process $p$, and $t \trans[(x)]_{\alpha,\tau} c$, for some configuration $c\in \C$, then $\sigma(t) \trans[\tau] \sigma[x_\alpha \mapsto p](c)$.
\item For any $\alpha \in \Act\cup\overline{\Act}$, if $\sigma(x) \trans[\alpha] p_x$, $\sigma(y) \trans[\overline{\alpha}] p_y$, for some processes $p_x,p_y$, and $t \trans[(x,y)]_{\tau} c\in \C$, for some configuration $c$, then $\sigma(t) \trans[\tau] \sigma[x_\alpha \mapsto p_x, y_{\overline{\alpha}} \mapsto p_y](c)$.
\end{enumerate}
\end{lemma}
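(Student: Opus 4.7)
The three items are proved by a simultaneous induction on the structure of the derivation of the auxiliary transition $t \trans[\ell]_\rho c$ from the inference rules in Table~\ref{tab:ell_rules}. The point of doing them together is that rule $(a_4)$ derives a $(x,y)_\tau$ transition from two premises with singleton labels, and rules $(a_5),(a_6)$ derive an $(x)_{\alpha,\tau}$ transition from an $(x)_\alpha$ premise together with an ordinary transition of a closed subterm (handled via Lemma~\ref{lem:substitution}); hence item 3 needs item 1 on its subderivations, and item 2 needs both item 1 and Lemma~\ref{lem:substitution}.

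The base case is the single axiom $(a_1)$, which produces $x \trans[(x)]_\mu x_\mu$ and concerns only item 1: the equality $\sigma[x_\mu\mapsto p](x_\mu) = p$ together with the hypothesis $\sigma(x)\trans[\mu]p$ closes this case. The inductive cases then proceed rule by rule. For $(a_2)$ and its symmetric counterpart, one applies the appropriate SOS rule for $+$ to the transition furnished by the induction hypothesis; this works uniformly for all three items, since $(a_2)$ preserves both $\ell$ and $\rho$. For $(a_3)$ and its symmetric counterpart, one applies the interleaving SOS rule for $\mathbin{\|}$; the only thing to check is that the newly introduced distinguished variables (e.g.\ $x_\mu$ in item 1) do not occur in the ``other side'' $u$ of the parallel composition, since $u$ is a plain term and such variables only arise by $(a_1)$ within the derivation on the active side. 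This justifies pushing the extended substitution through the $\mathbin{\|}$. The cases of $(a_5)$ and $(a_6)$ combine item 1 of the induction hypothesis on the configuration-producing premise with Lemma~\ref{lem:substitution} on the plain-term premise, and then apply the communication SOS rule for $\mathbin{\|}$ to obtain the required $\tau$-transition. Finally, for rule $(a_4)$, which is the only base for item 3, one applies item 1 of the induction hypothesis to each of the two premises $t\trans[(x)]_\alpha c$ and $u\trans[(y)]_{\overline{\alpha}} c'$, yielding $\sigma(t)\trans[\alpha]\sigma[x_\alpha\mapsto p_x](c)$ and $\sigma(u)\trans[\overline{\alpha}]\sigma[y_{\overline{\alpha}}\mapsto p_y](c')$, and then fires the communication rule.

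The step requiring the most care is the disjointness of the fresh subscripted variables across parallel components in cases $(a_3),(a_4),(a_5),(a_6)$: one must observe that the only distinguished variables generated in a derivation with label $(x)$ are of the form $x_\mu$, and in a derivation with label $(x,y)$ they are $x_\alpha$ and $y_{\overline{\alpha}}$, so that the extended substitutions on the two sides of $\mathbin{\|}$ do not interfere. Concretely, for the $(a_4)$ case one uses
\[
\sigma[x_\alpha\mapsto p_x,\,y_{\overline{\alpha}}\mapsto p_y](c \mathbin{\|} c')
= \sigma[x_\alpha\mapsto p_x](c) \mathbin{\|} \sigma[y_{\overline{\alpha}}\mapsto p_y](c')
\]
because $y_{\overline{\alpha}}$ does not occur in $c$ and $x_\alpha$ does not occur in $c'$. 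Beyond this bookkeeping, every case reduces to a one-line application of an SOS rule, so the main obstacle is simply setting up the notation carefully enough that the three items can be mutually invoked in the induction without confusion.
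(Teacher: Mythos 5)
Your proposal is correct and matches the paper's intended argument: the paper omits the proof of this lemma, remarking only that such results ``follow by simple inductions on the derivation of transitions,'' and separately notes that Lemma~\ref{lem:substitution} is what handles the plain-term premise in item~\ref{item:x_u_to_t} --- exactly the role you assign it in the $(a_5)$/$(a_6)$ cases. Your care about the disjointness of the subscripted variables $x_\alpha$ and $y_{\overline{\alpha}}$ when distributing the substitution over $\mathbin{\|}$ is also precisely the point the paper makes (for the converse lemma) via Example~\ref{ex:variabili_vi_odio}.
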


Lemma~\ref{lem:var_to_term} shows how the auxiliary transitions can be used to derive the behaviour of $\sigma(t)$ from those of the variables in $t$.
We are now interested in analysing the converse situation: we show how a transition $\sigma(t) \trans[\mu] p$ can stem from transitions of the term $t$ and of the process $\sigma(x)$, for $x \in \var(t)$.
We limit ourselves to present the case of silent actions $\sigma(t) \trans[\tau] p$ as it requires a detailed analysis.
The case of transitions labelled with observable actions is simpler and can be found as Lemma~\ref{lem:closed2open_alpha} in Appendix~\ref{app:decomposition}.

\begin{restatable}{lemma}{lemclosedopentau}
\label{lem:closed2open_tau}
Let $t$ be a term, $\sigma$ be a closed substitution, and $p$ be a process.
If $\sigma(t) \trans[\tau] p$, then one of the following holds:
\begin{enumerate}
\item \label{lem:c2o_prefix}
There is a term $t'$ s.t.\ $t \trans[\tau] t'$ and $\sigma(t') = p$.
\item \label{lem:c2o_x}
There are a variable $x$, a process $q$, and a configuration $c$ s.t.\ $\sigma(x) \trans[\tau] q$, $t \trans[(x)]_\tau c$, and $\sigma[x_\tau \mapsto q](c) = p$.
\item \label{lem:c2o_xu}
There are a variable $x$, a process $q$, and a configuration $c$ s.t., for some $\alpha \in \Act\cup\overline{\Act}$, $\sigma(x) \trans[\alpha] q$, $t \trans[(x)]_{\alpha,\tau} c$, and $\sigma[x_\alpha \mapsto q](c) = p$.
\item \label{lem:c2o_xy}
There are variables $x,y$, processes $q_x,q_y$ and a configuration $c$ s.t., for some $\alpha \in \Act\cup\overline{\Act}$, $\sigma(x) \trans[\alpha] q_x$, $\sigma(y) \trans[\overline{\alpha}] q_y$, $t \trans[(x,y)]_\tau c$, and $\sigma[x_\alpha \mapsto q_x, y_{\overline{\alpha}} \mapsto q_y](c) = p$.
\end{enumerate}
\end{restatable}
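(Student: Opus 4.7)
The plan is to proceed by structural induction on $t$, analysing for each syntactic form which SOS rules of Table~\ref{tab:sos_rules} could yield the transition $\sigma(t) \trans[\tau] p$. For each case we identify one of the four alternatives (\ref{lem:c2o_prefix})--(\ref{lem:c2o_xy}), either building the auxiliary transition $t \trans[\ell]_\rho c$ via one of the rules in Table~\ref{tab:ell_rules} or exhibiting a term-level transition $t \trans[\tau] t'$.

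The base cases are immediate. If $t = \nil$ the premise is false and the claim is vacuous. If $t = x$ for some variable $x$, then the premise reads $\sigma(x) \trans[\tau] p$ and rule $(a_1)$ gives $x \trans[(x)]_\tau x_\tau$; alternative~(\ref{lem:c2o_x}) holds with $q = p$ and $c = x_\tau$. If $t = \mu.t_1$ the unique applicable SOS rule forces $\mu = \tau$ and $p = \sigma(t_1)$, so $t \trans[\tau] t_1$ yields alternative~(\ref{lem:c2o_prefix}). For choice $t = t_1 + t_2$, the transition is inherited from one summand, say $\sigma(t_1) \trans[\tau] p$, to which we apply the induction hypothesis; each of the four witnesses for $t_1$ lifts to one for $t$ by prepending the SOS rule for $+$ (in case~(\ref{lem:c2o_prefix})) or rule $(a_2)$ (in cases~(\ref{lem:c2o_x})--(\ref{lem:c2o_xy})).

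The main case is parallel composition $t = t_1 \mathbin{\|} t_2$, where the transition $\sigma(t) \trans[\tau] p$ arises either by interleaving from one component, say $\sigma(t_1) \trans[\tau] p_1$ with $p = p_1 \mathbin{\|} \sigma(t_2)$, or by synchronisation with $\sigma(t_1) \trans[\alpha] p_1$ and $\sigma(t_2) \trans[\overline{\alpha}] p_2$ for some $\alpha \in \Act \cup \overline{\Act}$ and $p = p_1 \mathbin{\|} p_2$. The interleaving subcase is handled by applying the induction hypothesis to $t_1$ and lifting each of the four witnesses via rule $(a_3)$ or via the SOS rule for $\mathbin{\|}$. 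For the synchronisation subcase, we invoke Lemma~\ref{lem:closed2open_alpha} (the observable-action counterpart, in Appendix~\ref{app:decomposition}) on each component: each $\sigma(t_i) \trans[\alpha_i] p_i$ either originates from a term-level move of $t_i$ or is triggered by a variable occurring in $t_i$. Combining the four resulting subcases yields the desired alternative: two term-level moves combine by the SOS synchronisation rule to give~(\ref{lem:c2o_prefix}); one term-level and one variable-triggered move combine via rule $(a_5)$ or its symmetric counterpart $(a_6)$ to give~(\ref{lem:c2o_xu}); and two variable-triggered moves combine via rule $(a_4)$ to give~(\ref{lem:c2o_xy}).

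The main obstacle lies in the bookkeeping of the last synchronisation subcase: we must ensure that the extended substitution $\sigma[x_\alpha \mapsto q_x, y_{\overline{\alpha}} \mapsto q_y]$ is well-defined even when the two triggering variables coincide ($x = y$). This is precisely the pathology illustrated in Example~\ref{ex:variabili_vi_odio}, and it is resolved by the distinctness of the subscripts $\alpha$ and $\overline{\alpha}$, which guarantees that the two freshened variable occurrences $x_\alpha$ and $y_{\overline{\alpha}} = x_{\overline{\alpha}}$ are distinct symbols. The remaining verification that $\sigma[\cdots](c_1 \mathbin{\|} c_2) = p_1 \mathbin{\|} p_2$ then reduces to a routine check that substitution commutes with the parallel construction of configurations, which is transparent from the grammar of $\C$.
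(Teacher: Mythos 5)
Your proposal is correct and follows essentially the same route as the paper's proof: the paper argues by induction on the derivation of $\sigma(t) \trans[\tau] p$ (equivalent to your structural induction for this syntax-directed system), and handles the crucial synchronisation case for $t_1 \mathbin{\|} t_2$ exactly as you do, by applying Lemma~\ref{lem:closed2open_alpha} to each component and combining the four resulting subcases via the SOS rule for $\mathbin{\|}$ and the auxiliary rules $(a_5)$, $(a_6)$ and $(a_4)$. Your closing remark on the well-definedness of $\sigma[x_\alpha \mapsto q_x, y_{\overline{\alpha}} \mapsto q_y]$ when $x = y$ matches the paper's own observation that the distinct subscripts $\alpha$ and $\overline{\alpha}$ keep the two occurrences apart.
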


%============================================
% sec - indecomposables
%=============================================

\section{Unique parallel decomposition}
\label{sec:indecomposables}

As explained in Section~\ref{sec:roadmap}, our approach for establishing that $\mathbb{P}_n$ is invariant under equational proofs relies on processes having a unique parallel decomposition modulo $\sim_\bb$.

\begin{definition}
[Parallel decomposition modulo $\sim_\bb$]
\label{def:indecomposable}
A process $p$ is \emph{indecomposable} if $p\not\sim_\bb\nil$ and $p \sim_\bb p_1 \mathbin{\|} p_2$ implies $p_1 \sim_\bb \nil$ or $p_2 \sim_\bb \nil$, for all processes $p_1$ and $p_2$. A \emph{parallel decomposition} of a process $p$ is a finite multiset $\lbag p_1,\dots,p_k \rbag$ of indecomposable processes $p_1,\dots,p_k$ such that $p\sim_\bb p_1\mathbin{\|}\cdots\mathbin{\|}p_k$. We say that $p$ has a \emph{unique parallel decomposition} if $p$ has a parallel decomposition $\lbag p_1,\dots,p_k\rbag$ and for every other parallel decomposition $\lbag p_1',\dots,p_\ell'\rbag$ of $p$ there exists a bijection $f:\{1,\dots,k\}\rightarrow\{1,\dots,\ell\}$ such that $p_i\sim_\bb p_{f(i)}'$ for all $1\leq i \leq k$.
\end{definition}

To prove that processes have a unique parallel decomposition we shall exploit a general result stating that a partial commutative monoid has unique decomposition if it can be endowed with a \emph{weak decomposition order} that satisfies \emph{power cancellation} \cite{Bas16}; we shall define and explain the notions below. 
Note that, in view of axioms P0--P2, which are (also) sound modulo $\sim_\bb$, the set of processes $\Proc$ modulo $\sim_\bb$ is a commutative monoid with respect to the binary operation naturally induced by $\mathbin{\|}$ on $\sim_{\bb}$-equivalence classes and the $\sim_{\bb}$-equivalence class of $\nil$ as identity element. 
We permit ourselves a minor abuse in notation and use $\rightarrow$ to (also) denote the binary relation $\{(p,q)\mid \exists \mu.\ p\trans[\mu]q\}$, and proceed to argue that $\rightarrow$ induces a weak decomposition order satisfying power cancellation on the commutative monoid of processes modulo $\sim_\bb$.

Given any process $p$ and $n \ge 1$, let $p^n$ denote the $n$-fold parallel composition $p \mathbin{\|} p^{n-1}$, with $p^0 = \nil$.
We first state some properties of the reflexive-transitive closure $\rightarrow^{*}$ of $\rightarrow$:

\begin{restatable}{proposition}{rttransprops} 
\label{prop:rttransprops}
  The relation $\rightarrow^{*}$ is an inversely well-founded partial order on processes satisfying the following properties:
  \begin{enumerate}
      \item \label{item:rttranspropsleast}
        For every process $p$ there exists a process $p'$ such that $p\mathrel{\rightarrow^{*}}p'\sim_\bb \nil$.
      \item \label{item:rttranspropscompatible}
        For all processes $p$, $p'$ and $q$, if $p\rightarrow^{*}p'$, then $p\mathbin{\|}q\rightarrow^{*}p'\mathbin{\|}q$ and $q\mathbin{\|}p\rightarrow^{*}q\mathbin{\|}p'$.
      \item \label{item:rttranspropsprecompositional}
        For all processes $p$, $q$ and $r$, if $p\mathbin{\|}q\rightarrow^{*}r$, then there exist $p'$ and $q'$ such that $p\rightarrow^{*}p'$, $q \rightarrow^{*}q'$ and $r=p'\mathbin{\|}q'$.
      \item \label{item:rttranspropsArchimedean}
        For all processes $p$ and $q$, if $p\rightarrow^{*}q^n$
        for all $n\in\mathbb{N}$, then $q\sim_{\bb}\nil$.
  \end{enumerate}
\end{restatable}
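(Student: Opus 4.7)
The plan is to establish the four numbered clauses after first arguing that $\rightarrow^{*}$ is an inversely well-founded partial order. The latter rests on the observation (already recorded in Section~\ref{sec:background}) that every transition strictly decreases $\size$: since size is a nonnegative integer this forbids any infinite strictly ascending $\rightarrow$-chain, and antisymmetry of $\rightarrow^{*}$ follows because $p\rightarrow^{*}q\rightarrow^{*}p$ collapses $\size(p)=\size(q)$ and so forbids any intermediate strict step. Reflexivity and transitivity come for free from the definition of the reflexive-transitive closure.

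For clause~(1) I would use strong induction on $\size(p)$: if $p$ has no outgoing transitions then, since we are in the restriction/relabelling/recursion-free fragment, no action prefix can sit at the top of any summand or parallel component of $p$, so $p$ is built purely from $\nil$, $+$, and $\|$; the identity-on-$p$ relation (paired with $\nil$) is then a branching bisimulation, whence $p\sim_{\bb}\nil$. Otherwise, pick any $p\trans[\mu]p'$, invoke the hypothesis on $p'$ (which has strictly smaller $\size$) and prepend the step. Clause~(2) is immediate from the SOS rule $\SOSrule{t\trans[\mu]t'}{t\|u\trans[\mu]t'\|u}$ and its symmetric counterpart: each $\rightarrow$-step of $p$ lifts to one of $p\|q$, and the lifting is closed under composition. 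Clause~(3) proceeds by induction on the length of the derivation $p\|q\rightarrow^{*}r$; the base case is trivial, and in the inductive step $p\|q\rightarrow^{*}p''\|q''\trans[\mu]r$ I would case-split on the SOS rule producing the last transition: either one side moves on its own (giving $r=p'''\|q''$ with $p''\trans[\mu]p'''$, or the symmetric case), or the communication rule fires with $p''\trans[\alpha]p'''$, $q''\trans[\overline{\alpha}]q'''$, $\mu=\tau$, and $r=p'''\|q'''$; in every case we assemble the required $p'$ and $q'$ with $p\rightarrow^{*}p'$, $q\rightarrow^{*}q'$ and $r=p'\|q'$.

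Clause~(4) is the main obstacle and the only place where $\sim_{\bb}$ enters nontrivially. Reading the statement in the induced monoid $\Proc/\sim_{\bb}$, the hypothesis unfolds to: for every $n\in\mathbb{N}$ there is some $r_n$ with $p\rightarrow^{*}r_n$ and $r_n\sim_{\bb}q^{n}$. The plan has three steps: first, since $\size$ strictly decreases along $\rightarrow$, the set $\{r:p\rightarrow^{*}r\}$ is finite, whence the set of $\sim_{\bb}$-classes it represents is finite; secondly, the infinite family $\{[q^{n}]_{\sim_{\bb}}\}_{n\in\mathbb{N}}$ is forced into this finite set, so by pigeonhole there exist $0\leq i<j$ with $q^{i}\sim_{\bb}q^{j}$; thirdly, conclude $q\sim_{\bb}\nil$. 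The hard part is the third step. I would argue by contradiction: assuming $q\not\sim_{\bb}\nil$, the absence of recursion together with the stuttering property forces some state reachable from $q$ via $\tau$-steps to admit a non-$\tau$ initial $\alpha\in\Act\cup\overline{\Act}$; the interleaving rule then lets $q^{j}$ realise the visible trace $\alpha^{j}$, whereas $q^{i}$ admits at most $i$ occurrences of $\alpha$ on any trace; since $\sim_{\bb}$ refines trace equivalence this contradicts $q^{i}\sim_{\bb}q^{j}$. Making the bound on the number of $\alpha$-occurrences in traces of $q^{i}$ rigorous in the presence of $\tau$-steps and communication is the most delicate ingredient, and if needed I would introduce an auxiliary $\sim_{\bb}$-invariant and $\|$-additive measure (for instance, the maximum number of non-$\tau$ actions along a trace) to make the strict inequality between $q^{i}$ and $q^{j}$ manifest.
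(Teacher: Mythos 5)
Your proposal is correct, and for the partial-order part and clauses~(1)--(3) it follows essentially the paper's own argument (the paper gets clause~(1) slightly more directly by taking a $\rightarrow^{*}$-maximal element, which exists by inverse well-foundedness, rather than by induction on $\size$; a state $p'$ with $p'\not\rightarrow$ is branching bisimilar to $\nil$ for the trivial reason that neither has any transitions, so your syntactic analysis of transition-free terms is not needed). The genuine divergence is in clause~(4). You go via finiteness of the set reachable from $p$ (which does hold: the LTS is finitely branching and every path has length at most $\size(p)$), pigeonhole to obtain $q^{i}\sim_{\bb}q^{j}$ with $i<j$, and then a trace-counting contradiction. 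The paper instead observes in one line that $p\rightarrow^{*}q^{n}$ implies $\depth(p)\geq\depth(q^{n})=n\cdot\depth(q)$, where $\depth$ is the maximal length of an observable trace; since this holds for all $n$, $\depth(q)=0$ and hence $q\sim_{\bb}\nil$. Note that the ``$\|$-additive, $\sim_{\bb}$-invariant measure'' you say you would introduce to make your third step rigorous is exactly this $\depth$ function, which the paper defines in its appendix together with the facts $\depth(r\mathbin{\|}s)=\depth(r)+\depth(s)$ and that $\sim_{\bb}$ preserves $\depth$. Once you have that measure in hand, the pigeonhole detour and the contradiction argument become unnecessary: the direct inequality $\depth(p)\geq n\cdot\depth(q)$ for all $n$ already forces $\depth(q)=0$. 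What your route buys is that it proves the statement read up to $\sim_{\bb}$ (reaching some $r_{n}\sim_{\bb}q^{n}$ rather than $q^{n}$ itself), but the depth argument delivers that strengthening for free as well, since $\depth$ is a $\sim_{\bb}$-invariant.
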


The following lemma is a direct consequence of the definition of branching bisimilarity.
\begin{lemma}
  For all processes $p$, $p'$ and $q$, if $p\sim_{\bb}q$ and $p\rightarrow^{*}p'$, then there exists $q'$ such that $q\rightarrow^{*}q'$ and $p'\sim_{\bb}q'$.
\end{lemma}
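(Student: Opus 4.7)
The plan is to prove the statement by induction on the length of the $\rightarrow^{*}$-derivation witnessing $p \rightarrow^{*} p'$, using the transfer clause in the definition of branching bisimilarity at each individual transition. The base case ($p = p'$) is trivial: take $q' := q$, so $q \rightarrow^{*} q'$ by reflexivity and $p' \sim_\bb q'$ by hypothesis.

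For the inductive step, suppose $p \rightarrow^{*} p'' \rightarrow p'$, so $p'' \trans[\mu] p'$ for some $\mu \in \Acttau$. By the induction hypothesis applied to $p \rightarrow^{*} p''$, there exists $q''$ with $q \rightarrow^{*} q''$ and $p'' \sim_\bb q''$. I then apply Definition~\ref{def:bb} to the transition $p'' \trans[\mu] p'$ from the $\sim_\bb$-related pair $(p'', q'')$, splitting into two subcases. If $\mu = \tau$ and $p' \sim_\bb q''$, setting $q' := q''$ suffices, since $q \rightarrow^{*} q''$ already. Otherwise, there exist processes $r, r'$ with $q'' \trans[\varepsilon] r \trans[\mu] r'$ and $p' \sim_\bb r'$; since $\trans[\varepsilon]$ is a (possibly empty) sequence of $\trans[\tau]$-transitions, each of which is an instance of $\rightarrow$, we obtain $q'' \rightarrow^{*} r \rightarrow r'$, and composing with $q \rightarrow^{*} q''$ yields $q \rightarrow^{*} r'$. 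Taking $q' := r'$ closes the case.

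No real obstacle arises: the transfer clause in Definition~\ref{def:bb} is precisely engineered to produce, for every single transition $p'' \trans[\mu] p'$, either a silent match (``stay at $q''$'') or a matching $\trans[\varepsilon]\trans[\mu]$-sequence from $q''$, and in both cases the target is $\sim_\bb$-related to $p'$. Since the proposition abstracts away action labels (the unlabelled $\rightarrow$ swallows the $\tau$-stay case as a zero-length $\rightarrow^{*}$-step and swallows the $\trans[\varepsilon]\trans[\mu]$ case as a positive-length $\rightarrow^{*}$-sequence), the induction simply threads this property along the given $\rightarrow^{*}$-derivation from $p$ to $p'$. The only minor point to be careful about is to record explicitly that $\trans[\varepsilon]$-segments contribute $\rightarrow^{*}$-steps in $q$, ensuring that the constructed run from $q$ to $q'$ is indeed a $\rightarrow^{*}$-run.
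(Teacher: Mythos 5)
Your proof is correct, and it fills in exactly the routine induction that the paper leaves implicit (the paper states this lemma without proof, calling it a direct consequence of the definition of branching bisimilarity). The decomposition at the last transition, the two-case application of the transfer clause, and the observation that $\trans[\varepsilon]\trans[\mu]$-sequences contribute $\rightarrow^{*}$-steps are precisely the intended argument.
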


By this lemma we can define a binary relation $\preceq$ on $\Proc/{\sim_{\bb}}$, the set of $\sim_{\bb}$-equivalence classes of processes, by stating that $[p]_{\sim_{\bb}}\preceq [q]_{\sim_{\bb}}$ if, and only if, there exists $p'\in [p]_{\sim_{\bb}}$ such that $q\rightarrow^{*}p'$ (here $[p]_{\sim_{\bb}}$ and $[q]_{\sim_{\bb}}$ denote the $\sim_{\bb}$-equivalence classes of $p$ and $q$, respectively). 
The following result is then a straightforward corollary of Proposition~\ref{prop:rttransprops}.

\begin{corollary}
The relation $\preceq$ is a weak decomposition order on $\Proc/{\sim_{\bb}}$, namely:
\begin{enumerate}
\item it is well-founded, i.e., every non-empty subset of $\Proc/{\sim_{\bb}}$ has a $\preceq$-minimal element;
\item the identity element $[\nil]_{\sim_{\bb}}$ of $\Proc/{\sim_{\bb}}$ is the least element of $\Proc/{\sim_{\bb}}$ with respect to $\preceq$, i.e., $[\nil]_{\sim_\bb} \preceq [p]_{\sim_\bb}$ for all $p \in \Proc$;
\item it is compatible, i.e., for all $p, q, r \in \Proc$ if $[p]_{\sim_\bb} \preceq [q]_{\sim_\bb}$, then $[p \mathbin{\|} r]_{\sim_\bb} \preceq [q \mathbin{\|} r]_{\sim_\bb}$;
\item it is precompositional, i.e., for all $p, q, r \in \Proc$ we have that $[p]_{\sim_\bb} \preceq [q \mathbin{\|} r]_{\sim_\bb}$ implies $[p]_{\sim_\bb} = [q' \mathbin{\|} r']_{\sim_\bb}$ for some $[q']_{\sim_\bb} \preceq [q]_{\sim_\bb}$ and $[r']_{\sim_\bb} \preceq [r]_{\sim_\bb}$; and
\item it is Archimedean, i.e., for all $p, q \in \Proc$ we have that $[p^n]_{\sim_\bb} \preceq [q]_{\sim_\bb}$ for all $n \in \N$ implies that $[p]_{\sim_\bb} = [\nil]_{\sim_\bb}$.
\end{enumerate}
\end{corollary}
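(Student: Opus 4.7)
The plan is to derive each of the five clauses of the corollary from the corresponding clause of Proposition~\ref{prop:rttransprops}, using the preceding lemma, which makes $\preceq$ well defined on $\sim_\bb$-classes, to pass freely between a class and any chosen representative.

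For well-foundedness I would argue by contradiction: an infinite strictly descending $\preceq$-chain $[p_0]_{\sim_\bb} \succ [p_1]_{\sim_\bb} \succ \cdots$ unfolds, via the lifting lemma applied at each step, into an infinite $\rightarrow$-chain from $p_0$, which is impossible because every SOS step strictly decreases process size. The second clause is immediate from Proposition~\ref{prop:rttransprops}(\ref{item:rttranspropsleast}): any $p' \sim_\bb \nil$ with $p \rightarrow^{*} p'$ witnesses $[\nil]_{\sim_\bb} \preceq [p]_{\sim_\bb}$. Compatibility and precompositionality are direct translations of clauses (\ref{item:rttranspropscompatible}) and (\ref{item:rttranspropsprecompositional}): from a witness $q \rightarrow^{*} p'$ with $p' \sim_\bb p$, clause (\ref{item:rttranspropscompatible}) yields $q \mathbin{\|} r \rightarrow^{*} p' \mathbin{\|} r \sim_\bb p \mathbin{\|} r$ by congruence of $\sim_\bb$ with respect to $\mathbin{\|}$; symmetrically, any witness $q \mathbin{\|} r \rightarrow^{*} p'' \sim_\bb p$ is split by clause (\ref{item:rttranspropsprecompositional}) into $p'' = q' \mathbin{\|} r'$ with $q \rightarrow^{*} q'$ and $r \rightarrow^{*} r'$, so that $[p]_{\sim_\bb} = [q' \mathbin{\|} r']_{\sim_\bb}$ with $[q']_{\sim_\bb} \preceq [q]_{\sim_\bb}$ and $[r']_{\sim_\bb} \preceq [r]_{\sim_\bb}$.

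The main obstacle is the Archimedean clause. The hypothesis $[p^n]_{\sim_\bb} \preceq [q]_{\sim_\bb}$ supplies, for each $n$, only a witness $r_n$ with $q \rightarrow^{*} r_n$ and $r_n \sim_\bb p^n$, whereas Proposition~\ref{prop:rttransprops}(\ref{item:rttranspropsArchimedean}) demands the literal reachability $q \rightarrow^{*} p^n$. I plan to close this gap by exploiting the finiteness of $\{r : q \rightarrow^{*} r\}$ (CCS is finitely branching and sizes strictly decrease under $\rightarrow$): a pigeonhole argument forces $p^{n_1} \sim_\bb p^{n_2}$ for some $n_1 < n_2$, hence $p^{n_1} \sim_\bb p^{n_1} \mathbin{\|} p^{n_2 - n_1}$. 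The lifting lemma, combined with Proposition~\ref{prop:rttransprops}(\ref{item:rttranspropsleast}) and clause (\ref{item:rttranspropscompatible}), then supplies for each $m$ a descendant of $p^{n_1}$ whose $\sim_\bb$-class equals $[p^{m(n_2 - n_1)}]_{\sim_\bb}$; a second application of pigeonhole on the finite reachable set of $p^{n_1}$ lets me extract a literal chain of reachabilities to which Proposition~\ref{prop:rttransprops}(\ref{item:rttranspropsArchimedean}) applies, concluding $p^{n_2 - n_1} \sim_\bb \nil$. Finally, $p \sim_\bb \nil$ follows by invoking an auxiliary fact that $p_1 \mathbin{\|} p_2 \sim_\bb \nil$ implies $p_1 \sim_\bb \nil$ and $p_2 \sim_\bb \nil$, which is provable directly by induction on $\size(p_1) + \size(p_2)$: any visible transition of $p_i$ surfaces in $p_1 \mathbin{\|} p_2$ and cannot be matched by $\nil$, while a $\tau$-transition $p_i \trans[\tau] p_i'$ with $p_i' \not\sim_\bb \nil$ can only be matched by stuttering, reducing to a strictly smaller instance of the same factorisation property.
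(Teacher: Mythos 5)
Your treatment of clauses 1--4 is correct and is exactly the ``straightforward'' argument the paper intends: the lemma lifting $\rightarrow^{*}$ along $\sim_\bb$ makes $\preceq$ well defined on equivalence classes, and each clause is then read off from the corresponding item of Proposition~\ref{prop:rttransprops}, using congruence of $\sim_\bb$ with respect to $\mathbin{\|}$ for compatibility and the fact that every $\rightarrow$-step strictly decreases size for well-foundedness.

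The gap is in the Archimedean clause. Your first pigeonhole correctly yields $n_1<n_2$ with $p^{n_1}\sim_\bb p^{n_2}=p^{n_1}\mathbin{\|}p^{d}$ (where $d=n_2-n_1$), and iterating with the lifting lemma gives, for each $m$, a descendant $s_m$ of $p^{n_1}$ with $s_m\sim_\bb p^{md}$. But the ``second application of pigeonhole'' does not deliver what you claim: collapsing infinitely many $m$ onto a single reachable state $s$ only tells you that $s\sim_\bb p^{md}$ for infinitely many $m$, which reproduces exactly the situation you started from --- a $\sim_\bb$-witness in place of a literal descendant. No literal chain $P\rightarrow^{*}(p^{d})^{m}$ is produced, so property~\ref{item:rttranspropsArchimedean} of Proposition~\ref{prop:rttransprops} cannot be invoked as stated, and the argument loops without making progress. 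The clean way to close the clause is the invariant your pigeonhole is implicitly circling: branching bisimilarity preserves observable depth (Lemma~\ref{lem:bb_same_depth}), and $q\rightarrow^{*}r$ implies $\depth(q)\geq\depth(r)$. Hence the hypothesis $[p^{n}]_{\sim_\bb}\preceq[q]_{\sim_\bb}$ supplies $r_n$ with $\depth(q)\geq\depth(r_n)=\depth(p^{n})=n\cdot\depth(p)$ for all $n$, forcing $\depth(p)=0$ and thus $p\sim_\bb\nil$. This is precisely the argument the paper uses to prove property~\ref{item:rttranspropsArchimedean} itself, and it transfers verbatim to $\preceq$; the pigeonhole machinery and the auxiliary factorisation fact about $p_1\mathbin{\|}p_2\sim_\bb\nil$ then become unnecessary.
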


According to \cite[Theorem 34]{Bas16} it now remains to prove that $\preceq$ satisfies power cancellation. 
The weak decomposition order $\preceq$ on the commutative monoid of processes modulo $\sim_{\bb}$ satisfies \emph{power cancellation} if for every indecomposable process $p$ and for all processes $q$ and $r$ such that $[p]_{\sim_{\bb}}\not\prec [q]_{\sim_{\bb}},[r]_{\sim_{\bb}}$, for all $k\in\mathbb{N}$, we have that $[p^k\mathbin{\|}q]_{\sim_{\bb}} = [p^k\mathbin{\|}r]_{\sim_{\bb}}$ implies $[q]_{\sim_\bb} = [r]_{\sim_\bb}$.

\begin{restatable}{proposition}{propbbcancellation}
\label{prop:bb_cancellation}
  The weak decomposition order $\preceq$ on the commutative monoid of processes modulo $\sim_{\bb}$ satisfies power cancellation.
\end{restatable}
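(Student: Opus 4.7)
I would prove power cancellation by induction on $k$, with the base case $k=0$ immediate. For the inductive step, fix an indecomposable $p$ and processes $q,r$ with $[p]_{\sim_\bb}\not\prec[q]_{\sim_\bb}$ and $[p]_{\sim_\bb}\not\prec[r]_{\sim_\bb}$, and assume $p^{k+1}\mathbin{\|}q\sim_\bb p^{k+1}\mathbin{\|}r$. The strategy is to exhibit a residual $r^\circ\sim_\bb r$ such that $p^k\mathbin{\|}q\sim_\bb p^k\mathbin{\|}r^\circ$, so that the inductive hypothesis yields $q\sim_\bb r^\circ\sim_\bb r$.

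The central idea is to ``cancel one $p$-factor'' from both sides by exercising a single $p$ on the left and tracking the matching behaviour on the right. Since $p$ is indecomposable we have $p\not\sim_\bb\nil$; by Proposition~\ref{prop:rttransprops}(\ref{item:rttranspropsleast}) there is then a finite transition sequence $p=p_0\rightarrow p_1\rightarrow\cdots\rightarrow p_n$ with $p_n\sim_\bb\nil$ and, because $p_0\not\sim_\bb\nil$, necessarily $n\geq 1$. Executing this sequence on one distinguished $p$-factor inside $p^{k+1}\mathbin{\|}q$ reaches a process $\sim_\bb p^k\mathbin{\|}q$. The branching-bisimulation game then forces $p^{k+1}\mathbin{\|}r$ to produce a matching chain of transitions, and by Proposition~\ref{prop:rttransprops}(\ref{item:rttranspropsprecompositional}) every intermediate state along this chain decomposes as $t\mathbin{\|}s$ with $p^{k+1}\rightarrow^{*} t$ and $r\rightarrow^{*} s$. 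The aim is then to conclude that the matching chain ends in a state of the form $p^k\mathbin{\|}r^\circ$ (modulo $\sim_\bb$) with $r\rightarrow^{*}r^\circ$, from which the inductive hypothesis completes the argument.

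The heart of the plan, and its principal obstacle, is guaranteeing that the right-hand matching chain genuinely \emph{spends} one $p$-factor, rather than having its behaviour absorbed by the $r$-descendent or by communications between the $p$- and $r$-components. I would establish this using the side condition $[p]_{\sim_\bb}\not\prec[r]_{\sim_\bb}$ together with the precompositional and Archimedean properties of $\preceq$: were the right-hand match carried out entirely inside the $r$-descendent, iterating the argument across independent copies of the exercising sequence would force $r$ to reach states dominating $[p^m]_{\sim_\bb}$ for every $m\in\N$, contradicting the Archimedean clause in view of $p\not\sim_\bb\nil$. A careful case analysis on each matching transition—distinguishing contributions from $p$-factors, from $r$-descendents, and from communications, and accounting for stuttering silent steps admitted by branching bisimulation—then pinpoints the $p^k\mathbin{\|}r^\circ$ residual, and the symmetric consumption of a $p$-factor on the left-hand side is handled by the same reasoning using $[p]_{\sim_\bb}\not\prec[q]_{\sim_\bb}$.
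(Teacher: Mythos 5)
There is a genuine gap, and it sits exactly where you locate "the heart of the plan". Your induction is on $k$, but the tool you need in the inductive step is \emph{unique parallel decomposition of the intermediate and smaller states}: to say that the right-hand matching chain "spends a $p$-factor", or to speak of the multiplicity of $p$ in a state $t\mathbin{\|}s$ reached by the match, you must already know that such states decompose uniquely. An induction hypothesis of the form "power cancellation holds for exponent $k$" does not give you this, because the states arising in the analysis are not of the form $p^{k'}\mathbin{\|}q'$ with smaller $k'$ --- they are arbitrary processes strictly below $[p^k\mathbin{\|}q]_{\sim_\bb}$ in the reachability order. The paper's proof instead performs a well-founded induction on a lexicographic order on \emph{triples} $([p^k\mathbin{\|}q]_{\sim_\bb},[q]_{\sim_\bb},[r]_{\sim_\bb})$, precisely so that the induction hypothesis yields power cancellation (and hence, via \cite[Theorem 34]{Bas16}, unique decomposition) on the submonoid of all classes strictly below the current one; it also normalises $q$ and $r$ so that they have no $\rightarrow$-successor in their own $\sim_\bb$-class, and then verifies the branching-bisimulation conditions for the single pair $(q,r)$ transition by transition, rather than by exercising a whole run of one $p$-factor.

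Independently of the choice of induction, the concluding step of your plan --- that the matching chain ends in a state of the form $p^k\mathbin{\|}r^\circ$ with $r\rightarrow^{*}r^\circ$ \emph{and} $r^\circ\sim_\bb r$ --- is not justified. Branching bisimilarity does not track components: by precompositionality the end state is $p_1'\mathbin{\|}\cdots\mathbin{\|}p_{k+1}'\mathbin{\|}s$ with $p\rightarrow^{*}p_i'$ and $r\rightarrow^{*}s$, and the depth lost (exactly $\depth(p)$, since end states are bisimilar) can be spread across several $p_i'$ and $s$ simultaneously. Your Archimedean argument only excludes the extreme case in which the \emph{entire} match is absorbed by the $r$-descendant; it says nothing about mixed distributions, and without unique decomposition of the end state you cannot conclude that $k$ of the $p_i'$ are still $\sim_\bb p$, nor that $s\sim_\bb r$. (Indeed the paper's subcase analysis must separately handle $r''\not\sim_\bb r$ and $p''\not\sim_\bb p^k$, and in the latter it extracts the residual only after arguing about multiplicities in a unique decomposition supplied by the induction hypothesis.) So the proposal as it stands cannot be completed; repairing it essentially forces you back to the paper's well-founded induction on $\preceq$.
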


We have now established that $\preceq$ is a weak decomposition order on the commutative monoid of processes modulo $\sim_{\bb}$ that satisfies power cancellation. Thus, with an application of \cite[Theorem 34]{Bas16} we get the following unique parallel decomposition result.

\begin{proposition}
\label{prop:unique_par}
Every process in $\Proc{}$ has a unique parallel decomposition.
\end{proposition}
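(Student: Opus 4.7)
The plan is to invoke \cite[Theorem 34]{Bas16} essentially off-the-shelf. That abstract algebraic result asserts that in a partial commutative monoid equipped with a weak decomposition order satisfying power cancellation, every element admits a unique decomposition into indecomposables. The preceding development has been arranged precisely so that this theorem becomes applicable to the present setting, so the proof should be little more than a verification that all hypotheses are in place, followed by a translation between equivalence classes and representatives.

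First I would observe that, since axioms P0--P2 are sound modulo $\sim_\bb$, the quotient $(\Proc/{\sim_\bb}, \mathbin{\|}, [\nil]_{\sim_\bb})$ is a (total, hence partial) commutative monoid, where $\mathbin{\|}$ lifts to equivalence classes in the obvious well-defined manner. The Corollary immediately before the Proposition verifies the five clauses that make $\preceq$ a weak decomposition order on this monoid (well-foundedness, least element, compatibility, precompositionality, Archimedean), and Proposition~\ref{prop:bb_cancellation} supplies the missing power cancellation property. Hence \cite[Theorem 34]{Bas16} applies and yields that every equivalence class $[p]_{\sim_\bb}$ decomposes uniquely, as a multiset, into indecomposable classes.

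The only remaining task is to transfer this statement from the level of equivalence classes to the level of representatives, as required by Definition~\ref{def:indecomposable}. Given $p \in \Proc$, let $\lbag [p_1]_{\sim_\bb}, \dots, [p_k]_{\sim_\bb}\rbag$ be the unique decomposition of $[p]_{\sim_\bb}$ provided by \cite[Theorem 34]{Bas16}, picking representatives $p_i$ which are then indecomposable in the sense of Definition~\ref{def:indecomposable}. By definition of the monoid operation, $[p]_{\sim_\bb} = [p_1 \mathbin{\|} \cdots \mathbin{\|} p_k]_{\sim_\bb}$, so $p \sim_\bb p_1 \mathbin{\|} \cdots \mathbin{\|} p_k$ and $\lbag p_1, \dots, p_k\rbag$ is a parallel decomposition of $p$. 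Given any other such decomposition $\lbag p'_1, \dots, p'_\ell\rbag$, the multisets $\lbag [p_1]_{\sim_\bb}, \dots, [p_k]_{\sim_\bb}\rbag$ and $\lbag [p'_1]_{\sim_\bb}, \dots, [p'_\ell]_{\sim_\bb}\rbag$ coincide by uniqueness at the quotient level, which immediately supplies a bijection $f \colon \{1,\dots,k\} \to \{1,\dots,\ell\}$ with $p_i \sim_\bb p'_{f(i)}$.

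Since all the technical substance has been absorbed into the Corollary and into Proposition~\ref{prop:bb_cancellation}, no real obstacle is expected here; the only minor point requiring care is checking that the notion of indecomposability in Definition~\ref{def:indecomposable} agrees, via the quotient correspondence, with the notion used in \cite{Bas16}, but this follows directly from the fact that $[p]_{\sim_\bb} = [\nil]_{\sim_\bb}$ iff $p \sim_\bb \nil$ and that $\mathbin{\|}$ lifts to equivalence classes.
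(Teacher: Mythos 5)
Your proposal is correct and follows exactly the paper's own route: having established via the preceding Corollary that $\preceq$ is a weak decomposition order on the commutative monoid $\Proc/{\sim_\bb}$ and via Proposition~\ref{prop:bb_cancellation} that it satisfies power cancellation, the result is an immediate application of \cite[Theorem 34]{Bas16}. The extra care you take in translating between equivalence classes and representatives is a harmless elaboration of what the paper leaves implicit.
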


In what follows, we shall make use of the following direct consequence of Proposition~\ref{prop:unique_par}.

\begin{corollary}
\label{cor:cancellation}
If $p \mathbin{\|} r \sim_\bb q \mathbin{\|} r$, then $p \sim_\bb q$.
\end{corollary}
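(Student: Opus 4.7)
The plan is to derive the cancellation property directly from the unique parallel decomposition result in Proposition~\ref{prop:unique_par}, exploiting the multiset structure of parallel decompositions.

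First, I would apply Proposition~\ref{prop:unique_par} to the three processes $p$, $q$ and $r$ individually, obtaining parallel decompositions $\lbag p_1, \dots, p_k \rbag$, $\lbag q_1, \dots, q_\ell \rbag$ and $\lbag r_1, \dots, r_m \rbag$ into indecomposable processes. Then, by definition, $p \sim_\bb p_1 \mathbin{\|} \cdots \mathbin{\|} p_k$, $q \sim_\bb q_1 \mathbin{\|} \cdots \mathbin{\|} q_\ell$ and $r \sim_\bb r_1 \mathbin{\|} \cdots \mathbin{\|} r_m$. Since branching bisimilarity is a congruence with respect to $\mathbin{\|}$ (recalled after Definition~\ref{def:bb}, as $\mathbin{\|}$ satisfies the simple BB cool rule format), it follows that both $p \mathbin{\|} r$ and $q \mathbin{\|} r$ are branching bisimilar to the parallel composition of the corresponding multiset union; that is, the multisets $\lbag p_1,\dots,p_k,r_1,\dots,r_m \rbag$ and $\lbag q_1,\dots,q_\ell,r_1,\dots,r_m \rbag$ are parallel decompositions of $p \mathbin{\|} r$ and of $q \mathbin{\|} r$ respectively (each element is indecomposable by construction).

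Now I would use the hypothesis $p \mathbin{\|} r \sim_\bb q \mathbin{\|} r$ together with the uniqueness statement in Proposition~\ref{prop:unique_par}: since both multisets above are parallel decompositions of the same $\sim_\bb$-equivalence class, there exists a bijection between them matching components up to $\sim_\bb$. The components contributed by $r$ on both sides can then be matched with each other using the standard fact that a bijection between multisets over an equivalence relation restricts to a bijection on any common sub-multiset; concretely, I would iteratively cancel the $r_j$'s one by one, at each step using that the remaining multisets are still parallel decompositions of processes that differ only by a removed $r_j$, together with the multiset bijection. This yields $k = \ell$ and a bijection $f \colon \{1,\dots,k\} \to \{1,\dots,k\}$ with $p_i \sim_\bb q_{f(i)}$ for all $i$, whence $p \sim_\bb p_1 \mathbin{\|} \cdots \mathbin{\|} p_k \sim_\bb q_{f(1)} \mathbin{\|} \cdots \mathbin{\|} q_{f(k)} \sim_\bb q$ by congruence of $\sim_\bb$ under $\mathbin{\|}$ and the commutativity/associativity of $\mathbin{\|}$ modulo $\sim_\bb$.

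The only delicate point is the multiset cancellation step, i.e.\ justifying that a bijection witnessing equality of the two extended multisets can be refined to one identifying the $r_j$-parts, so that the remaining $p_i$'s and $q_j$'s stand in bijection. I expect this to be essentially a routine combinatorial argument: pick any $r_j$ on the left and look at its $\sim_\bb$-image on the right; if it is an $r_{j'}$ we cancel and recurse, otherwise it is some $q_{j_0}$, in which case we swap the roles so that, after the swap, the bijection sends $r_j$ to $r_{j'}$ for some $j'$, and recurse on the smaller multisets. Since indecomposability is preserved and the monoid of equivalence classes under $\mathbin{\|}$ with identity $[\nil]_{\sim_\bb}$ inherits all the needed structure from Proposition~\ref{prop:rttransprops}, nothing else is required beyond bookkeeping.
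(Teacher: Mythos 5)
Your proposal is correct and matches the intended derivation: the paper states the corollary without proof, as a ``direct consequence'' of Proposition~\ref{prop:unique_par}, and the argument it has in mind is exactly yours --- decompose $p$, $q$, $r$, observe that the multiset unions decompose $p \mathbin{\|} r$ and $q \mathbin{\|} r$, and cancel the common $r$-part from the bijection given by uniqueness. The cancellation step you flag as delicate is indeed routine: it suffices to compare, for each $\sim_\bb$-equivalence class, the multiplicities on both sides and subtract those contributed by $r$.
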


%=============================================
% sec - negative result
%=================================================

\section{Nonexistence of a finite axiomatisation}
\label{sec:negative_result}

We devote this section to proving Theorem~\ref{thm:rbb_negative}.
Following the strategy sketched in Section~\ref{sec:roadmap}, we introduce a particular family of equations on which we will build our negative result:
\begin{align*}
& p_n  \;=\;  \sum_{i=2}^{n} a a^{\le i} & (n \ge 2)\phantom{.} \\
& \e_n \colon \quad a \mathbin{\|} p_n  \;\approx\;  a p_n + \sum_{i=2}^{n} a (a \mathbin{\|} a^{\le i}) & (n \ge 2).
\end{align*}
It is easy to check that each equation $\e_n$, for $n \ge 2$, is sound modulo rooted branching bisimilarity (as, in particular, it is sound modulo strong bisimilarity).

In order to prove Theorem~\ref{thm:rbb_negative}, we proceed to show that no finite collection of equations over CCS that are sound modulo rooted branching bisimilarity can prove all of the equations $\e_n$ ($n \ge 2$) from the family given above.
Formally, for each $n \ge 2$, we consider the property
$\mathbb{P}_n$: \textit{having a summand rooted branching bisimilar to } $a \mathbin{\|} p_n$.
Then, we prove the following: 

\begin{restatable}{theorem}{thmrbbpreservesproperty}
\label{thm:rbb_preserves_property}
Let $\E$ be a finite axiom system over CCS that is sound modulo $\sim_\rbb$, let 
$n$ be larger than the size of each term in the equations in $\E$, and let $p,q$ be closed terms such that $p,q \sim_\rbb a \mathbin{\|} p_n$.
If $\E \vdash p \approx q$ and $p$ satisfies $\mathbb{P}_n$ then so does $q$.
\end{restatable}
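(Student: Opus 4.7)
The plan is to prove the theorem by induction on the depth of the derivation of $\E \vdash p \approx q$ from the equational logic rules in Table~\ref{tab:equational_logic}, strengthening the conclusion so that preservation of $\mathbb{P}_n$ holds whenever $\E \vdash p \approx q$, without requiring $p,q \sim_\rbb a \mathbin{\|} p_n$; the statement as given then follows as a direct corollary via soundness. Symmetry can be assumed away by closing $\E$ under reversed equations, and the cases of reflexivity and transitivity are immediate from the induction hypothesis.

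The congruence cases are handled uniformly by looking at the syntactic form of the summand required by $\mathbb{P}_n$. For the prefix $\mu.(-)$ the case does not arise: if $p = \mu.p'$ satisfies $\mathbb{P}_n$, then $p$ itself (its unique summand) would be $\sim_\rbb a \mathbin{\|} p_n$; but $a \mathbin{\|} p_n$ has pairwise non-$\sim_\bb$-equivalent initial $a$-derivatives $\nil \mathbin{\|} p_n, a \mathbin{\|} a^{\le 2}, \dots, a \mathbin{\|} a^{\le n}$, whereas $\mu.p'$ admits only a single initial transition, so no strong match of initials is possible. For the parallel composition $p = p_1 \mathbin{\|} p_2$, the term is its own unique summand, so $\mathbb{P}_n$ on $p$ gives $p \sim_\rbb a \mathbin{\|} p_n$; by soundness of $\E$, then $q = q_1 \mathbin{\|} q_2 \sim_\rbb p \sim_\rbb a \mathbin{\|} p_n$, yielding $\mathbb{P}_n$ on $q$ trivially. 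For the sum $p = p_1 + p_2$, the witnessing summand of $p$ is already a summand of some $p_i$, so applying the induction hypothesis to $\E \vdash p_i \approx q_i$ produces a corresponding summand of $q_i$, hence of $q$.

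The substantive case is that in which $p \approx q$ is a substitution instance of an axiom: $p = \sigma(t)$, $q = \sigma(u)$, for some $(t \approx u) \in \E$ and closed substitution $\sigma$, where by hypothesis $\size(t), \size(u) < n$. Decomposing $t$ into its top-level summands and separating variables from non-variables, the summand $p^{\ast} \sim_\rbb a \mathbin{\|} p_n$ of $\sigma(t)$ must arise either (i) as $\sigma(t_j) \sim_\rbb a \mathbin{\|} p_n$ for some non-variable top-level summand $t_j$ of $t$, or (ii) from a summand of $\sigma(x)$ for some top-level variable $x$ of $t$. In case (i), $t_j$ can be neither $\nil$ nor of the form $\mu.t_j'$ (by the same initials argument as above), so $t_j = t' \mathbin{\|} t''$; applying Proposition~\ref{prop:unique_par} to $\sigma(t') \mathbin{\|} \sigma(t'') \sim_\bb a \mathbin{\|} p_n$ and using that $a$ and $p_n$ are indecomposable, one of the two factors must match $p_n$ up to $\sim_\bb$. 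Since $\size(t') < n$ while $p_n$ has $n-1$ summands of strictly increasing depth, structural properties of $a \mathbin{\|} p_n$ collected in Appendix~\ref{sec:preliminaries} rule out that $t'$ alone supplies all of $p_n$'s behaviour: a top-level variable of $t'$ must contribute, reducing (i) to (ii) on a smaller subterm. In case (ii), Proposition~\ref{prop:same_var} ensures that the witnessing variable $x$ also occurs in $u$, and the decomposition tools of Section~\ref{sec:decomposition} (in particular Lemmas~\ref{lem:var_to_term} and~\ref{lem:closed2open_tau}) allow us to transport the contribution of $\sigma(x)$ through $u$, yielding the required summand $\sim_\rbb a \mathbin{\|} p_n$ of $\sigma(u)$.

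I expect the hardest step to be the reduction carried out in case (i): excluding that $\sigma(t_j)$ matches $a \mathbin{\|} p_n$ purely by virtue of the syntactic shape of $t_j$, rather than through the contribution of some variable. This is precisely where the hypothesis $n > \size(t)$ is crucial, and it requires a careful interplay of unique parallel decomposition (Proposition~\ref{prop:unique_par}), the indecomposability of $a$ and $p_n$, and the depth-based properties of $a \mathbin{\|} p_n$ collected in Appendix~\ref{sec:preliminaries}.
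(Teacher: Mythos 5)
There are two genuine gaps. First, your proposed strengthening of the induction hypothesis --- proving that $\mathbb{P}_n$ is preserved by \emph{every} derivable equation, without the side condition $p,q \sim_\rbb a \mathbin{\|} p_n$ --- is not just unproved but false. Take the sound axiom P0 (read from right to left, which is in $\E$ after closing under symmetry) and the substitution $\sigma(x) = (a \mathbin{\|} p_n) + b$: then $\sigma(x)$ has a summand rooted branching bisimilar to $a \mathbin{\|} p_n$, whereas $\sigma(x) \mathbin{\|} \nil$ is its own unique summand and is rooted branching bisimilar to $(a \mathbin{\|} p_n)+b \not\sim_\rbb a \mathbin{\|} p_n$, so $\mathbb{P}_n$ is not preserved. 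The paper avoids this by keeping the hypothesis $p,q \sim_\rbb a \mathbin{\|} p_n$ throughout the induction \emph{and} by restricting to saturated axiom systems and proofs free of $\nil$ summands and factors (Proposition~\ref{Propn:proofswithout0_text}), neither of which appears in your sketch. The hypothesis is also indispensable inside the substitution case: the paper repeatedly uses $\init(\sigma(u))=\{a\}$ and $\rdepth(\sigma(u))=n+2$, facts that only follow from $\sigma(u)\sim_\rbb a\mathbin{\|}p_n$, e.g.\ to conclude that the matching summand $u_j$ of $u$ actually contains the relevant variable. (The small price the paper pays is a short argument in the $+$-congruence case that the relevant sub-sums are themselves $\sim_\rbb a \mathbin{\|} p_n$, so that the induction hypothesis applies.)

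Second, and more seriously, your treatment of the crux of the substitution case is missing its key idea. You correctly locate, via unique parallel decomposition, Lemma~\ref{lem:apn_par_rbb}, the bound $n>\size(t)$ and Lemma~\ref{lem:ai_sum}, a variable $x$ occurring as a summand of the $p_n$-factor $t''$ with $\sigma(x)\sim_\rbb \sum_{k=1}^{m} a a^{\le i_k}$, $m>1$. But this $x$ is \emph{not} a top-level summand of $t$, so Proposition~\ref{prop:same_var} does not apply to it, and the step you describe as ``transporting the contribution of $\sigma(x)$ through $u$'' is precisely where the real work lies. The paper's argument introduces the modified substitution $\sigma'=\sigma[x\mapsto \overline{a}(a\mathbin{\|}p_n)]$, so that $\sigma'(t)$ acquires a $\tau$-transition (by communication with the component $\sigma(t')\sim_\rbb a$) to a process branching bisimilar to $a\mathbin{\|}p_n$; soundness of $t\approx u$ under $\sigma'$ then forces some summand $u_j$ of $u$ to match this $\tau$-step, a rooted-depth comparison shows $x\in\var(u_j)$, and a further case analysis on the shape of $u_j$ (driven by Lemma~\ref{lem:closed2open_tau}) excludes all forms except $u_j=u'\mathbin{\|}u''$ with $\sigma(u')\sim_\rbb a$ and $\sigma(u'')\sim_\rbb p_n$. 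None of this is recoverable from the lemmas you cite alone: knowing that $x$ occurs somewhere in $u$ does not yield a summand of $\sigma(u)$ rooted branching bisimilar to $a\mathbin{\|}p_n$. This communication trick, and the handling of the fresh $\tau$ it creates, is also exactly where the argument becomes specific to the weak congruence rather than a verbatim replay of Moller's proof.
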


The crucial step in the proof of Theorem~\ref{thm:rbb_preserves_property} is delivered by the proposition below, which ensures that the property $\mathbb{P}_n$ ($n \ge 2$) is preserved by the closure under substitutions of equations in a finite, sound axiom system.
Proposition~\ref{prop:rbb_substitution_case} is proved by means of the technical results provided so far, the ones in Appendix~\ref{sec:preliminaries}, and the notion of $\nil$-\emph{factor} of a term:

\begin{definition}
\label{Lem:nil-factors}
We say that a term $t$ has a $\nil$ \emph{factor} if it contains a subterm of the form  $t' \mathbin{\|} t''$, and either $t' \sim_\rbb \nil$ or $t'' \sim_\rbb \nil$.
\end{definition}

\begin{restatable}{proposition}{proprbbsubstitutioncase}
\label{prop:rbb_substitution_case}
Let $t \approx u$ be an equation over CCS terms that is sound modulo $\sim_\rbb$.
Let $\sigma$ be a closed substitution with $p = \sigma(t)$ and $q = \sigma(u)$.
Suppose that $p$ and $q$ have neither $\nil$ summands nor $\nil$ factors, and $p,q \sim_\rbb a \mathbin{\|} p_n$ for some $n$ larger than the sizes of $t$ and $u$.
If $p$ satisfies $\mathbb{P}_n$, then so does $q$.
\end{restatable}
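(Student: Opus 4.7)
The plan is to follow the strategy sketched in Section~\ref{sec:roadmap}: show that any summand $r$ of $p = \sigma(t)$ with $r \sim_\rbb a \mathbin{\|} p_n$ must be traceable to the behaviour of a variable occurring in $t$, and then use Proposition~\ref{prop:same_var}, together with the structural constraints on $u$ imposed by soundness of $t \approx u$, to locate a corresponding summand of $q = \sigma(u)$.

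First, I would write $t = \sum_i t_i$ with each $t_i$ having no $+$ at its head, so that the summands of $p$ are exactly $\sigma(t_i)$ for non-variable $t_i$ together with the summands of $\sigma(x)$ whenever $t_i = x$ is a variable; the hypothesis that $p$ has no $\nil$-summands excludes $t_i = \nil$. Given a summand $r$ of $p$ with $r \sim_\rbb a \mathbin{\|} p_n$, I would case-split on the shape of the corresponding $t_i$. The case $t_i = \mu.t_i'$ is ruled out because $\mu.\sigma(t_i')$ has a single initial transition, while $a \mathbin{\|} p_n$ has $n$ distinct initial $a$-transitions with pairwise non-branching-bisimilar targets (a combinatorial property of $a \mathbin{\|} p_n$ collected in Appendix~\ref{sec:preliminaries}). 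For $t_i = t_1' \mathbin{\|} t_2'$, the relation $\sigma(t_1') \mathbin{\|} \sigma(t_2') \sim_\bb a \mathbin{\|} p_n$, combined with the unique parallel decomposition (Proposition~\ref{prop:unique_par}), the indecomposability of $a$ and $p_n$, and the absence of $\nil$-factors, forces, up to swapping, $\sigma(t_1') \sim_\bb a$ and $\sigma(t_2') \sim_\bb p_n$; since $p_n$ has a visible trace of length $n+1$ while $\size(t_2') < n$, the term $t_2'$ must contain a variable whose closed instance supplies that behaviour, and the decomposition tools of Section~\ref{sec:decomposition} let us pinpoint how.

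The last step transfers this information to $u$. Since $t \approx u$ is sound modulo $\sim_\rbb$ for every closed substitution, Proposition~\ref{prop:same_var} ensures that every variable of $t$ relevant to the analysis above also occurs in $u$. An analogous case analysis on $u$---valid because $q$ likewise has no $\nil$-summands and no $\nil$-factors, and $n > \size(u)$---then yields a summand of $q$ that is $\sim_\rbb a \mathbin{\|} p_n$, establishing $\mathbb{P}_n$ for $q$. The main obstacle will be the parallel subcase together with the transfer: one must combine the unique-parallel-decomposition theorem with the fine combinatorial properties of $a \mathbin{\|} p_n$ (indecomposability of its factors, pairwise non-bisimilarity of its $a$-successors, and its visible depth) to ensure that no parallel skeleton of $t$ of size smaller than $n$ can realise $a \mathbin{\|} p_n$ without an essential contribution from a variable, and to perform the corresponding reconstruction on the $u$-side so as to recover in $q$ a top-level summand matching the one found in $p$.
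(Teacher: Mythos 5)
Your analysis of the left-hand side $p=\sigma(t)$ is essentially the paper's: split $t$ into summands $t_i$, dispose of the prefix case via the branching of $a \mathbin{\|} p_n$, handle $t_i = x$ via Proposition~\ref{prop:same_var}, and in the case $t_i = t' \mathbin{\|} t''$ use indecomposability (Lemmas~\ref{lem:pn_indecomposable} and~\ref{lem:apn_par_rbb}) together with $\size(t'') < n$ to find a variable $x$ inside $t''$ with $\sigma(x) \sim_\rbb \sum_{k=1}^{m} a a^{\le i_k}$ for some $m>1$ (the paper pins this down with Lemma~\ref{lem:ai_sum}: the relevant summand of $t''$ is literally the variable $x$). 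Up to this point the proposal is sound.

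The genuine gap is the transfer to $u$. Proposition~\ref{prop:same_var} only transfers variables that occur as \emph{summands} of $t$, whereas the variable $x$ you have located sits inside a parallel factor of a summand; the proposition does not apply, and even granting (by a depth argument) that $x$ occurs somewhere in $u$, mere occurrence tells you nothing about \emph{where} in $u$ it sits or how it contributes to the behaviour of $\sigma(u)$. Your ``analogous case analysis on $u$'' has nothing to get started on: a case analysis on the summands $u_j$ of $u$ can only be driven by the knowledge that some $\sigma(u_j)$ is rooted branching bisimilar to $a \mathbin{\|} p_n$, which is precisely the conclusion to be established. Note that $q \sim_\rbb a \mathbin{\|} p_n$ alone cannot force a summand of $q$ to be $\sim_\rbb a \mathbin{\|} p_n$ --- the right-hand side of $\e_n$ is a counterexample --- so some additional leverage from soundness is indispensable. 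The paper obtains it by evaluating the equation under a \emph{second} closed substitution $\sigma' = \sigma[x \mapsto \overline{a}(a \mathbin{\|} p_n)]$: this marks $x$ with the fresh co-action $\overline{a}$, so that $\sigma'(t)$ acquires a $\tau$-transition (a synchronisation) to a process branching bisimilar to $a \mathbin{\|} p_n$; soundness of $t \approx u$ under $\sigma'$ forces some $\sigma'(u_j) \trans[\tau] q' \sim_\bb a \mathbin{\|} p_n$, a depth count forces $x \in \var(u_j)$, and a further case analysis on $u_j$ (using Lemma~\ref{lem:closed2open_tau} to trace the origin of that $\tau$-step) shows $u_j = u' \mathbin{\|} u''$ with $\sigma(u') \sim_\rbb a$ and $\sigma(u'') \sim_\rbb p_n$, i.e., $\sigma(u_j)$ is the desired summand of $q$. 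Without this witnessing-substitution device, or an equivalent one, your argument does not go through.
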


Theorem~\ref{thm:rbb_preserves_property} shows the property $\mathbb{P}_n$ to be an invariant under provability from finite sound axiom systems.
As the left-hand side of equation $\e_n$, i.e., the term $a \mathbin{\|} p_n$, satisfies $\mathbb{P}_n$, whilst the right-hand side, i.e., the term $a p_n + \sum_{i = 2}^{n} a (a \mathbin{\|} a^{\le i})$, does not, we can conclude that the infinite collection of equations $\e_n$ ($n \ge 2$) cannot be derived from any finite, sound axiom system.
Hence, Theorem~\ref{thm:rbb_negative} follows.

%==================================================
% sec - positive
%===================================================

\section{Towards a positive result}
\label{sec:completeness_roadmap}

We now proceed to study the role of the auxiliary operators \emph{left merge} ($\textl$) and \emph{communication merge} ($\,\textc\,$) from \cite{BK84b} in the axiomatisation of parallel composition modulo $\sim_\rbb$.
We will show that by adding them to CCS we can obtain a complete axiomatisation of rooted branching bisimilarity over the new language.
This axiomatisation is finite if so is $\Acttau$.

We denote the language obtained by enriching CCS with $\textl$ and $\textc$ by $\ccslc$:
\begin{equation}
\tag{$\ccslc$}
t ::=\; \nil \;|\; 
x \;|\; 
\mu.t \;|\; 
t+t \;|\; 
t \mathbin{\|} t \;|\;
t \lmerge t \;|\;
t \cmerge t
\enspace ,
\end{equation}
where $x \in \Var$, and $\mu \in \Acttau$.
The SOS rules for the $\ccslc$ operators are given by the rules in Table~\ref{tab:sos_rules} plus those reported in Table~\ref{tab:sos_rules_ccslc}.

\begin{table}[t]
\begin{gather*}
\SOSrule{t \trans[\mu] t'}{t \lmerge u \trans[\mu] t' \mathbin{\|} u} 
\qquad
\SOSrule{t \trans[\alpha] t' \quad u \trans[\overline{\alpha}] u'}{t \cmerge u \trans[\tau] t' \mathbin{\|} u'}
\end{gather*}
\caption{\label{tab:sos_rules_ccslc} Additional SOS rules for $\ccslc$ operators ($\mu \in \Acttau$, $\alpha \in \Act\cup\overline{\Act}$).
} 
\end{table}

To obtain the desired completeness result, we consider the axiom system $\E_\rbb$ (see Table~\ref{tab:axioms_rbb} in Section~\ref{sec:completeness}), obtained by extending the complete axiom system for strong bisimilarity over $\ccslc$ from \cite{AFIL09} with axioms expressing the behaviour of $\textl$ and $\textc$ in the presence of $\tau$-actions (from \cite{BK85}), and with the suitable $\tau$-laws (from \cite{HM85,vGW96}) necessary to deal with rooted branching bisimilarity.
Then, we adjust the semantics of configurations given in Section~\ref{sec:decomposition} to the $\ccslc$ setting, and we use it to extend the definition of rooted branching bisimilarity to open $\ccslc$ terms (Definition~\ref{def:open_rbb}).
Usually, a behavioural equivalence $\sim$ is defined over processes and is then possibly extended to open terms by saying that $t \sim u$ if{f} $\sigma(t) \sim \sigma(u)$ for all closed substitutions $\sigma$.
However, we adopt the same approach of, e.g., \cite{Mi89b,vG93c,AvGFI96}, and present the definition of $\sim_\rbb$ directly over configurations.
We will show in Section~\ref{sec:configurations_bis} that the two approaches yield the same equivalence relation over terms (Theorem~\ref{thm:rbb_on_open}).
Finally, we apply the strategy used in \cite{AvGFI96} to obtain the completeness of the axiomatisation of prefix iteration with silent moves modulo rooted branching bisimilarity:
\begin{enumerate}
\item\label{step1} We identify \emph{normal forms} for $\ccslc$ terms (Definition~\ref{def:rbb_nf_ccslc}) and show that each term can be proven equal to a normal form using $\E_\rbb$ (Proposition~\ref{prop:rbb_nf_ccslc}).
\item\label{step2} We establish a relationship between $\sim_\bb$ and derivability in $\E_\rbb$ (Proposition~\ref{prop:rbb_provable_ccslc}).
\item\label{step3} We show that for all terms $t,u$, if $t \sim_\rbb u$, then $\E_\rbb \vdash t \approx u$ (Theorem~\ref{thm:rbb_complete_ccslc}).
\end{enumerate}

%===================================================
% sec - configurations bis
%====================================================

\section{Rooted branching bisimilarity over terms}
\label{sec:configurations_bis}

In this section we discuss the decomposition of the semantics of $\ccslc$ terms, and the extension of the definition of (rooted) branching bisimilarity to open $\ccslc$ terms.

The first step towards our completeness result consists in providing a semantics for open $\ccslc$ terms.
To this end, we need to extend the semantics of configurations given in Section~\ref{sec:decomposition}.
For the sake of readability, we present the syntax of $\ccslc$ configurations and the inference rules for variables and summations, even though they are identical to the corresponding ones presented in Section~\ref{sec:decomposition} for CCS.
However, we omit the explanations on the roles of labels $\ell$, $\rho$, and variables $x_\mu$, as those can be found in Section~\ref{sec:decomposition}.
In particular, the use of variables $x_\mu \in \Var_{\Acttau}$ (as explained in Example~\ref{ex:variabili_vi_odio}) remains unchanged.

\begin{definition}
[$\ccslc$ configuration]
The collection of {\sl $\ccslc$ configurations}, denoted by $\C_{\text{LC}}$, is given by:
\[
c :: = \; x_\mu \quad | \quad 
t \quad | \quad
c \mathbin{\|} c 
\enspace , \quad
\text{ where } t \text{ is a } \ccslc \text{ term, and } x_\mu \in \Var_{\Acttau}. 
\]
\end{definition}

\begin{table}[t]
\begin{gather*}
\scalebox{0.9}{$(a'_3)$}\; \SOSrule{t \trans[\ell]_\rho c}
{t \lmerge u \trans[\ell]_{\rho} c \mathbin{\|} u}
\\[.3cm]
\scalebox{0.9}{$(a'_4)$}\; \SOSrule{t \trans[(x)]_\alpha c \quad u \trans[(y)]_{\overline{\alpha}} c'}
{t \cmerge u \trans[(x,y)]_{\tau} c \mathbin{\|} c'}
\quad
%\qquad
\scalebox{0.9}{$(a'_5)$}\; \SOSrule{t \trans[(x)]_\alpha c \quad u \trans[\overline{\alpha}] u'}
{t \cmerge u \trans[(x)]_{\,\alpha,\tau} c \mathbin{\|} u'}
\quad
\scalebox{0.9}{$(a'_6)$}\; \SOSrule{t \trans[\alpha] t' \quad u \trans[(x)]_{\overline{\alpha}} c}
{t \cmerge u \trans[(x)]_{\,\overline{\alpha},\tau} t' \mathbin{\|} c}
\end{gather*}
\caption{\label{tab:ell_rules_ccslc} Inference rules for the transition relation $\trans[\ell]_\rho$ ($\mu \in \Acttau$, $\alpha \in \Act\cup\overline{\Act}$).}
\end{table}

The auxiliary transitions of the form $\trans[\ell]_{\rho}$ are formally defined via the inference rules in Table~\ref{tab:ell_rules_ccslc}, where we omitted the rules ($a'_1$) and ($a'_2$) for prefixing and choice (which are identical to, respectively, rules ($a_1$) and ($a_2$) in Table~\ref{tab:ell_rules}) the symmetric rules to ($a'_2$), ($a'_4$), ($a'_5$) and ($a'_6$), as well as the rules for $\mathbin{\|}$.
We remark that Lemma~\ref{lem:closed2open_alpha} and Lemma~\ref{lem:closed2open_tau} can be easily extended to $\ccslc$ to show how a transition $\sigma(t) \trans[\mu] p$ can stem from transitions of the $\ccslc$ term $t$ and of the process $\sigma(x)$, for $x \in \var(t)$.

\begin{table}[t]
\[
\scalebox{0.9}{$(c_1)$}\; \SOSrule{}{x_\mu \trans[x_\mu] x_{\mu}}
\quad
\scalebox{0.9}{$(c_2)$}\; \SOSrule{c_1 \trans[x_\mu] c_1'}
{c_1 \mathbin{\|} c_2 \trans[x_\mu] c_1' \mathbin{\|} c_2}
\quad
\scalebox{0.9}{$(c_3)$}\; \SOSrule{c_1 \trans[\mu] c_1'}{c_1 \mathbin{\|} c_2 \trans[\mu] c_1' \mathbin{\|} c_2}
\quad
\scalebox{0.9}{$(c_4)$}\; \SOSrule{c_1 \trans[\ell]_\rho c_1'}
{c_1 \mathbin{\|} c_2 \trans[\ell]_\rho c_1' \mathbin{\|} c_2}
\]
\caption{\label{tab:c_rules} Inference rules completing the operational semantics of $\ccslc$ configurations ($\mu \in \Acttau$).}
\end{table}

Since $\Var_{\Acttau}$ is disjoint from $\Var$, we also need to introduce auxiliary rules for the special configuration $x_\mu \in \Var_{\Acttau}$.
These are identified by a proper label $x_\mu$ on the transition and reported in Table~\ref{tab:c_rules} as rules ($c_1$) and ($c_2$).
To conclude our analysis of the decomposition of the semantics of terms, we then need to extend the transition relations $\trans[\mu]$ and $\trans[\ell]_\rho$ to configurations.
This is done by rules ($c_3$) and ($c_4$) in Table~\ref{tab:c_rules}, where their symmetric counterparts have been omitted.
Let $\xitrans$ range over the possible transitions over configurations, i.e., $\xitrans$ can be either $\trans[\mu]$, $\trans[\ell]_\rho$, or $\trans[x_\mu]$.
The operational semantics of $\ccslc$ configurations is then given by the LTS whose states are configurations in $\C_{\text{LC}}$, whose actions are in $\Acttau \cup \Var \cup \Var_{\Acttau}$, and whose transitions are those that are provable from the rules in Tables~\ref{tab:sos_rules},~\ref{tab:sos_rules_ccslc},~\ref{tab:ell_rules_ccslc}, and~\ref{tab:c_rules}.

Following the same approach of, e.g. \cite{Mi89b,vG93c,AvGFI96}, we now present the definitions of branching and rooted branching bisimulation equivalences directly over configurations.

\begin{definition}
[Branching bisimulation over configurations]
\label{def:open_bb}
A symmetric relation $\rel$ over $\C_{\text{LC}}$ is a \emph{branching bisimulation} if{f} whenever $c_1 \rel c_2$, if $c_1 \xitrans c_1'$ then:
\begin{itemize}
\item either $\xitrans\; =\; \trans[\tau]$ and $c_1' \rel c_2$,
\item or $c_2 \trans[\varepsilon] c_2'' \xitrans c_2'$ for some $c_2'',c_2'$ such that $c_1 \rel c_2''$ and $c_1' \rel c_2'$.
\end{itemize}
Two configurations $c_1,c_2$ are \emph{branching bisimilar}, denoted by $c_1 \sim_\bb c_2$, if{f} there exists a branching bisimulation $\rel$ such that $c_1 \rel c_2$.
\end{definition}

The definition of $\sim_\bb$ given in Definition~\ref{def:open_bb} yields the same equivalence relation over configurations that we would have obtained with the standard approach, i.e., by defining $c_1 \sim_\bb c_2$ if{f} $\sigma(c_1) \sim_\bb \sigma(c_2)$ for all closed substitutions $\sigma$.

\begin{restatable}{theorem}{thmbbonopen}
\label{thm:bb_on_open}
For all configurations $c_1, c_2 \in \C_{\text{LC}}$ it holds that $c_1 \sim_\bb c_2$ if{f} $\sigma(c_1) \sim_\bb \sigma(c_2)$ for all closed substitutions $\sigma$.
\end{restatable}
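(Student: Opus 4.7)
I prove the two directions of the biconditional by relating the configuration LTS of Tables~\ref{tab:ell_rules_ccslc} and~\ref{tab:c_rules} to the closed LTS via the decomposition lemmas of Section~\ref{sec:decomposition}, which extend from CCS to $\ccslc$ by routine induction on SOS derivations to cover the new rules $(a'_3)$--$(a'_6)$ and the configuration-specific rules $(c_1)$--$(c_4)$.

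For the forward direction ($\Rightarrow$), assume $c_1 \sim_\bb c_2$ in the configuration sense. I show that
\[
\mathcal{R} \;=\; \{(\sigma(c_1),\sigma(c_2)) \mid c_1 \sim_\bb c_2,\ \sigma \text{ closed}\}
\]
is a branching bisimulation over processes. Given $\sigma(c_1) \trans[\mu] p$, the extensions of Lemma~\ref{lem:closed2open_tau} and its observable-action analogue (see Lemma~\ref{lem:closed2open_alpha} in the appendix) partition this transition into one of the following cases: a direct move $c_1 \trans[\mu] c_1'$ with $\sigma(c_1')=p$; a variable trigger $c_1 \trans[(x)]_\mu c_1'$ combined with $\sigma(x) \trans[\mu] q$; a silent trigger $c_1 \trans[(x)]_{\alpha,\tau} c_1'$; a two-variable communication $c_1 \trans[(x,y)]_\tau c_1'$; or a transition $c_1 \trans[x_\mu] c_1'$ originating from a subscripted atom via rule $(c_1)$ or $(c_2)$. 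In each case, $c_1 \sim_\bb c_2$ in the configuration sense supplies a matching sequence $c_2 \trans[\varepsilon] c_2'' \xitrans c_2'$ (or a stuttering match when $\xitrans = \trans[\tau]$); the $\tau$-prefix lifts to $\sigma(c_2)$ via Lemma~\ref{lem:substitution}, and the final labelled step lifts via Lemma~\ref{lem:var_to_term}, yielding a closed match with residual in $\mathcal{R}$.

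For the reverse direction ($\Leftarrow$), assume $\sigma(c_1) \sim_\bb \sigma(c_2)$ for every closed $\sigma$, and let
\[
\mathcal{R}' \;=\; \{(c_1,c_2) \mid \forall \sigma \text{ closed},\ \sigma(c_1) \sim_\bb \sigma(c_2)\}.
\]
Given $c_1 \xitrans c_1'$, the challenge is to recover a matching configuration move of $c_2$ from closed-level information. For this I instantiate a single \emph{discriminating} closed substitution $\sigma_*$ whose variable images carry signature initial behaviours encoding both the originating variable and the triggering action. When $\Acttau$ is infinite one may take $\sigma_*(x) = \sum_{\mu} \mu.f_{x,\mu}.\nil$ with the $f_{x,\mu}$ fresh, pairwise distinct, and pairwise non-complementary; when $\Acttau$ is finite, distinctive traces of some $a \in \Act$ of lengths exceeding the depth of $c_1, c_2$ play the same role. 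Lifting $c_1 \xitrans c_1'$ to a closed move of $\sigma_*(c_1)$ via Lemma~\ref{lem:substitution} or Lemma~\ref{lem:var_to_term}, the hypothesis yields a matching closed response in $\sigma_*(c_2)$; the discriminating design of $\sigma_*$ lets the decomposition lemmas identify this response as originating from a unique configuration move $c_2 \trans[\varepsilon] c_2'' \xitrans c_2'$ with the correct label, whose residual is $\mathcal{R}'$-related to $c_1'$ by the universal hypothesis.

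The principal obstacle is the correctness of $\sigma_*$ in the reverse direction: it must simultaneously (a) distinguish variable-triggered moves from the direct transitions of $c_1, c_2$, (b) distinguish different variables and different triggering actions from one another, and (c) avoid spurious synchronisations --- the signature actions $f_{x,\mu}$ must not act as co-actions for each other nor for any action occurring in $c_1, c_2$. Non-complementarity of the signatures handles (c), while their pairwise distinctness across $(x,\mu)$ yields (a) and (b). A minor parallel concern is the treatment of subscripted atoms $x_\mu$, which $\sigma_*$ must also tag distinctively so that moves arising via rule $(c_1)$ can be recovered from the closed semantics; the remainder is bookkeeping with the SOS rules of Tables~\ref{tab:sos_rules}--\ref{tab:c_rules}.
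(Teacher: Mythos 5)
Your forward direction is essentially the paper's (lift configuration moves to closed moves and back via the substitution/decomposition lemmas), and is fine. The reverse direction, however, has a genuine gap in the closure of your candidate relation under the bisimulation game. You define $\mathcal{R}'$ as the set of pairs whose \emph{every} closed instance is branching bisimilar, but after playing one round of the game you only learn that the $\sigma_*$-instances of the residuals $c_1'$ and $c_2'$ are bisimilar; the ``universal hypothesis'' you invoke concerns $(c_1,c_2)$, not $(c_1',c_2')$, so you cannot conclude $(c_1',c_2')\in\mathcal{R}'$. Repairing this forces you to take the candidate relation to be $\{(c,c')\mid \sigma_*(c)\sim_\bb\sigma_*(c')\}$ restricted to derivatives, which is closed under transitions by construction but places the entire burden on $\sigma_*$ being injective enough that every closed move of every derivative decomposes into a \emph{unique} configuration move with the correct label --- and that is exactly the part your sketch does not establish.

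The discriminating substitution itself is also problematic in general. The action set $\Act$ is fixed in advance and may be too small to supply fresh, pairwise non-complementary signature actions $f_{x,\mu}$, so your fallback --- powers of a single $a$ of distinct large lengths --- is the case that must carry the whole proof. But once \emph{all} variables are replaced by $a$-traces simultaneously, recovering from the residual which variable (and which occurrence, with which triggering action) produced a given closed transition is delicate: depth is additive under $\mathbin{\|}$, so a one-step decrease in depth does not locate the component that moved, and if $\overline{a}$ occurs in $c_1,c_2$ then synchronisations between the substituted traces and the surrounding context cannot be excluded by ``non-complementarity''. The paper sidesteps both difficulties by inducting on $|\var(c_1)\cup\var(c_2)|$ and substituting only one variable $x$ (together with one subscripted variable $y_\mu$) per step, using $a$-traces of two distinct lengths chosen, via Lemma~\ref{lem:der_finite}, to be non-bisimilar to every derivative of $c_1$ and $c_2$; the remaining variables keep their distinguishing labels $(x)_\mu$ and $x_\mu$ in the configuration LTS, so only a single trace length ever has to be separated from the rest, and the residual relation is defined on the partially substituted configurations, to which the induction hypothesis applies. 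If you insist on a one-shot substitution you must reconstruct this separation property for all variables simultaneously, which your proposal does not do.
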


The approach for $\sim_\bb$ can be extended in a straightforward manner to $\sim_\rbb$.

\begin{definition}
[Rooted branching bisimilarity over configurations]
\label{def:open_rbb}
Let $c_1,c_2 \in \C_{\text{LC}}$.
We say that $c_1$ and $c_2$ are \emph{rooted branching bisimilar}, denoted by $c_1 \sim_\rbb c_2$, if{f}:
\begin{itemize}
\item if $c_1 \xitrans c_1'$ then $c_2 \xitrans c_2'$ for some $c_2'$ such that $c_1' \sim_\bb c_2'$;
\item if $c_2 \xitrans c_2'$ then $c_1 \xitrans c_1'$ for some $c_1'$ such that $c_1' \sim_\bb c_2'$.
\end{itemize}
\end{definition}

\begin{theorem}
\label{thm:rbb_on_open}
For all $c_1, c_2 \in \C_{\text{LC}}$ it holds that $c_1 \sim_\rbb c_2$ if{f} $\sigma(c_1) \sim_\rbb \sigma(c_2)$ for all closed substitutions $\sigma$.
\end{theorem}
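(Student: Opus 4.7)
Proof plan:

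Both directions proceed by decomposing transitions of $\sigma(c_i)$ into configuration-level transitions of $c_i$ and then using Theorem~\ref{thm:bb_on_open} to transport the continuation relation between closed instances and open configurations. I will rely throughout on the $\ccslc$ extensions of Lemmas~\ref{lem:closed2open_alpha} and~\ref{lem:closed2open_tau} (forward decomposition) and of Lemma~\ref{lem:var_to_term} (backward lifting) to configurations; both extensions follow by a straightforward induction on the derivation of the transition, since the SOS rules for $\textl$ and $\textc$ in Table~\ref{tab:sos_rules_ccslc} mirror those for $\mathbin{\|}$, and the new rules $(c_1)$--$(c_4)$ in Table~\ref{tab:c_rules} are structurally analogous.

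For the forward implication, fix $c_1 \sim_\rbb c_2$ and a closed substitution $\sigma$, and consider a transition $\sigma(c_1) \trans[\mu] p$. The decomposition lemma places this transition in one of four cases: an intrinsic $c_1 \trans[\mu] c_1'$ with $\sigma(c_1')=p$, or an auxiliary $c_1 \trans[\ell]_\rho c_1'$ triggered by one or two variables of $c_1$ whose closed images perform the relevant subtransitions. In each case, Definition~\ref{def:open_rbb} applied to $c_1 \sim_\rbb c_2$ supplies a matching $c_2 \xitrans c_2'$ of the same kind with $c_1' \sim_\bb c_2'$ as open configurations. The extended Lemma~\ref{lem:var_to_term} lifts $c_2 \xitrans c_2'$ to a concrete transition $\sigma(c_2) \trans[\mu] p'$, while Theorem~\ref{thm:bb_on_open}, applied to the extension $\sigma'$ of $\sigma$ that binds the fresh variables $x_\mu$, $y_{\overline{\alpha}}$ to the processes supplied by Lemma~\ref{lem:var_to_term}, yields $p \sim_\bb p'$. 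Symmetric reasoning handles the transitions of $\sigma(c_2)$.

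For the backward implication, I argue contrapositively: if $c_1 \not\sim_\rbb c_2$, I produce a closed $\sigma$ with $\sigma(c_1) \not\sim_\rbb \sigma(c_2)$. Without loss of generality there exists $c_1 \xitrans c_1'$ for which no $c_2 \xitrans c_2'$ of the same kind satisfies $c_1' \sim_\bb c_2'$. Since the auxiliary LTS is finitely branching, only finitely many candidates $c_2^{(1)},\dots,c_2^{(k)}$ compete, and Theorem~\ref{thm:bb_on_open} supplies, for each $i$, a closed $\sigma_i$ with $\sigma_i(c_1') \not\sim_\bb \sigma_i(c_2^{(i)})$. I then construct a single distinguishing substitution $\sigma^\star$ by combining the $\sigma_i$ behind fresh guard actions: letting $A \subset \Acttau$ be the finite set of actions appearing in $c_1, c_2$ and $b^i \in \Acttau$ be fresh, take $\sigma^\star(x) = \sum_{\mu \in A \cup \{\tau\}} \mu.\bigl(\sum_i b^i.\sigma_i(x)\bigr)$ (and similarly for the tagged variables in $\Var_{\Acttau}$). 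The extended Lemma~\ref{lem:var_to_term} lifts $c_1 \xitrans c_1'$ to a transition of $\sigma^\star(c_1)$ whose continuation, by subsequently firing a guard $b^i$, can be steered onto $\sigma_i(c_1')$; meanwhile, the extended Lemma~\ref{lem:closed2open_tau} forces any matching move of $\sigma^\star(c_2)$ to come either from some $c_2^{(i)}$ or from a variable trigger, and in every such case the continuation projects, after the same $b^i$-guard, onto $\sigma_i(c_2^{(i)})$ which is not $\sim_\bb$-related to $\sigma_i(c_1')$. This contradicts $\sigma^\star(c_1) \sim_\rbb \sigma^\star(c_2)$.

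The main obstacle is this last construction. The rooted setting blocks a direct appeal to Theorem~\ref{thm:bb_on_open} on each continuation in isolation, because for different $\sigma$ the matching $\mu$-transition of $\sigma(c_2)$ may a priori route through different $c_2^{(i)}$, so no single $c_2'$ is handed to us by the hypothesis. The combination of finite branching of the auxiliary LTS on configurations with a single ``universal'' substitution indexed by the finite action set of $c_1,c_2$ and by fresh guards is what lets us force a single canonical $c_2'$ and derive the desired contradiction.
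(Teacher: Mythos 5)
Your forward direction is fine and is essentially the argument the paper intends: decompose the first transition of $\sigma(c_1)$, match it at the configuration level via Definition~\ref{def:open_rbb}, lift the match with the configuration analogue of Lemma~\ref{lem:var_to_term}, and invoke Theorem~\ref{thm:bb_on_open} on the continuations. The backward direction, however, has a genuine gap in the combination step. After the root move, your substitution $\sigma^\star$ does \emph{not} put you in a position to compare $\sigma_i(c_1')$ with $\sigma_i(c_2^{(i)})$. If $c_1 \xitrans c_1'$ is triggered by a variable $x$, the continuation of $\sigma^\star(c_1)$ is $\sigma^\star[x_\mu \mapsto \sum_i b^i.\sigma_i(x)](c_1')$: firing a guard $b^i$ replaces only the single component at position $x_\mu$ by (part of) $\sigma_i$'s value, while every other variable of $c_1'$ remains instantiated by $\sigma^\star$. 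The process you reach is a hybrid of $\sigma^\star$ and $\sigma_i$, not $\sigma_i(c_1')$, so the hypothesis $\sigma_i(c_1') \not\sim_\bb \sigma_i(c_2^{(i)})$ says nothing about it; for an intrinsic transition $c_1 \trans[\mu] c_1'$ the guards are buried under the $\mu$-prefixes inside each variable's image and there is no $b^i$ to fire at all. A second, independent problem is the appeal to fresh actions $b^i$: the paper assumes only that $\Act$ is non-empty, so fresh actions need not exist (note that the paper's own proof of Theorem~\ref{thm:bb_on_open} deliberately avoids this, separating variables by depth, $\alpha^{n+1}$ versus $\alpha^{n+2}$, rather than by fresh names).

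The paper itself gives no separate proof; it states that the argument is ``carried out in a similar fashion'' to Theorem~\ref{thm:bb_on_open}, and that proof resolves your ``which $c_2^{(i)}$ does the match route through'' obstacle differently: it fixes a \emph{single} canonical substitution up front, choosing $n$ with $\alpha^n \not\sim_\bb c$ for every $c \in \der(c_1) \cup \der(c_2)$ and setting $\sigma_0(x) = \alpha^{n+2}$, $\sigma_0(y_\mu) = \alpha^{n+1}$. One instantiates the hypothesis at $\sigma_0$ only, decomposes the matching root transition of $\sigma_0(c_2)$ back to a configuration transition $c_2 \xitrans c_2'$ with the same label (the depth gap is what forces the labels and triggering variables to agree), and then concludes $c_1' \sim_\bb c_2'$ from $\sigma_0(c_1') \sim_\bb \sigma_0(c_2')$ via the relation on derivatives used in the inductive step of the proof of Theorem~\ref{thm:bb_on_open}. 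In other words, the missing ingredient in your argument is a proof that some one substitution is already a \emph{universal} distinguisher on all of $\der(c_1)\cup\der(c_2)$; the finite-branching-plus-guards combination as you set it up does not establish that.
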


%=================================================
% sec - completeness
%=================================================

\section{The equational basis}
\label{sec:completeness}

We now present the complete axiomatisation for rooted branching bisimilarity over $\ccslc$.

\begin{table*}[t]
\setlength{\tabcolsep}{15pt}
\centering
\begin{tabular}{llll}
\multicolumn{3}{l}{Equational basis modulo strong bisimilarity: $\E_\B$} \\[.2cm]
\multicolumn{2}{l}{$\scalebox{0.85}{A0}\quad x + \nil \approx x$}
&
\multicolumn{2}{l}{$\scalebox{0.85}{C0}\quad \nil \cmerge x \approx \nil$}\\
\multicolumn{2}{l}{$\scalebox{0.85}{A1}\quad x+y \approx y+x$}
&
\multicolumn{2}{l}{$\scalebox{0.85}{C1}\quad x \cmerge y \approx y \cmerge x$}\\
\multicolumn{2}{l}{$\scalebox{0.85}{A2}\quad (x+y)+z \approx x+(y+z)$} 
&
\multicolumn{2}{l}{$\scalebox{0.85}{C2}\quad (x \cmerge y) \cmerge z \approx x \cmerge (y \cmerge z)$}\\
\multicolumn{2}{l}{$\scalebox{0.85}{A3}\quad x + x \approx x$}
& 
\multicolumn{2}{l}{$\scalebox{0.85}{C3}\quad (x + y) \cmerge z \approx x \cmerge z + y \cmerge z$}\\
& &
\multicolumn{2}{l}{$\scalebox{0.85}{C4}\quad \alpha x \cmerge \beta y \approx \tau(x \mathbin{\|} y)$ \quad if $\alpha = \overline{\beta}$}\\
\multicolumn{2}{l}{$\scalebox{0.85}{L0}\quad \nil \lmerge x \approx \nil$}
&
\multicolumn{2}{l}{$\scalebox{0.85}{C5}\quad \alpha x \cmerge \beta y \approx \nil$ \quad if $\alpha \neq \overline{\beta}$}\\
\multicolumn{2}{l}{$\scalebox{0.85}{L1}\quad \mu x \lmerge y \approx \mu (x \mathbin{\|} y)$}
&
\multicolumn{2}{l}{$\scalebox{0.85}{C6}\quad (x \lmerge y) \cmerge z \approx (x \cmerge z) \lmerge y$} \\
\multicolumn{2}{l}{$\scalebox{0.85}{L2}\quad (x \lmerge y) \lmerge z \approx x \lmerge (y \mathbin{\|} z)$}
&
\multicolumn{2}{l}{$\scalebox{0.85}{C7}\quad x \cmerge y \cmerge z \approx \nil$}\\
\multicolumn{2}{l}{$\scalebox{0.85}{L3}\quad x \lmerge \nil \approx x$}
\\
\multicolumn{2}{l}{$\scalebox{0.85}{L4}\quad (x + y) \lmerge z \approx x \lmerge z + y \lmerge z$}
& \multicolumn{2}{l}{$\scalebox{0.85}{P}\quad x \mathbin{\|} y \approx x \lmerge y + y \lmerge x + x \cmerge y$} \\[.2cm]
\hline
\hline\\
\multicolumn{3}{l}{Additional axioms for $\sim_\rbb$: $\E_\rbb = \E_\B \cup \{TB,TL\}$} \\[.2cm]
\multicolumn{2}{l}{$\scalebox{0.85}{TB}\quad \mu (\tau(x+y) + y) \approx \mu (x +y)$}
&
\multicolumn{2}{l}{$\scalebox{0.85}{TL}\quad x \lmerge \tau y \approx x \lmerge y$} 
\\[.2cm]
\hline
\hline\\
\multicolumn{2}{l}{Derivable axioms} \\[.2cm]
\multicolumn{2}{l}{$\scalebox{0.85}{D1}\quad x \mathbin{\|} y \approx y \mathbin{\|} x$} 
&
\multicolumn{2}{l}{$\scalebox{0.85}{DT1}\quad \mu \tau x \approx \mu x$}\\
\multicolumn{2}{l}{$\scalebox{0.85}{D2}\quad (x \mathbin{\|} y) \mathbin{\|} z \approx x \mathbin{\|} (y \mathbin{\|} z)$}
&
\multicolumn{2}{l}{$\scalebox{0.85}{DT2}\quad x \lmerge (\tau(y+z) + y) \approx x \lmerge (y + z)$}\\
\multicolumn{2}{l}{$\scalebox{0.85}{D3}\quad (x \lmerge y) \cmerge (z \lmerge w) \approx (x \cmerge z) \lmerge (y \mathbin{\|} w)$}
&
\multicolumn{2}{l}{$\scalebox{0.85}{DT3}\quad \tau x \cmerge y \approx \nil$}
\\
\multicolumn{2}{l}{$\scalebox{0.85}{D4}\quad x \mathbin{\|} \nil \approx x$}
\end{tabular}
\caption{\label{tab:axioms_rbb} Equational basis modulo rooted branching bisimilarity.}
\end{table*}

In \cite{vGW96} it was proved that if we consider the fragment BCCS of CCS (i.e., the fragment consisting only of $\nil$, variables, prefixing, and choice), then a ground-complete axiomatisation of rooted branching bisimilarity over BCCS is given by $\E_0 \cup \{\text{TB}\}$, where $\E_0 = \{\text{A0,A1,A2,A3}\}$ from Table~\ref{tab:axioms_b} (also reported in Table~\ref{tab:axioms_rbb}), and axiom TB is in Table~\ref{tab:axioms_rbb}.
Informally, TB reflects that if executing a $\tau$-step does not discard any observable behaviour, then it is redundant.
In \cite{AFIL09} it was proved that the axiom system $\E_\B$ given in Table~\ref{tab:axioms_rbb}, is a complete axiomatisation of bisimilarity over $\ccslc$.
Starting from these works, we now study a complete axiomatisation for $\sim_\rbb$.
Our aim is to show that the axiom system $\E_\rbb = \E_\B \cup \{\text{TB,TL}\}$ presented in Table~\ref{tab:axioms_rbb} is a \emph{complete axiomatisation of rooted branching bisimilarity} over $\ccslc$.

If executing a $\tau$-move does not resolve a choice within a parallel component,
then it will also not resolve a choice of the parallel composition; axiom TL expresses this property of rooted branching bisimilarity for left merge.
Interestingly, by combining TL and TB, it is possible to derive, as shown below, equation DT2 in Table~\ref{tab:axioms_rbb}, which is the equation for the left merge corresponding to TB.
\[
x \lmerge (\tau(y+z) +y) 
\stackrel{\scalebox{0.75}{(TL)}}{\approx{}}
\;
x \lmerge \tau (\tau(y+z) +y) 
\stackrel{\scalebox{0.75}{(TB)}\phantom{1}}{\approx{}}
\;
x \lmerge \tau (y+z) 
\stackrel{\scalebox{0.75}{(TL)}}{\approx{}}
\;
x \lmerge (y+z).
\]
In Table~\ref{tab:axioms_rbb} we report also some other equations that can be derived from $\E_\rbb$, and that are useful in the technical development of our results.
We refer the reader interested in the derivation proofs of D1--D3 and DT3 to \cite{AFIL09}.
Notice that DT1 corresponds essentially to the substitution instance of TB in which $y$ is mapped to $\nil$.

First of all, it is immediate to prove the soundness of $\E_{\rbb}$ modulo $\sim_\rbb$.

\begin{theorem}
[Soundness]
The axiom system $\E_\rbb$ is sound modulo $\sim_\rbb$ over $\ccslc$.
\end{theorem}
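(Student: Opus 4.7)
The plan is to establish soundness in the standard way, by verifying that each axiom in $\E_\rbb$ is sound modulo $\sim_\rbb$ and then invoking the fact that $\sim_\rbb$ is a congruence over the $\ccslc$ operators in order to lift soundness to all derivable equations.

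First I would recall that $\E_\B$ has already been shown to be sound modulo strong bisimilarity $\sim$ over $\ccslc$ in \cite{AFIL09}. Since $\sim\,\subseteq\,\sim_\rbb$ (any strong bisimulation trivially satisfies the root condition and, after the first step, is itself a branching bisimulation), every instance of an axiom in $\E_\B$ that is sound modulo $\sim$ is automatically sound modulo $\sim_\rbb$. Hence, no further work is required for the axioms A0--A3, L0--L4, C0--C7, and P.

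Next I would verify the two genuinely weak axioms TB and TL directly. For TB, namely $\mu(\tau(x+y)+y)\approx\mu(x+y)$, it suffices to exhibit, for every closed substitution $\sigma$, a rooted branching bisimulation between the two closed instances. The roots match because both processes begin with a single $\mu$-transition; after the $\mu$, the derivatives are $\tau(\sigma(x)+\sigma(y))+\sigma(y)$ and $\sigma(x)+\sigma(y)$. The relation that pairs these two derivatives, pairs $\sigma(x)+\sigma(y)$ with itself, and includes the identity otherwise, is a branching bisimulation: a $\tau$-move from the left derivative to $\sigma(x)+\sigma(y)$ is matched by the right derivative staying put (stuttering), and any $\mu'$-move of $\sigma(y)$ from the left derivative is matched by the corresponding $\mu'$-move from the right derivative, while any $\mu'$-move from the right is matched on the left via $\tau(\sigma(x)+\sigma(y))\trans[\tau]\sigma(x)+\sigma(y)\trans[\mu']\cdot$, using the clause of Definition~\ref{def:bb} that allows a preceding silent sequence. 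For TL, namely $x\lmerge\tau y\approx x\lmerge y$, I would observe that the SOS rule for $\lmerge$ requires the left operand to perform the first action, so both $\sigma(x)\lmerge\tau\sigma(y)$ and $\sigma(x)\lmerge\sigma(y)$ have transitions exactly of the form $\sigma(x)\trans[\mu]p'$ yielding $p'\mathbin{\|}\tau\sigma(y)$ and $p'\mathbin{\|}\sigma(y)$, respectively. Since $\tau\sigma(y)\sim_\bb\sigma(y)$ (any process is branching bisimilar to its $\tau$-prefix) and $\sim_\bb$ is preserved by $\mathbin{\|}$, the pairs of derivatives are in $\sim_\bb$, establishing the root condition.

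Finally, I would close soundness under the rules of equational logic of Table~\ref{tab:equational_logic}: reflexivity, symmetry and transitivity are immediate from the fact that $\sim_\rbb$ is an equivalence on $\Proc$; the substitution rule $e_4$ is handled because soundness is defined as closure under all closed substitutions; and the congruence rules for $\mu.\cdot$, $+$ and $\mathbin{\|}$ hold because $\sim_\rbb$ is already known to be a congruence for the CCS operators (as recalled after Definition~\ref{def:rbb}). For the two additional operators $\lmerge$ and $\cmerge$, I would verify congruence directly using the SOS rules in Table~\ref{tab:sos_rules_ccslc}: if $p_i\sim_\rbb q_i$ for $i=1,2$, then any initial transition of $p_1\lmerge p_2$ arises from an initial transition $p_1\trans[\mu]p_1'$, matched by $q_1\trans[\mu]q_1'$ with $p_1'\sim_\bb q_1'$; the resulting derivatives $p_1'\mathbin{\|}p_2$ and $q_1'\mathbin{\|}q_2$ satisfy $p_1'\mathbin{\|}p_2\sim_\bb q_1'\mathbin{\|}q_2$ because $\sim_\bb$ is compositional with respect to $\mathbin{\|}$ (the simple BB cool rule format), and $\sim_\rbb\subseteq\sim_\bb$ gives $p_2\sim_\bb q_2$. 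The argument for $\cmerge$ is analogous, combining the two initial synchronising transitions.

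The only step that requires any real care is the verification of TB, where one has to use the stuttering clause of branching bisimulation to match a $\mu$-move from the right-hand side by a $\tau\mu$-sequence on the left, which is precisely the point of branching bisimulation; everything else is routine checking against the SOS rules.
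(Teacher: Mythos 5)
Your proposal is correct and is exactly the routine verification that the paper declares ``immediate'' and omits: soundness of the strong-bisimilarity axioms via $\sim\,\subseteq\,\sim_\rbb$, direct checks of TB and TL, and closure under the rules of equational logic using that $\sim_\rbb$ is a congruence (including the direct congruence check for $\lmerge$ and $\cmerge$). No gaps; the one place needing care (matching a move of $\sigma(x)$ on the right of TB by a $\tau\mu'$-sequence on the left, with the intermediate state related to the challenger) is handled correctly.
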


To obtain the desired completeness result, we apply the same strategy used in \cite{AvGFI96} that consists in the three steps discussed in Section~\ref{sec:completeness_roadmap}.

Let us proceed to the first step: identifying normal forms for $\ccslc$ terms.

\begin{definition}
[Normal forms]
\label{def:rbb_nf_ccslc}
The set of normal forms over $\ccslc$ is generated by the following grammar:
\begin{align*}
S ::={} & 
\mu.N \quad | \quad
x \lmerge N \quad | \quad
(x \cmerge \alpha) \lmerge N \quad | \quad
(x \cmerge y) \lmerge N \\
N ::={} &
\nil \quad | \quad
S \quad | \quad
N + N
\end{align*}
where $x,y \in \Var$, $\mu \in \Acttau$ and $\alpha \in \Act \cup \overline{\Act}$.
Normal forms generated by $S$ are also called \emph{simple normal forms} and are characterised by the fact that they do not have $+$ as head operator.
\end{definition}

\begin{restatable}{proposition}{proprbbnfccslc}
\label{prop:rbb_nf_ccslc}
For every term $t$ there is a normal form $N$ such that $\E_\rbb \vdash t \approx N$.
\end{restatable}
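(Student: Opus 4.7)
The plan is to proceed by structural induction on $t$. The base and shallow cases are immediate: $\nil$ is already a normal form; the variable $x$ is handled by applying axiom L3 in reverse to obtain $x \approx x \lmerge \nil$, which matches the simple normal form pattern $x \lmerge N$ with $N = \nil$; the prefix $\mu.u$ is reduced to $\mu.N_u$ by the inductive hypothesis applied to $u$, which is a simple normal form; and a sum $u+v$ is handled by the inductive hypothesis applied to both components and the clause $N + N$ of the grammar.

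The substantive work concerns the three parallel-type operators $\mathbin{\|}$, $\lmerge$, $\cmerge$. I would establish three mutually dependent auxiliary lemmas, stating that for all normal forms $M$ and $N$ there exist normal forms provably equal from $\E_\rbb$ to $M \mathbin{\|} N$, $M \lmerge N$, and $M \cmerge N$ respectively. Given these lemmas, the main induction goes through by reducing $u$ and $v$ to normal forms via the inductive hypothesis and invoking the appropriate lemma; the $\mathbin{\|}$ case first expands $u \mathbin{\|} v$ into $u \lmerge v + v \lmerge u + u \cmerge v$ via axiom P. Inside the $\lmerge$ lemma, I would apply L4 to distribute $\lmerge$ over the summands of $M$, then dispatch each simple normal form: $\mu.M' \lmerge N$ becomes $\mu.(M' \mathbin{\|} N)$ via L1 (followed by a recursive call to normalise $M' \mathbin{\|} N$), and each of $x \lmerge M'$, $(x \cmerge \alpha) \lmerge M'$, $(x \cmerge y) \lmerge M'$ left-merged with $N$ collapses by L2 to the desired simple normal form once the inner $M' \mathbin{\|} N$ has been normalised. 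Inside the $\cmerge$ lemma, C3 together with C1 distributes $\cmerge$ over the summands of both sides, after which each pair of simple normal forms is resolved: two prefixed forms by C4, C5, or DT3; two left-merge forms by D3; mixed forms by first rewriting $\mu.M' \approx \mu \lmerge M'$ (using L1 in reverse, after an application of the derivable D4) to expose the D3 pattern; and any pair producing a nested $\cmerge$ in a position incompatible with the normal form grammar is collapsed to $\nil$ via C7 and L0.

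The main obstacle is termination of the three mutually recursive auxiliary lemmas: normalising $M \lmerge N$ calls back into the $\mathbin{\|}$ lemma for $M' \mathbin{\|} N$, whose expansion via P recreates $\lmerge$ and $\cmerge$ occurrences, so a naive measure based on subterm size will not strictly decrease. The remedy is to choose a well-founded measure that ignores everything sitting beneath a prefix, since L1 transforms $\mu.M' \lmerge N$ into $\mu.(M' \mathbin{\|} N)$ and thereby pushes the offending parallel composition strictly under a $\mu$; concretely a lexicographic measure such as (number of merge-type operator occurrences not guarded by a prefix, total size outside prefixes) decreases on every recursive call, because L1 removes the inner parallel composition from the unguarded count while L2 and D3 either shrink the left operand or reduce the merge-operator count without introducing new unguarded merges. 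Once termination has been checked against this measure, every remaining step is a routine case analysis on the four shapes of simple normal forms, using only the axioms listed in Table~\ref{tab:axioms_rbb} together with the derivable equations D1--D4.
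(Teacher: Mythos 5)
Your overall route is the one the paper intends: the paper in fact omits this proof entirely, deferring to the size-induction argument of Lemma~4.4 in \cite{AFIL09}, and your structural induction together with auxiliary normalisation claims for $\mathbin{\|}$, $\lmerge$ and $\cmerge$ is precisely that argument spelled out. Your case analysis is complete and the axioms you invoke at each step (L3 read backwards for a bare variable, P to expand $\mathbin{\|}$, L4/L1/L2 for the left merge, C1/C3 to distribute the communication merge, C4/C5/DT3 for pairs of prefixed forms, D3 for pairs of left-merge forms, L1 read backwards with D4 for the mixed pairs, and C6/C7/L0 to kill nested communication merges) are the right ones.

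The one genuine gap is the termination measure. The pair (number of merge-type operators not guarded by a prefix, size outside prefixes) does not strictly decrease on your recursive calls. Concretely, in the step $\mu.M' \lmerge N \approx \mu.(M' \mathbin{\|} N)$ you must still recursively normalise $M' \mathbin{\|} N$, because the grammar requires the body of a prefix to be a normal form; that recursive call is made on the term $M' \mathbin{\|} N$, whose unguarded-merge count is $1 + u(M') + u(N)$ (writing $u(\cdot)$ for that count), which exceeds the count $1 + u(N)$ of the originating call $\mu.M' \lmerge N$ whenever $M'$ itself contains unguarded merges --- and $M' = x \lmerge \nil$ is the typical shape of a normal form, so this happens immediately. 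If you instead read the measure on the whole term being rewritten, then all of the work performed underneath the newly created prefix is invisible to both components, and the measure is constant exactly where termination still has to be argued. The repair is simpler than what you propose: measure each auxiliary call on the pair of normal forms being combined by $\bigl(\size(M)+\size(N),\, w\bigr)$, where $w=1$ for $\mathbin{\|}$ and $w=0$ for $\lmerge$ and $\cmerge$. The expansion by P preserves the first component and strictly decreases the second; the distributions L4 and C3 only pass to strictly smaller summands; and every leaf case of your analysis recurses into some $M'' \mathbin{\|} N''$ obtained by stripping at least one prefix or left-merge symbol from an argument, so the first component strictly decreases. With that measure substituted, your proof goes through and coincides with the argument the paper imports from \cite{AFIL09}.
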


We can then proceed to prove that branching bisimilar terms can be proven equal to rooted branching bisimilar terms using the axiom system $\E_\rbb$.

\begin{restatable}{proposition}{proprbbprovableccslc}
\label{prop:rbb_provable_ccslc}
For $\ccslc$ terms $t,u$,
if $t \sim_\bb u$ then $\E_\rbb \vdash \mu.t \approx \mu.u$, for any $\mu \in \Acttau$.
\end{restatable}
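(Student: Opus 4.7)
The plan is to first reduce to normal forms via Proposition~\ref{prop:rbb_nf_ccslc}, then proceed by induction on $\size(t)+\size(u)$. By the symmetry of $\sim_\bb$ and axiom~A1, it suffices to prove $\E_\rbb \vdash \mu.t \approx \mu.(t+u)$: the analogous claim obtained by swapping $t$ and $u$ then yields $\mu.u \approx \mu.(t+u) \approx \mu.t$. Writing $u=\sum_j T_j$ as a sum of simple normal forms and using A0--A3, this further reduces to showing $\E_\rbb \vdash \mu.(t+T_j) \approx \mu.t$ for every simple-normal-form summand $T_j$ of $u$.

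I would then case-split on the shape of $T_j$. The cleanest case is $T_j = \nu.N$: the configuration-level transition $u \trans[\nu] N$ is matched, via $t \sim_\bb u$, by a path $t \trans[\varepsilon] t'' \trans[\nu] t_0$ with $t \sim_\bb t''$ and $N \sim_\bb t_0$. Since $\size(t_0)+\size(N)<\size(t)+\size(u)$, the induction hypothesis gives $\E_\rbb \vdash \nu.N \approx \nu.t_0$. If the $\tau$-path from $t$ to $t''$ is empty, then $\nu.t_0$ is already a summand of $t$ and~A3 closes the argument. Otherwise, by the stuttering property every intermediate term on the $\tau$-path is $\sim_\bb t$, and~TB is applied iteratively to migrate the summand $\nu.t_0$ up the path and absorb it into $t$.

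The remaining cases $T_j \in \{\, x \lmerge N,\ (x \cmerge \alpha) \lmerge N,\ (x \cmerge y) \lmerge N\,\}$ are handled analogously, using the configuration-level transitions $\trans[\ell]_\rho$ guaranteed by Definition~\ref{def:open_bb}. The key variant is that~TL plays the role of~TB, absorbing an inert $\tau$-step appearing on the right-hand side of a left merge. Axioms~C0--C7 and~L1--L4 are invoked as needed to bring the communication-merge summands into the shape required for TL to apply. The $(x \cmerge y)\lmerge N$ subcase is the most intricate, since the match must simultaneously respect the pairing of two variable-triggered actions at the configuration level.

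The main obstacle I expect is the iterated TB/TL propagation along a non-empty $\tau$-path. Each $\tau$-step of the path arises from some $\tau$-summand of an intermediate term, and lifting the new summand $T_j$ upward requires rearranging each intermediate normal form, via A1--A3, into precisely the pattern $\tau(x+y)+y$ that~TB demands (or the analogous pattern $x\lmerge\tau(y+z)$ for~TL). This will call for an auxiliary lemma stating that such a rearrangement is always available once the intermediate term is known to be $\sim_\bb t$, and that the relevant $\sim_\bb$-invariants are preserved at every step of the iteration so that the induction hypothesis remains applicable on the residuals.
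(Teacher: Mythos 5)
Your overall scaffolding (normal forms, induction on $\size(t)+\size(u)$, reduction to absorbing the summands of $u$ into $t$ under a prefix) is reasonable, and your treatment of the empty-$\tau$-path case and of the $\lmerge$/$\cmerge$ summands via TL is broadly what the paper does in its first case. The genuine gap sits exactly where you place ``the main obstacle'': the iterated TB migration along a non-empty $\tau$-path does not go through, and the auxiliary lemma you defer to is the entire difficulty of the proposition. Concretely, to push a summand $\nu.t_0$ of $t_k$ up one step to $t_{k-1}=\tau.t_k+z$ you need $\mu(\tau.t_k+z+\nu.t_0)\approx\mu(\tau.t_k+z)$, an instance of the ``absorption with context'' law $\mu(\tau(x+y)+y+z)\approx\mu(\tau(x+y)+z)$. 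That law is \emph{unsound} modulo $\sim_\rbb$: for $x=\nil$, $y=a$, $z=b$ one has $\tau.a+a+b\not\sim_\bb\tau.a+b$, since the $a$-move of the left-hand side can only be matched through the $\tau$, whose target $a$ is not branching bisimilar to $\tau.a+a+b$ because of the $b$-summand. TB absorbs $y$ only when $y$ comprises \emph{all} the remaining summands; tolerating an extra context $z$ is precisely what separates weak from branching bisimilarity. The side condition $t_{k-1}\sim_\bb t_k$ does make the instance you need sound, but then the statement ``$t_{k-1}\sim_\bb t_k$ implies $\mu(t_{k-1}+S)\approx\mu(t_{k-1})$ for $S$ a summand of $t_k$'' is essentially the proposition itself applied to a derivative pair, and your measure need not decrease on it ($\size(t_{k-1})+\size(t_k)$ is not in general below $\size(t)+\size(u)$), so the induction hypothesis cannot simply be invoked.

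The paper avoids the migration altogether by a three-way case split on whether $t$ (resp.\ $u$) has an \emph{inert} $\tau$-summand, i.e.\ a summand $\tau.N_i$ with $N_i\sim_\bb u$ (resp.\ $\tau.M_{\bar i}$ with $M_{\bar i}\sim_\bb t$). If neither side has one, the stuttering property forces every match in the bisimulation game to use an empty $\tau$-path, so summands pair up directly and the induction hypothesis applies to the continuations (your easy case, plus TL for the merge summands). If, say, $t$ has one, the induction hypothesis yields $\E_\rbb\vdash t\approx\tau.u+N$ for the remaining part $N$ of $t$, direct matching yields $\E_\rbb\vdash u\approx N+M$, and a \emph{single} application of TB in its pure form, $\mu(\tau(N+M)+N)\approx\mu(N+M)$, with no residual context, closes the case. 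You should restructure your argument along these lines rather than attempting to prove the path-absorption lemma.
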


The completeness of the axiom system $\E_\rbb$ then follows from Proposition~\ref{prop:rbb_nf_ccslc} and Proposition~\ref{prop:rbb_provable_ccslc}.
Notice that axioms L1 and TB are actually axiom schemata that both generate $|\Acttau|$ axioms.
Similarly, the schema C4 generates $2|\Act|$ axioms, and C5 generates $2|\Act| \times (2|\Act|-1)$ axioms.
Hence, $\E_\rbb$ is \emph{finite} when so is the set of actions. 

\thmrbbcompleteccslc*

%===========================================
% sec - conclusion
%===============================================

\section{Concluding remarks}
\label{sec:conclusion}

In this paper we have shown that the use of auxiliary operators left merge and communication merge is crucial to obtain a finite, complete axiomatisation of the CCS parallel composition operator modulo rooted branching bisimilarity.

A natural direction for future research is the extension of our results to other weak congruences from the spectrum \cite{vG93}.

In detail, we will investigate the existence of a general technique to lift the negative result from rooted branching bisimilarity to other weak congruences.

Regarding the positive result, we will focus on three weak congruences, namely \emph{rooted} $\eta$-\emph{bisimilarity} ($\sim_{\reb}$), \emph{rooted delay bisimilarity} ($\sim_\rdb$), and \emph{rooted weak bisimilarity} ($\sim_\rwb$), and provide the complete axiomatisations for them.
We are confident that the axiomatisation for $\sim_{\reb}$ can be obtained by exploiting a proof technique from \cite{vG95,AvGFI96} based on the notion of \emph{saturation}.
Intuitively, $\sim_{\reb}$ coincides with $\sim_\rbb$ on the class of $\eta$-saturated terms.
Hence, if we can show that each term is provably equal to an $\eta$-saturated term using the axiom system for $\sim_{\reb}$, the completeness of the considered axiom system then directly follows from that for $\sim_\rbb$ we provided in this paper.

The quest for complete axiomatisations for $\sim_\rdb$ and $\sim_\rwb$ will require a different approach, as these equivalences are not preserved by the communication merge operator.
For instance, we have that $\tau.a \sim_\rwb \tau.a + a$, but
$\nil \sim_\rwb \tau.a \cmerge \overline{a}.b \not\sim_\rwb (\tau.a + a) \cmerge \overline{a}.b \sim_\rwb \tau.b$.
For $\sim_\rdb$, we have a similar outcome (see \cite{vG11} for more details).
If we look at the seminal paper \cite{BK85}, the complete axiomatisation for observational congruence \cite{HM85} (and thus rooted weak bisimilarity) over $\text{ACP}_\tau$ comes with the axiom
\begin{equation}
\tag{TC}
\tau.x \cmerge y \approx x \cmerge y.
\end{equation}
Simialrly, in \cite{He88,Ac94}, it was argued that in order to reason compositionally, and obtain an equational theory of CCS modulo observational congruence, it is necessary to define the operational semantics of communication merge in terms of inference rules of the form 
\[
\SOSrule{t \wtrans[\alpha] t' \quad u \wtrans[\overline{\alpha}] u'}{t \cmerge u \wtrans[\tau] t' \mathbin{\|} u'}
\]
where we use $\wtrans[\mu]$ as a short-hand for the sequence of transitions $\trans[\varepsilon]\trans[\mu]\trans[\varepsilon]$.
This means that in order for $\textc$ to preserve $\sim_\rwb$ (and/or $\sim_\rdb$), we need to consider a sequence of weak transitions as a single step. 
Clearly, since $\textc$ is an auxiliary operator that we introduce specifically to obtained the axiomatisations, its semantics can be defined in the most suitable way for our purposes, i.e., so that it is consistent with the considered congruence relation.
However, it is also clear that if we modify the semantics of one operator in $\ccslc$, then we are working with a new language.
In particular, some axioms that are sound modulo \emph{strong} bisimilarity (and thus also modulo $\sim_\rbb$) over $\ccslc$ become unsound modulo rooted weak bisimilarity over the new language: this is the case of axioms C6 and C7 in Table~\ref{tab:axioms_rbb}.
As a consequence, we cannot exploit the completeness of the axiomatisation for rooted branching bisimilarity to derive complete axiomatisations for rooted weak bisimilarity and rooted delay bisimilarity, but we must provide new axiomatisations for them and prove their completeness from scratch.
Hence, we leave as future work the quest for complete axiomatisations for $\sim_\rwb$ and $\sim_\rdb$ over (recursion, relabelling, and restriction free) CCS with left merge and communication merge.

\bibliographystyle{plainurl}
\bibliography{concur_22}

\newpage
\appendix
%!TEX root=concur_22.tex

\section{Additional background notions and results}

In this section we present some additional notions and general simple results on CCS terms that we omitted from the main text because of space limits, and that will be useful in the upcoming proofs.

Firstly, we introduce the notion of \emph{derivative} of a process.

\begin{definition}
[Derivative]
For a process $p$, the set of \emph{derivatives} of $p$, notation $\der(p)$, is the least set containing $p$ that is closed under $\trans[]$, i.e., the least set satisfying:
\begin{itemize}
\item $p \in \der(p)$, and
\item if $q \in \der(p)$ and $q \trans[\mu] q'$, for some action $\mu \in \Acttau$, then $q' \in \der(p)$.
\end{itemize}
\end{definition}

In particular, we say that $p' \in \der(p)$ is a $\mu$-\emph{derivative} of $p$, for some $\mu \in \Acttau$, if $p \trans[\mu] p'$.
Moreover, we say that $p'$ is a \emph{proper} derivative of $p$ if $p' \in \der(p) \setminus \{p\}$.

Let $p \wtrans[\mu] q$ be a shorthand for $p \trans[\varepsilon] p' \trans[\mu] q$, for some $p'$.
For a sequence of actions $\varphi = \alpha_1 \cdots \alpha_k \in (\Act\cup\overline{\Act})^*$ ($k \geq 0$), and processes $p,p'$, we write that $p \wtrans[\varphi] p'$ if and only if there exists a sequence of transitions $p = p_0 \wtrans[\alpha_1] p_1 \wtrans[\alpha_2] \cdots \wtrans[\alpha_k] p_k = p'$. 
If $p \wtrans[\varphi] p'$ holds for some process $p'$, then $\varphi$ is an {\emph observable trace} of $p$. 
Moreover, we say that $\varphi$ is a \emph{maximal} observable trace of $p$ if $\init(p') = \emptyset$.
By means of observable traces, we associate a classic notion with a process $p$, i.e., its (\emph{observable}) \emph{depth}, denoted by $\depth(p)$.
For a process $p$ whose set of traces is finite, it expresses the length of a \emph{longest} observable trace.
Formally, denoting by $|\varphi|$ the length of $\varphi$,
\[
\depth(p) = \max \{ k \mid p \wtrans[\varphi] p' \;\wedge\; |\varphi| = k\}.
\]

The case of rooted branching bisimilarity is slightly more complicated, as we need to treat a possible initial $\tau$-move as an observable action.
To this end, we define the \emph{rooted depth} of a process, denoted by $\rdepth(p)$:
\[
\rdepth(p) = \sup \{ 1 + \depth(p') \mid \exists \mu \in \Acttau, p' \in \Proc \text{ s.t. } p \trans[\mu] p' \}.
\]
Then, we notice that we can give the following inductive characterisation of the (rooted) depth of CCS processes:
\begin{itemize}
\item $\depth(\nil) = 0$;
\item $\depth(\tau.p) = \depth(p)$;
\item $\depth(\alpha.p) = 1+ \depth(p)$;
\item $\depth(p + q) = \max\{\depth(p), \depth(q)\}$;
\item $\depth(p \mathbin{\|} q) = \depth(p) + \depth(q)$.
\item $\rdepth(\nil) = 0$;
\item $\rdepth(\mu.p) = 1 + \depth(p)$;
\item $\rdepth(p + q) = \max\{\rdepth(p), \rdepth(q)\}$;
\item $\rdepth(p \mathbin{\|} q) = \rdepth(p) + \rdepth(q)$.
\end{itemize}

An immediate result is that branching bisimilarity preserves the observable depth of processes:

\begin{lemma}
\label{lem:bb_same_depth}
If $p \sim_\bb q$, then $\depth(p) = \depth(q)$.
\end{lemma}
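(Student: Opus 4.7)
The plan is to show that every observable trace of $p$ can be matched by an observable trace of $q$ of equal length (up to branching bisimilarity of the endpoints), and then to conclude from symmetry that the lengths of longest traces agree. Since CCS is recursion-free, $\depth(p)$ is always a finite natural number, so taking the supremum in the definition is a genuine maximum.

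First I would establish the following auxiliary lemma by using the definition of $\sim_\bb$: for all processes $p,q$, if $p \sim_\bb q$ and $p \wtrans[\alpha] p'$ with $\alpha \in \Act \cup \overline{\Act}$, then there exists $q'$ such that $q \wtrans[\alpha] q'$ and $p' \sim_\bb q'$. To prove this, unfold $p \wtrans[\alpha] p'$ as $p \trans[\tau] p_1 \trans[\tau] \cdots \trans[\tau] p_n \trans[\alpha] p'$ and proceed by induction on $n$. In the step case, each $\tau$-move from $p_i$ is answered by $q$ either via stuttering ($p_{i+1} \sim_\bb q$) or via a $q \trans[\varepsilon] q'' \trans[\tau] q_{i+1}$ with $p_{i+1} \sim_\bb q_{i+1}$; in both cases the current bisimilar partner in $q$ is reached by a (possibly empty) sequence of $\tau$-transitions from $q$. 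Finally, the observable step $p_n \trans[\alpha] p'$ is matched by the second clause of Definition~\ref{def:bb}, producing $q \trans[\varepsilon] q'' \trans[\alpha] q'$ with $p' \sim_\bb q'$, which is precisely a weak $\alpha$-transition from $q$.

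Next I would iterate this auxiliary lemma along an observable trace. Namely, by induction on the length $k$ of $\varphi = \alpha_1 \cdots \alpha_k \in (\Act \cup \overline{\Act})^*$, if $p \sim_\bb q$ and $p \wtrans[\varphi] p'$, then there is $q'$ with $q \wtrans[\varphi] q'$ and $p' \sim_\bb q'$. The base case $k = 0$ is immediate (take $q' = q$, noting $\sim_\bb$ is reflexive). The inductive step applies the auxiliary lemma to the first observable action $\alpha_1$ and then invokes the induction hypothesis at a bisimilar pair of intermediate processes.

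From the inductive characterisation of $\depth$, we know $\depth(p)$ equals the length of a longest observable trace out of $p$. The previous step then yields $\depth(p) \le \depth(q)$: any $\varphi$ realising the maximum for $p$ is also an observable trace of $q$ of the same length. Since $\sim_\bb$ is symmetric, the reverse inequality holds as well, giving $\depth(p) = \depth(q)$. The only subtle point is the aforementioned careful handling of $\tau$-sequences in the auxiliary lemma, where one must use the stuttering clause of branching bisimilarity; once this is in place, the rest is routine induction.
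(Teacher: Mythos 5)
Your proof is correct: the transfer of weak observable transitions under $\sim_\bb$ (handling the $\tau$-prefix via the stuttering clause and the observable step via the second clause of Definition~\ref{def:bb}), iterated along traces and combined with symmetry, is exactly the standard argument. The paper states this lemma without proof, treating it as immediate, and your write-up supplies precisely the routine details that were omitted.
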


Similarly, rooted branching bisimilarity preserves the rooted depth of processes:

\begin{lemma}
\label{lem:rbb_same_depth}
If $p \sim_\rbb q$, then $\rdepth(p) = \rdepth(q)$.
\end{lemma}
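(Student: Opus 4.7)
The plan is to derive Lemma~\ref{lem:rbb_same_depth} directly from its stronger cousin Lemma~\ref{lem:bb_same_depth} and the definition of $\sim_\rbb$. Recall that $\rdepth(p) = \sup \{1 + \depth(p') \mid \exists \mu \in \Acttau,\ p \trans[\mu] p'\}$, with the convention that $\sup \emptyset = 0$ (consistent with the base case $\rdepth(\nil)=0$); note also that since the fragment of CCS under consideration is recursion-free, every process has only finitely many derivatives, so the supremum is in fact a maximum.

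I would prove the two inequalities $\rdepth(p) \le \rdepth(q)$ and $\rdepth(q) \le \rdepth(p)$ separately; by symmetry of $\sim_\rbb$ it suffices to establish one of them. So suppose $p \sim_\rbb q$. If $p$ has no outgoing transitions, then by Definition~\ref{def:rbb} neither does $q$ (since any move of $q$ would have to be matched by a move of $p$), hence $\rdepth(p) = 0 = \rdepth(q)$. Otherwise, pick any transition $p \trans[\mu] p'$. By Definition~\ref{def:rbb} there exists $q'$ such that $q \trans[\mu] q'$ and $p' \sim_\bb q'$. Applying Lemma~\ref{lem:bb_same_depth} yields $\depth(p') = \depth(q')$, so
\[
1 + \depth(p') \;=\; 1 + \depth(q') \;\le\; \rdepth(q).
\]
Taking the supremum over all initial transitions of $p$ gives $\rdepth(p) \le \rdepth(q)$. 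The symmetric argument yields the reverse inequality, completing the proof.

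There is no real obstacle here: the lemma is essentially a one-step unfolding of the root condition followed by an appeal to the already-established fact that branching bisimilarity preserves observable depth. The only subtle point is to handle the $\sup \emptyset$ case correctly, which is dispatched by noting that $\sim_\rbb$ forces the sets of initial transitions of $p$ and $q$ to be simultaneously empty or nonempty.
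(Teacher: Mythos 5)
Your proof is correct and is exactly the intended argument: the paper states this lemma without proof (treating it, like Lemma~\ref{lem:bb_same_depth}, as an immediate consequence of the definitions), and your one-step unfolding of the root condition followed by an appeal to Lemma~\ref{lem:bb_same_depth} is the natural way to make that explicit. Your handling of the empty-supremum case via the symmetry of $\sim_\rbb$ is also sound.
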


Moreover, it is well known that rooted branching bisimilarity is finer than branching bisimilarity, in the sense that whenever $p \sim_\rbb q$, then also $p \sim_\bb q$ holds.
In general, the converse implication does not hold.
However, we present here a simple case in which we can establish that two branching bisimilar processes are also rooted branching bisimilar:

\begin{lemma}
\label{lem:same_initials}
Assume that processes $p,q$ are such that $p \sim_\bb q$.
If $\tau \not \in \init(p)\cup\init(q)$, then $p \sim_\rbb q$.
Moreover, if $\tau \not\in \init(p)$, then $p \not\sim_\rbb \nil$ implies $p \not\sim_\bb \nil$.
\end{lemma}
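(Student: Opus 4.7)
}
My plan is to prove the first statement directly from the definitions of $\sim_\bb$ and $\sim_\rbb$, exploiting the hypothesis that neither $p$ nor $q$ can perform an initial $\tau$-step. The second claim will then fall out as a special case of the first, with $q = \nil$.

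For the first part, I start with $p \sim_\bb q$ and suppose $p \trans[\mu] p'$ for some $\mu \in \Acttau$ and $p' \in \Proc$. Unfolding Definition~\ref{def:bb}, one of two cases arises: either $\mu = \tau$ and $p' \sim_\bb q$, or there exist $q'',q'$ with $q \trans[\varepsilon] q'' \trans[\mu] q'$, $p \sim_\bb q''$, and $p' \sim_\bb q'$. The first case is ruled out immediately by the hypothesis that $\tau \notin \init(p)$, since $p \trans[\mu]$ would then force $\mu \neq \tau$. In the second case, the hypothesis that $\tau \notin \init(q)$ forces the reflexive-transitive closure $q \trans[\varepsilon] q''$ to be the zero-length sequence, so $q'' = q$ and hence $q \trans[\mu] q'$ with $p' \sim_\bb q'$. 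This is precisely the transfer condition required by Definition~\ref{def:rbb} on the side of $p$. By symmetry, using $\tau \notin \init(q)$ and $q \sim_\bb p$, the dual transfer condition from $q$ to $p$ follows in the same way. Therefore $p \sim_\rbb q$.

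For the second part, observe that $\init(\nil) = \emptyset$, so in particular $\tau \notin \init(\nil)$. The contrapositive of the implication to be shown is: if $\tau \notin \init(p)$ and $p \sim_\bb \nil$, then $p \sim_\rbb \nil$. Since under these hypotheses $\tau \notin \init(p) \cup \init(\nil)$, the first part of the lemma applies with $q = \nil$, giving $p \sim_\rbb \nil$ as required.

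I do not anticipate any real obstacle here: the only subtlety is noticing that the hypothesis $\tau \notin \init(q)$ collapses the weak transition $q \trans[\varepsilon] q''$ in the branching clause to a reflexive step, which is exactly what is needed to upgrade a branching bisimulation matching move into a strong (root) matching move. The second part is then purely an instantiation argument.
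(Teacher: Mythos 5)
Your proof is correct: the hypothesis $\tau \notin \init(q)$ indeed forces the $q \trans[\varepsilon] q''$ prefix in the branching transfer condition to be empty, which upgrades the match to the strong root condition of Definition~\ref{def:rbb}, and the second claim is exactly the contrapositive instantiation at $q = \nil$ using $\init(\nil) = \emptyset$. The paper states this lemma without proof as a simple observation, and your argument is precisely the intended one.
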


We conclude this section by recalling that we can restrict the axiom system to a collection of equations that do not introduce unnecessary terms that are rooted branching bisimilar to $\nil$ in the equational proofs, namely $\nil$ summands and $\nil$ factors.

An example of a collection of equations over CCS that are sound modulo $\sim_\rbb$ is given by axioms A0--A3, P0--P2, in Table~\ref{tab:axioms_b}.
Interestingly, axioms A0 and P0 in Table~\ref{tab:axioms_b} (used from left to right) are enough to establish that each term that is rooted branching bisimilar to $\nil$ is also provably equal to $\nil$. 

It is well-known (see, e.g., Section~2 in \cite{Gr90}) that if an equation relating two closed terms can be proven from an axiom system $\E$, then there is a closed proof for it.  
In addition, if $\E$ satisfies a further closure property, called \emph{saturation} in \cite[Definition 5.1.1]{Mo89}, and that closed equation relates two terms containing no occurrences of $\nil$ as a summand or factor, then there is a closed proof for it in which all of the terms have no occurrences of $\nil$ as a summand or factor, as formalised in the following proposition.

\begin{proposition}
\label{Propn:proofswithout0_text}
Assume that $\E$ is a saturated axiom system.  
Then, the proof from $\E$ of an equation $p \approx q$, where $p$ and $q$ are terms not containing occurrences of $\nil$ as a summand or factor, need not use terms containing occurrences of $\nil$ as a summand or factor.
\end{proposition}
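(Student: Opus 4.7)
The plan is to take an arbitrary closed proof of $p \approx q$ from $\E$ and transform it step by step into a closed proof in which every intermediate term is free of $\nil$ summands and $\nil$ factors. To this end, I would first introduce a syntactic normalization operation $\phi$ on closed terms that systematically erases $\nil$ summands and $\nil$ factors in a bottom-up fashion: for instance, $\phi(r + \nil) = \phi(\nil + r) = \phi(r)$ and $\phi(r \mathbin{\|} \nil) = \phi(\nil \mathbin{\|} r) = \phi(r)$, with the obvious compatible recursive clauses for the remaining operators. The relevant properties of $\phi$ are then: (i) $\phi(r)$ contains no occurrence of $\nil$ as a summand or factor; (ii) $\E \vdash r \approx \phi(r)$ using only A0 and P0 (applied left-to-right) together with the congruence rules, so in particular the intermediate terms in this auxiliary derivation are subterms of $r$ and hence cannot introduce fresh $\nil$ summands or factors; and (iii) $\phi(r) = r$ whenever $r$ is already free of $\nil$ summands and factors.

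Next, I would argue by induction on the structure of a closed proof $\pi$ witnessing $\E \vdash p \approx q$. In the base case $\pi$ is either an application of reflexivity or a substitution instance of an axiom, and the inductive cases follow the shape of the equational logic rules $(e_2)$--$(e_6)$. The strategy is to replace each closed term $t$ appearing in $\pi$ by $\phi(t)$, producing a candidate proof $\phi(\pi)$, and to show that every rule application remains valid. Symmetry and transitivity transport trivially. For the congruence rules over prefixing and $+$, combining two $\phi$-normalized terms cannot create new $\nil$ summands or factors, and for the congruence rule over $\mathbin{\|}$ any $\nil$ factor created by one of the normalized children must already have been a $\nil$ factor at the same position before normalization, which contradicts the assumed shape of the proof at that position or is absorbed by the $\nil$-free assumption on the endpoints after a further application of $\phi$.

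The main obstacle, and the place where the saturation assumption is essential, is the substitution rule $(e_4)$ applied to axioms of $\E$. A step in $\pi$ may be a substitution instance $\sigma(t) \approx \sigma(u)$ of an axiom $t \approx u \in \E$; even if the endpoints $p,q$ of the whole proof are $\nil$-free, the closed substitution $\sigma$ may instantiate some variables of $t$ (and hence of $u$) by $\nil$ or by terms whose own $\phi$-normalization equals $\nil$, thereby creating $\nil$ summands or factors inside $\sigma(t)$ and $\sigma(u)$. The content of saturation, as defined in \cite[Definition 5.1.1]{Mo89}, is precisely the closure property that guarantees that for every such axiom $t \approx u \in \E$ and every pattern of variable collapses to $\nil$, a corresponding equation relating the $\nil$-free remnants is already present in $\E$. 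Using this, I would show that $\phi(\sigma(t)) \approx \phi(\sigma(u))$ is itself a substitution instance of some axiom in $\E$ under the modified substitution $x \mapsto \phi(\sigma(x))$ (restricted to variables surviving the collapse), and hence is derivable without introducing any term containing $\nil$ as a summand or factor.

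Having handled the substitution case, the induction delivers a closed proof consisting of the terms $\phi(t_0),\phi(t_1),\dots,\phi(t_n)$, each of which lacks $\nil$ summands and $\nil$ factors. Since $p$ and $q$ are $\nil$-free by hypothesis, $\phi(p)=p$ and $\phi(q)=q$, so this is the desired proof of $p \approx q$ from $\E$ that never passes through a term containing $\nil$ as a summand or factor, completing the argument. The only delicate bookkeeping I expect is in making the statement of saturation operate uniformly over all four shapes of substitution-induced collapse (a variable occurring as a whole summand, inside a larger summand, as one side of a parallel composition, or strictly inside an action prefix), and in handling nested occurrences by iterating the argument over the subterm ordering; both are standard once the single-layer case is pinned down.
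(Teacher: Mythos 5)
Your proposal is correct and follows essentially the same route as the paper's (which is in fact omitted there with a pointer to Moller's Proposition~5.1.5): normalise every closed term in the proof tree by erasing $\nil$ summands and factors, transport each rule application, and invoke saturation precisely to replace a substitution instance whose variables collapse to $\nil$ by an instance of another axiom already present in the saturated system. The only wobble is your handling of the $\mathbin{\|}$-congruence case, where the clean fix is simply that if one child normalises to $\nil$ then by soundness so does its partner, and the congruence step is dropped in favour of the transformed subproof of the other child; there is no ``contradiction'' to derive there.
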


Since the proof of this result follows the same lines of that of \cite[Proposition~5.1.5]{Mo89}, we omit it from our work.
In light of Proposition~\ref{Propn:proofswithout0_text} we shall, henceforth, limit ourselves to considering saturated axiom systems.

%==================================================
%================================================

\section{Additional results to Section~\ref{sec:decomposition}}
\label{app:decomposition}

Firstly, we present a general result on the structure of configurations.

\begin{lemma}
\label{lem:c_structure}
Let $t$ be a term and $x$ be a variable in $\Var$.
If $t \trans[\ell]_\rho c$ for some $\rho \in \Acttau \cup ((\Act \cup \overline{\Act})\times \{\tau\})$ and label $\ell$ with $x \in \ell$, then $c$ is of the form $x_\mu \mathbin{\|} c'$, for some configuration $c'$, and action $\mu \in \Acttau$ (modulo the axioms in Table~\ref{tab:axioms_b}).
\end{lemma}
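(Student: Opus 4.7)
My plan is to argue by induction on the derivation of $t \trans[\ell]_\rho c$ using the rules in Table~\ref{tab:ell_rules}. Since $\ell$ is always of the form $(y)$ or $(y,z)$, the hypothesis $x \in \ell$ forces $x$ to coincide with one of the (at most two) variables mentioned by $\ell$. The invariant to be maintained along the induction is that the sub-derivation which actually introduced the occurrence of $x$ in $\ell$ produces a configuration whose parallel decomposition already contains an $x_\mu$ factor; the outer rules then only attach further parallel contexts, and these can be rearranged using the axioms A1, A2, P0, P1, P2 of Table~\ref{tab:axioms_b}.

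The base case is rule $(a_1)$: here $t = x$, $\ell = (x)$, and $c = x_\mu$, which equals $x_\mu \mathbin{\|} \nil$ by P0, so one takes $c' = \nil$. For the unary inductive cases $(a_2)$, $(a_3)$, $(a_5)$, $(a_6)$ (and their symmetric counterparts), the label $\ell$ is inherited verbatim from the unique premise that carries a parenthesised variable, so the induction hypothesis applies directly to that premise to give $c_1 = x_\mu \mathbin{\|} c_1'$; associativity and commutativity of $\mathbin{\|}$ then suffice to bring $x_\mu$ to the front of the assembled configuration. For instance, in $(a_3)$ with $t = t_1 \mathbin{\|} u$ and premise $t_1 \trans[\ell]_\rho c_1$, one rewrites $c_1 \mathbin{\|} u = (x_\mu \mathbin{\|} c_1') \mathbin{\|} u = x_\mu \mathbin{\|} (c_1' \mathbin{\|} u)$ and sets $c' = c_1' \mathbin{\|} u$; the case $(a_5)$ is analogous, with $c' = c_1' \mathbin{\|} u'$.

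The only rule demanding a minor case split is $(a_4)$, where $\ell = (y,z)$ is produced from premises $t_1 \trans[(y)]_\alpha c_1$ and $t_2 \trans[(z)]_{\overline{\alpha}} c_2$. Since $x \in \ell$, either $x = y$ or $x = z$: in the former case the induction hypothesis applied to the left premise yields $c_1 = x_\alpha \mathbin{\|} c_1'$, so that $c = c_1 \mathbin{\|} c_2 = x_\alpha \mathbin{\|} (c_1' \mathbin{\|} c_2)$; in the latter case the induction hypothesis is applied to the right premise and P1 is used to commute $x_{\overline{\alpha}}$ to the front. I expect this selective application of the induction hypothesis to the ``correct'' premise of $(a_4)$, together with a parallel treatment of its symmetric companion, to be the only real source of care in the argument; every remaining case amounts to a direct unfolding of an inference rule followed by a rearrangement modulo the monoid laws of parallel composition.
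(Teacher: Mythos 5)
Your proposal is correct and follows exactly the argument the paper intends (the paper omits the proof of this lemma, noting only that such results follow by straightforward induction on the derivation of the auxiliary transition): the base case is rule $(a_1)$, and every other rule either passes the configuration through unchanged or wraps it in a parallel context that can be rearranged by A1--A2, P0--P2 to expose the $x_\mu$ factor. Your handling of $(a_4)$ --- applying the induction hypothesis to whichever premise carries $x$ (both work when $\ell=(x,x)$) --- is the only point requiring any care, and you treat it correctly.
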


The following lemma shows how transitions labelled with observable actions are triggered by variables.

\begin{lemma}
\label{lem:closed2open_alpha}
Let $\alpha \in \Act \cup \overline{\Act}$, $t$ be a term, $\sigma$ be a closed substitution, and $p$ be a process.
If $\sigma(t) \trans[\alpha] p$, then one of the following holds:
\begin{enumerate}
\item \label{c2o_prefix}
There is a term $t'$ s.t.\ $t \trans[\alpha] t'$ and $\sigma(t') = p$.
\item \label{c2o_x}
There are a variable $x$, a process $q$ and a configuration $c$ s.t.\ $\sigma(x) \trans[\alpha] q$, $t \trans[(x)]_\alpha c$, and $\sigma[x_\alpha \mapsto q](c) = p$.
\end{enumerate}
\end{lemma}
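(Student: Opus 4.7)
My plan is to prove the lemma by structural induction on the term $t$, following the same broad strategy as the proof of Lemma~\ref{lem:closed2open_tau} but exploiting a crucial simplification: since $\alpha \in \Act \cup \overline{\Act}$ is observable, the communication rule for parallel composition---which produces only $\tau$-labelled transitions---cannot be the last rule applied in any derivation of $\sigma(t) \trans[\alpha] p$. This immediately rules out analogues of items~\ref{lem:c2o_xu} and~\ref{lem:c2o_xy} of Lemma~\ref{lem:closed2open_tau}, leaving precisely the two cases in the statement.

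The base cases are immediate. When $t = \nil$, the premise is vacuous. When $t$ is a variable $x$, we have $\sigma(x) \trans[\alpha] p$ directly, and rule~($a_1$) of Table~\ref{tab:ell_rules} gives $x \trans[(x)]_\alpha x_\alpha$; setting $q = p$ and $c = x_\alpha$ satisfies case~\ref{c2o_x} since $\sigma[x_\alpha \mapsto p](x_\alpha) = p$.

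For the inductive step, the argument splits on the head operator of $t$. If $t = \mu.t'$, only the prefix SOS rule of Table~\ref{tab:sos_rules} applies, forcing $\mu = \alpha$ and $p = \sigma(t')$, which yields case~\ref{c2o_prefix}. If $t = t_1 + t_2$, the transition must derive from $\sigma(t_i) \trans[\alpha] p$ for some $i \in \{1,2\}$; applying the induction hypothesis to $t_i$ and closing the resulting witness under the SOS rule for~$+$ (for case~\ref{c2o_prefix}) or under rule~($a_2$) of Table~\ref{tab:ell_rules} (for case~\ref{c2o_x}) preserves the conclusion. If $t = t_1 \mathbin{\|} t_2$, the observability of $\alpha$ forces the transition to come from a pure interleaving step, say $\sigma(t_1) \trans[\alpha] p_1$ with $p = p_1 \mathbin{\|} \sigma(t_2)$; the induction hypothesis applied to $t_1$ is then lifted to $t$ via the interleaving rule of Table~\ref{tab:sos_rules} (for case~\ref{c2o_prefix}) or via rule~($a_3$) (for case~\ref{c2o_x}), producing the witness configuration $c \mathbin{\|} t_2$.

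The only non-mechanical bookkeeping, and the main obstacle I anticipate, lies in the parallel case of~\ref{c2o_x}, where one must verify that $\sigma[x_\alpha \mapsto q](c \mathbin{\|} t_2) = p_1 \mathbin{\|} \sigma(t_2)$. This holds because the subscripted variable $x_\alpha$ belongs to $\Var_\Acttau$, which is disjoint from $\Var$, so $x_\alpha$ cannot occur in the ordinary term $t_2$; hence the extended substitution agrees with $\sigma$ on $t_2$. Once this observation is made, every remaining verification is an immediate consequence of matching an SOS rule from Table~\ref{tab:sos_rules} to the corresponding auxiliary rule in Table~\ref{tab:ell_rules}, and no further technical ingredients are required.
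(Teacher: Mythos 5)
Your proof is correct: the paper itself leaves this lemma without an explicit proof (it is stated in the appendix as the ``simpler'' companion of Lemma~\ref{lem:closed2open_tau}, whose proof proceeds by induction on the derivation of the transition), and your structural induction on $t$ with a case analysis on the last SOS rule applied is essentially the same argument, with the key observation---that the communication rule only yields $\tau$-labels and so cannot produce an $\alpha$-transition---correctly eliminating the two communication cases. The bookkeeping point you flag, namely that $x_\alpha \in \Var_{\Acttau}$ cannot occur in the ordinary term $t_2$ because $\Var_{\Acttau}$ is disjoint from $\Var$, is exactly the right justification for the substitution identity in the parallel case.
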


%=================================================
%=================================================

\section{Proof of Lemma~\ref{lem:closed2open_tau}}

\lemclosedopentau*

\begin{proof}
The proof follows by induction on the derivation of the transition $\sigma(t) \trans[\tau] p$.
The interesting case is the base case corresponding to a communication in the term $t = t_1 \mathbin{\|} t_2$, for some CCS terms $t_1,t_2$, that we expand below.

We can assume, without loss of generality, that $\sigma(t_1) \trans[\alpha] p_1$ and $\sigma(t_2) \trans[\overline{\alpha}] p_2$, for some processes $p_1,p_2$ such that $p_1 \mathbin{\|} p_2 = p$.
The symmetric case of $\sigma(t_1) \trans[\overline{\alpha}]$ and $\sigma(t_2) \trans[\alpha]$ follows by applying the same arguments we use below, switching the roles of $\sigma(t_1)$ and $\sigma(t_2)$.
By Lemma~\ref{lem:closed2open_alpha}, from $\sigma(t_1) \trans[\alpha] p_1$ we can infer that either $t_1 \trans[\alpha] t_1'$ for some term $t_1'$ such that $\sigma(t_1') = p_1$, or there are a variable $x$, a process $q_1$ and a configuration $c_1$ such that $\sigma(x) \trans[\alpha] q_1$, $t_1 \trans[(x)]_\alpha c_1$, and $\sigma[x_\alpha \mapsto q_1](c_1) = p_1$.
Similarly, by Lemma~\ref{lem:closed2open_alpha}, from $\sigma(t_2) \trans[\overline{\alpha}] p_2$ we can infer that either $t_2 \trans[\overline{\alpha}] t_2'$ for some term $t_2'$ such that $\sigma(t_2') = p_2$, or there are a variable $y$, a process $q_2$ and a configuration $c_2$ such that $\sigma(y) \trans[\overline{\alpha}] q_2$, $t_2 \trans[(y)]_{\overline{\alpha}} c_2$, and $\sigma[y_{\overline{\alpha}} \mapsto q_2](c_2) = p_2$.
We can then distinguish four cases, depending on how the possible derivations of the transitions of $t_1$ and $t_2$ mentioned above are combined in the derivation of the synchronisation. 
\begin{enumerate}
\item Case $t_1 \trans[\alpha] t_1'$ and $t_2 \trans[\overline{\alpha}] t_2'$.
In this case, the operational semantics of $\mathbin{\|}$ allows us to directly infer that $t_1 \mathbin{\|} t_2 \trans[\tau] t_1' \mathbin{\|} t_2'$.
Hence we have that there is $t' = t_1' \mathbin{\|} t_2'$ such that $t \trans[\tau] t'$ and $\sigma(t') = p_1 \mathbin{\|} p_2 = p$.

\item Case $\sigma(x) \trans[\alpha] q_1$, $t_1 \trans[(x)]_\alpha c_1$, $\sigma[x_\alpha \mapsto q_1](c_1) = p_1$, and $t_2 \trans[\overline{\alpha}] t_2'$.
As $t_1 \trans[(x)]_\alpha c_1$ and $t_2 \trans[\overline{\alpha}] t_2'$ we can apply the auxiliary rule ($a_5$) and obtain that $t_1 \mathbin{\|} t_2 \trans[(x)]_{\alpha,\tau} c_1 \mathbin{\|} t_2'$.
Hence we have that there are a variable $x$, a subterm $t_2$ of $t$, a process $q_1$ and a configuration $c = c_1 \mathbin{\|} t_2'$ such that $\sigma(x) \trans[\alpha] q_1$, $t_2 \trans[\overline{\alpha}] t_2'$, $t \trans[(x)]_{\alpha,\tau} c$ and $\sigma[x_\alpha \mapsto q_1](c) = \sigma[x_\alpha \mapsto q_1](c_1) \mathbin{\|} \sigma(t_2') = p_1 \mathbin{\|} p_2 = p$.

\item Case $t_1 \trans[\alpha] t_1'$, $\sigma(y) \trans[\overline{\alpha}] q_2$, $t_2 \trans[(y)]_{\overline{\alpha}} c_2$, and $\sigma[y_{\overline{\alpha}} \mapsto q_2](c_2) = p_2$.
This is the symmetrical of the previous case and it follows by applying the same reasoning used above, using the auxiliary rule ($a_6$) in place of ($a_5$).

\item Case $\sigma(x) \trans[\alpha] q_1$, $t_1 \trans[(x)]_\alpha c_1$, $\sigma[x_\alpha \mapsto q_1](c_1) = p_1$, $\sigma(y) \trans[\overline{\alpha}] q_2$, $t_2 \trans[(y)]_{\overline{\alpha}} c_2$, and $\sigma[y_{\overline{\alpha}} \mapsto q_2](c_2) = p_2$.
As $t_1 \trans[(x)]_\alpha c_1$ and $t_2 \trans[(y)]_{\overline{\alpha}} c_2$, we can apply the auxiliary rule ($a_4$) and obtain thus $t_1 \mathbin{\|} t_2 \trans[(x,y)]_{\tau} c_1 \mathbin{\|} c_2$.
Hence we have obtained that there are variables $x,y$, processes $q_1,q_2$ and a configuration $c = c_1 \mathbin{\|} c_2$ such that $\sigma(x) \trans[\alpha] q_1$, $\sigma(y) \trans[\overline{\alpha}] q_2$, $t \trans[(x,y)]_\tau c$, and $\sigma[x_\alpha \mapsto q_1, y_{\overline{\alpha}} \mapsto q_2](c) = 
\sigma[x_\alpha \mapsto q_1](c_1) \mathbin{\|} \sigma[y_{\overline{\alpha}} \mapsto q_2](c_2) = 
p_1 \mathbin{\|} p_2 = p$, where we can distribute the substitutions over $\mathbin{\|}$ since $x_\alpha$ and $y_{\overline{\alpha}}$ are unique by construction (even if $x = y$, the two subscripts allow us to distinguish them). 
\end{enumerate}
\end{proof}

%===================================================
%=====================================================

\section{Proof of Proposition~\ref{prop:rttransprops}}

\rttransprops*

\begin{proof}
  Clearly, $\rightarrow^{*}$ is reflexive and transitive by definition, and since $p\trans[\mu]p'$ implies $\size(p)>\size(p')$
  %\todo{In Section 2, put remark that size decreases with transitions} 
  it follows that $\rightarrow^{*}$ is inversely well-founded. Every inversely well-founded reflexive and transitive relation is clearly also anti-symmetric, so $\rightarrow^{*}$ is an inversely well-founded partial order.
  
  By inverse well-foundedness, for every process $p$ there exists a process $p'$ such that $p\rightarrow^{*}p'$ and $p'{\not\rightarrow}$; from $p'{\not\rightarrow}$ it follows that $p'\sim_{\bb}\nil$. Hence $\rightarrow^{*}$ satisfies property \ref{item:rttranspropsleast}.
  
  That $\rightarrow^{*}$ satisfies property~\ref{item:rttranspropscompatible} is an immediate consequence of the two rules at the bottom left of Table~\ref{tab:sos_rules}.
  
  That $\rightarrow^{*}$ satisfies property~\ref{item:rttranspropsprecompositional} is straightforwardly established with induction on the length of a transition sequence witnessing $p\mathbin{\|}q\rightarrow^{*}r$, using that the last SOS rule applied in the derivation of each individual transition must be one of the three rules for $\mathbin{\|}$.
  
  To see that $p\rightarrow^{*}q^n$ for all $n\in\mathbb{N}$ implies $q\sim_{\bb}\nil$, note that $p\rightarrow^{*}q^n$ implies $\depth(p)\geq\depth(q^n)$ and $\depth(q^n)=n\cdot\depth(q)$, from which it follows that $\depth(q)=0$ and hence $q\sim_{\bb}\nil$. This proves that $\rightarrow^{*}$ satisfies property~\ref{item:rttranspropsArchimedean}.
\end{proof}

%================================================
%=================================================

\section{Proof of Proposition~\ref{prop:bb_cancellation}}

\propbbcancellation*

\begin{proof}
  Let $p$ be an indecomposable process and let $q$ and $r$ be processes such that $[p]_{\sim_{\bb}}\not\prec [q]_{\sim_{\bb}}, [r]_{\sim_{\bb}}$, and suppose that $[p^k\mathbin{\|}q]_{\sim_{\bb}}=[p^k\mathbin{\|}r]_{\sim_{\bb}}$ for some $k\in\mathbb{N}$. We need to prove that $[q]_{\sim_{\bb}}=[r]_{\sim_{\bb}}$, and for this it suffices to argue that the relation
  \begin{equation*}
      \mathcal{R}=\{(q,r)\}\cup{\sim_{\bb}}
  \end{equation*}
  is a branching bisimulation relation.
  
  The weak decomposition order $\preceq$, since it is well-founded, induces a well-founded order on triples of $\sim_{\bb}$-equivalence classes of processes, defining
    $([p_1]_{\sim_{\bb}},[p_2]_{\sim_{\bb}},[p_3]_{\sim_{\bb}})
      \preceq
     ([q_1]_{\sim_{\bb}},[q_2]_{\sim_{\bb}},[q_3]_{\sim_{\bb}})$
   if, and only if,
     $[p_1]_{\sim_{\bb}}\preceq [q_1]_{\sim_{\bb}}$
       and whenever $[p_1]_{\sim_{\bb}}=[q_1]_{\sim_{\bb}}$, then also $[p_2]_{\sim_{\bb}}\preceq [q_2]_{\sim_{\bb}}$ and $[p_3]_{\sim_{\bb}}\preceq [q_3]_{\sim_{\bb}}$.
  Our proof is by induction on the well-founded order $\preceq$ on triples of $\sim_{\bb}$-equivalence classes of processes: we assume, by way of induction hypothesis, that for all $p'$, $q'$ and $r'$ such that
  \begin{equation*}
     ([(p')^{k}\mathbin{\|}q']_{\sim_{\bb}},[q']_{\sim_{\bb}},[r']_{\sim_{\bb}})
      \prec
     ([p^{k}\mathbin{\|}q]_{\sim_{\bb}},[q]_{\sim_{\bb}},[r]_{\sim_{\bb}})
  \end{equation*}
  we have that $(p')^k\mathbin{\|}q'\sim_{\bb} (p')^k\mathbin{\|} r'$ implies $q'\sim_{\bb}r'$,
  and establish the property for $p$, $q$ and $r$.
  
  Since $q\rightarrow q'$ implies $\size(q)>\size(q')$, and hence is $\rightarrow$ is terminating, we may, without loss of generality, choose $q$ such that there does not exist $q'$ such that $q\rightarrow q'$ and $q\sim_{\bb}q'$, and similarly we may, without loss of generality, choose $r$ such that there does not exist an $r'\neq r$ such that $r\rightarrow r'$ and $r\sim_{\bb}r'$.
  
  Since $\sim_{\bb}$ is the largest branching bisimulation, it is immediate that all pairs in $\sim_{\bb}$ satisfy the conditions of branching bisimulations. We do need to establish that the pair $(q,r)$ also satisfies these conditions, and to this end suppose that $q\trans[\mu]q'$ for some $q'$; we establish that either $\mu=\tau$ and $q'\mathrel{\mathcal{R}} r$, or there exist processes $r'$ and $r''$ such that $r\trans[\epsilon]r'\trans[\mu]r''$, $q\mathrel{\mathcal{R}}r'$ and $q'\mathrel{\mathcal{R}}r''$. We distinguish two cases:
  \begin{enumerate}
      \item Suppose that $p^k\mathbin{\|}q'\sim_{\bb} p^k\mathbin{\|}q$. Then, since $\sim_{\bb}$ preserves depth and the depth of a parallel composition is the sum of the depths of its components, we must have that $\depth(q')=\depth(q)$ and hence $\mu=\tau$. Now, by the choice of $q$ it follows that $[q']_{\sim_{\bb}}\prec [q]_{\sim_{\bb}}$. Moreover, from $p^k\mathbin{\|}q'\sim_{\bb} p^k\mathbin{\|}q$ it follows that $[p^k\mathbin{\|}q']_{\sim_{\bb}}=[p^k\mathbin{\|}q]_{\sim_{\bb}}$. Thus we find that
      \begin{equation*}
        ([p^{k}\mathbin{\|}q']_{\sim_{\bb}},[q']_{\sim_{\bb}},[r]_{\sim_{\bb}})
      \prec
        ([p^{k}\mathbin{\|}q]_{\sim_{\bb}},[q]_{\sim_{\bb}},[r]_{\sim_{\bb}})
      \enskip,
      \end{equation*}
      so by the induction hypothesis $q'\sim_{\bb} r$, and hence $q'\mathrel{\mathcal{R}}r$.
      \item Suppose that $p^k\mathbin{\|}q'\not\sim_{\bb} p^k\mathbin{\|}q$. Then
        $[p^k\mathbin{\|}q']_{\sim_{\bb}}\prec [p^k\mathbin{\|}q]_{\sim_{\bb}}$, and,
      by the induction hypothesis, the weak decomposition order $\preceq$ on the partial commutative submonoid $\{x\mid x\preceq [p^k\mathbin{\|}q']_{\sim_{\bb}}\}$ of $\Proc$ modulo $\sim_{\bb}$ satisfies power cancellation. Hence, by \cite[Theorem 34]{Bas16}, if $s$ is any process such that $p^k\mathbin{\|}q'\sim_{\bb};\rightarrow^{*}s$ (where $;$ denotes relation composition), then $s$ has a unique parallel decomposition.
      
      From $q\trans[\mu]q'$ if follows that $p^k\mathbin{\|}q \trans[\mu]p^k\mathbin{\|}q'$, and hence, since $p^k\mathbin{\|}q\sim_{\bb}p^k\mathbin{\|}r$ there exist processes $p',p'',r',r''$ such that
      \begin{gather*}
        p^k\rightarrow^{*}p'\rightarrow^{*}p''\enskip,\\
        r\rightarrow^{*}r'\rightarrow^{*}r''\enskip,\\
        p^k\mathbin{\|}q\sim_{\bb}p'\mathbin{\|}r'\enskip,\ \text{and}\\
        p^k\mathbin{\|}q'\sim_{\bb}p''\mathbin{\|}r''\enskip.
      \end{gather*}
      Moreover, from $p''\mathbin{\|}r''\sim_{\bb}p^k\mathbin{\|}q'\not\sim_{\bb}p^k\mathbin{\|}q\sim_{\bb}p^k\mathbin{\|}r$ it follows that either $p''\not\sim_{\bb} p^k$ or $r''\not\sim_{\bb} r$; we distinguish these two subcases:
      \begin{enumerate}
          \item Suppose $r''\not\sim_{\bb} r$. Then, since $[r'']_{\sim_{\bb}}\prec[r]_{\sim_{\bb}}$ and $[p]_{\sim_{\bb}}\not\prec [r]_{\sim_{\bb}}$, the unique parallel decomposition of $r''$ cannot have occurrences of a process branching bisimilar to $p$. Since the unique decomposition of $p^k\mathbin{\|}q'$ has at least $k$ occurrences of a process branching bisimilar to $p$, it follows that
            $[p^k]_{\sim_{\bb}}\preceq [p'']_{\sim_{\bb}} \preceq [p']_{\sim_{\bb}} \preceq [p^k]_{\sim_{\bb}}$, so $p'\sim_{\bb}p''\sim_{\bb}p^k$. From $p^k\mathbin{\|}q'\sim_{\bb}p^k\mathbin{\|}r''$ it then follows by the induction hypothesis that $q'\sim_{\bb}r''$, and hence $q'\mathrel{\mathcal{R}}r''$.
          It remains to establish that $q\mathrel{\mathcal{R}}r'$. If $r'=r$, then, since $q\mathrel{\mathcal{R}}r$ this is immediate. If $r'\neq r$, then by the choice of $r$ such that there does not exist $r'$ such that $r\rightarrow r'$ and $r\sim_{\bb}r'$, it follows that $[r']_{\sim_{\bb}}\prec [r]_{\sim_{\bb}}$. Furthermore, since clearly $p^k\mathbin{\|}r \trans[\epsilon] p^k\mathbin{\|}r' \trans[\epsilon] p'\mathbin{\|}r'$ and $p^k\mathbin{\|}r \sim_{\bb} p'\mathbin{\|}r'$, it follows by the stuttering property
          %\todo{BL: should still be mentioned in Section 2} 
          that $p^k\mathbin{\|}r\sim_{bb}p^k\mathbin{\|}r'$. Then $q\sim_{\bb}r'$ follows by the induction hypothesis, and hence $q\mathrel{\mathcal{R}}r'$.
          \item Suppose $p''\not\sim_{\bb} p^k$. Then $[p'']_{\sim_{\bb}}\prec[p^k]_{\sim_{\bb}}$, so the multiplicity of $p$ in the unique decomposition of $p''$ is at most $k-1$. Hence, since $p''\mathbin{\|}r''\sim_{\bb}p^k\mathbin{\|}q'$, it follows that $p$ must be an element of the parallel decomposition of $r''$. This means that $[p]_{\sim_{\bb}}\preceq [r'']_{\sim_{\bb}}\preceq [r']_{\sim_{\bb}} \preceq [r]_{\sim_{\bb}}$ while at the same time $[p]_{\sim_{\bb}}\not\prec [r]_{\sim_{\bb}}$ by assumption. It follows that
          \begin{equation*}
             p\sim_{\bb}r\sim_{\bb}r'\sim_{\bb}r''\enskip.
          \end{equation*} Hence, since $r''$ can contribute at most $1$ to the multiplicity of $p$ in the unique parallel decomposition of $p''\mathbin{\|}r''$, while the multiplicity of $p$ is at least $k$, it now follows that the multiplicity of $p$ in the parallel decomposition of $p''$ must be $k-1$. So we can assume without loss of generality that there exist processes $p_1,\dots,p_k$ and $p_1'$ such that
          \begin{gather*}
             p' = p_1\mathbin{\|}p_2\mathbin{\|}\dots\mathbin{\|}p_k\enskip,\\
             p''= p_1' \mathbin{\|}p_2\mathbin{\|}\dots\mathbin{\|}p_k\enskip,\\
             p\trans[\epsilon] p_i\quad (1\leq i \leq k)\enskip,\\
             p\sim_{\bb}p_i\quad (2\leq i \leq k)\enskip,\ \text{and}\\
             p_1\trans[\mu]p_1'\enskip.
          \end{gather*}
          We must have that $p_1'\not\sim_{\bb}p$, for otherwise we would have $p''\sim_{\bb}p^k$, contradicting the assumption in this case. Therefore, since $p_1\sim_{\bb} p\sim_{\bb}r$, there exist $r_1'$ and $r_1''$ such that $r\trans[\epsilon]r_1'\trans[\mu]r_1''$ with $p\sim_{\bb}r_1'$ and $p_1'\sim_{\bb}r_1''$.
          By our choice of $r$ such that there does not exist an $r_1'\neq r$ such that $r\rightarrow r_1'$ and $r\sim_{\bb}r_1'$, it is immediate that $r=r_1'$ and hence $q\mathrel{\mathcal{R}}r'$.
          Moreover, from $p^k\mathbin{\|}q' \sim_{\bb} r\mathbin{\|}p_2\mathbin{\|}\cdots\mathbin{\|}p_k\mathbin{\|}p_1'$, $p\sim_{\bb}r$ and $p\sim_{\bb}p_i$ ($2\leq i \leq k$) it follows that $q'\sim_b p_1'\sim_b r_1''$, so $q'\mathrel{\mathcal{R}}r_1''$.
      \end{enumerate}
  \end{enumerate}
  
\end{proof}

%=============================================
%==============================================

\section{Properties of processes}
\label{sec:preliminaries}

In this section we discuss some properties of special processes on which we will build our proof of the negative result, alongside some general properties of terms modulo (rooted) branching bisimilarity.

First of all, we notice that the (rooted) depth of a closed instance of a term is always an upper bound to the (rooted) depth of the closed instances of the variables occurring in it.

\begin{lemma}
\label{lem:var_depth}
Let $t$ be a term, and $\sigma$ be a closed substitution.
For each $x \in \var(t)$, $\depth(\sigma(x)) \le \depth(\sigma(t))$, and $\rdepth(\sigma(x))\le \rdepth(\sigma(t))$.
\end{lemma}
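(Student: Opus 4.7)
The plan is to establish both inequalities by structural induction on $t$, handling the observable-depth statement first and then bootstrapping it to the rooted-depth statement. The base case $t = x$ gives $\sigma(t) = \sigma(x)$, so the inequality is trivial, whereas $t = \nil$ and $t = y$ with $y \neq x$ are vacuous since $x \notin \var(t)$. In the prefix case $t = \mu.t'$, every observable trace of $\sigma(t')$ extends to one of $\sigma(t) = \mu.\sigma(t')$ of at least the same length (prepending $\mu$ when $\mu \in \Act \cup \overline{\Act}$, and absorbing the leading $\tau$-step otherwise), so $\depth(\sigma(t')) \leq \depth(\sigma(t))$, and the induction hypothesis closes the case. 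The choice and parallel cases follow from the inductive characterisations $\depth(\sigma(t_1 + t_2)) = \max(\depth(\sigma(t_1)), \depth(\sigma(t_2)))$ and $\depth(\sigma(t_1 \mathbin{\|} t_2)) = \depth(\sigma(t_1)) + \depth(\sigma(t_2))$ recorded earlier, together with the non-negativity of depth.

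For the rooted-depth statement, I would first establish the auxiliary observation that $\rdepth(p) \leq 1 + \depth(p)$ for every process $p$. Indeed, if $p \trans[\alpha] p'$ with $\alpha \in \Act \cup \overline{\Act}$, then $\depth(p) \geq 1 + \depth(p')$, whereas if $p \trans[\tau] p'$, then $\depth(p) \geq \depth(p')$; in either case $1 + \depth(p') \leq 1 + \depth(p)$, and taking the supremum over first transitions yields the claim. With this in hand, a second structural induction on $t$ closes the argument. The prefix case reduces to the chain $\rdepth(\sigma(x)) \leq 1 + \depth(\sigma(x)) \leq 1 + \depth(\sigma(t')) = \rdepth(\sigma(\mu.t'))$, invoking the depth lemma on $t'$. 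The choice case is immediate from the inductive characterisation of $\rdepth$ over $+$. For the parallel case with $x \in \var(t_i)$, I would argue that every transition $\sigma(t_i) \trans[\mu] p$ lifts to a transition $\sigma(t_1) \mathbin{\|} \sigma(t_2) \trans[\mu] p \mathbin{\|} \sigma(t_{3-i})$ whose target has depth at least $\depth(p)$; taking the supremum over $\mu$ and $p$ gives the monotonicity statement $\rdepth(\sigma(t_i)) \leq \rdepth(\sigma(t_1) \mathbin{\|} \sigma(t_2))$, and the induction hypothesis then concludes.

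No part of the argument is genuinely difficult, but the one small point to watch is that the parallel case for rooted depth should not be handled via an additive identity analogous to that for depth: a single root step of a parallel composition originates from only one component, or from one synchronisation, and not from both simultaneously. The argument above circumvents this subtlety by relying only on monotonicity of $\rdepth$ in each parallel component, which is all the lemma requires.
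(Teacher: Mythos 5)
Your proof is correct. The paper states this lemma without proof, and your argument is exactly the routine structural induction one would supply: the observable-depth half follows directly from the inductive characterisation of $\depth$, and the rooted-depth half from the auxiliary bound $\rdepth(p) \le 1 + \depth(p)$ together with monotonicity of $\rdepth$ in each parallel component. Your caution about the parallel case is well placed: the additive identity $\rdepth(p \mathbin{\|} q) = \rdepth(p) + \rdepth(q)$ listed in the paper's inductive characterisation can in fact fail in degenerate cases (e.g.\ $p = q = \tau.\nil$ gives $\rdepth(p \mathbin{\|} q) = 1$ but $\rdepth(p) + \rdepth(q) = 2$), whereas the weaker inequality $\rdepth(p) \le \rdepth(p \mathbin{\|} q)$ that you actually use is sound and is all the lemma needs.
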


Then, we recall the notion of $\nil$ \emph{factor} of a term:

\begin{definition}
We say that a term $t$ has a $\nil$ \emph{factor} if it contains a subterm of the form  $t' \mathbin{\|} t''$, and either $t' \sim_\rbb \nil$ or $t'' \sim_\rbb \nil$.
\end{definition}

Next, we prove that whenever two terms are provably equal modulo rooted branching bisimilarity, then they have the same variables as summands.

\begin{remark}
\label{rmk:summands}
Whenever a process term $t$ has neither $\nil$ summands nor factors then we can assume that, for some finite non-empty index set $I$, $t = \sum_{i \in I} t_i$ for some terms $t_i$ such that none of them has $+$ as head operator and moreover, none of them has $\nil$ summands nor factors.
\end{remark}

\begin{restatable}{proposition}{propsamevar}
\label{prop:same_var}
Let $t,u$ be two terms without $\nil$ summands and $\nil$ factors, and let $t \approx u$ be sound modulo $\sim_\rbb$.
If $t$ has a summand $x$, then so does $u$.
\end{restatable}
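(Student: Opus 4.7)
My plan is to prove Proposition~\ref{prop:same_var} by contraposition, instantiating soundness of $t\approx u$ modulo $\sim_\rbb$ on a carefully chosen closed substitution $\sigma$ that pins down $x$. Concretely, fix any action $a\in\Act$ and an integer $N$ strictly greater than $\max(\size(t),\size(u))$, and define $\sigma(x)=a.a^N$ (a linear chain of $N+1$ $a$-prefixes, of rooted depth $N+1$) and $\sigma(y)=\nil$ for every variable $y\neq x$. Because $x$ is a summand of $t$, the transition $\sigma(t)\trans[a]a^N$ is available, and soundness forces a matching $\sigma(u)\trans[a]q$ with $q\sim_\bb a^N$. This transition is contributed by some summand $u_i$ of $u$, and by Remark~\ref{rmk:summands} I may assume $u_i$ has no $+$ at the head and itself has no $\nil$-summands or factors.

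The technical centerpiece is a depth calculus for this $\sigma$: a routine structural induction on an open term $s$ shows that $\depth(\sigma(s))\leq\size(s)$ whenever $x\notin\var(s)$, and $\depth(\sigma(s))\geq\depth(\sigma(x))=N+1$ whenever $x\in\var(s)$. This uses only the inductive definition of $\depth$ (additive for $\mathbin{\|}$, maximum for $+$, incremented by one for observable prefixes and unchanged for $\tau$-prefixes) together with $\depth(\sigma(x))=N+1$ and $\depth(\sigma(y))=0$ for $y\neq x$. With this tool I case-analyse the head operator of $u_i$. If $u_i$ is a variable, it must be $x$, since only $\sigma(x)$ exhibits an $a$-transition. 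If $u_i=\mu.u_i'$, then $\mu=a$ and $\sigma(u_i')\sim_\bb a^N$, forcing $\depth(\sigma(u_i'))=N$; but the depth calculus rules this out in both subcases—presence of $x$ in $u_i'$ gives depth at least $N+1$, absence gives depth at most $\size(u_i')\leq\size(u)<N$.

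The delicate case is $u_i=u_i'\mathbin{\|}u_i''$. Since no communication produces an observable $a$, without loss of generality $\sigma(u_i')\trans[a] q_1$ and $q=q_1\mathbin{\|}\sigma(u_i'')$. Indecomposability of $a^N$ together with the unique-parallel-decomposition result (Proposition~\ref{prop:unique_par} and Corollary~\ref{cor:cancellation}) forces exactly one of $q_1,\sigma(u_i'')$ to be $\sim_\bb\nil$ while the other is $\sim_\bb a^N$. The subcase $\sigma(u_i'')\sim_\bb a^N$ is immediately ruled out by the depth calculus applied to $u_i''$. The remaining subcase has $\sigma(u_i'')\sim_\bb\nil$, to be excluded by combining the no-$\nil$-factor hypothesis on $u_i$ with a secondary appeal to soundness: any such degenerate factor contributes an extra initial $\tau$-transition to $\sigma(u)$, and matching this in $\sigma(t)$—whose $\tau$-moves can arise only from $t$'s own structure, since $\sigma(x)=a.a^N$ has no initial $\tau$ and $\sigma(y)=\nil$ has no transitions at all—leads either to a contradiction with soundness or to a recursive descent that terminates at a variable, necessarily $x$. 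Hence $u_i=x$, proving the claim.

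The principal obstacle, as flagged above, is the parallel composition case, and specifically the need to reconcile the semantic no-$\nil$-factor hypothesis (stated with respect to $\sim_\rbb$) with the coarser $\sim_\bb$-collapses that may occur under a specific closed substitution (for instance, a factor of the form $\tau.\nil$ is not $\sim_\rbb\nil$ but is $\sim_\bb\nil$). All three ingredients—the depth calculus, the unique-parallel-decomposition machinery, and the rooted-level $\tau$-analysis—must be combined to neutralise this case, and the remaining cases are clean applications of the calculus.
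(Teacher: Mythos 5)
Your overall strategy coincides with the paper's: instantiate soundness on a closed substitution $\sigma$ that maps $x$ to a long chain of $a$-prefixes and every other variable to $\nil$, observe that the summand $x$ of $t$ yields an $a$-transition of $\sigma(t)$ to the tail of that chain, use the root condition to obtain a matching transition from some summand $u_i$ of $u$, and then force $u_i=x$ by depth bookkeeping (the paper bounds the irrelevant part by $\rdepth(\sigma_{\nil}(\cdot))$ where you use $\size$; that difference is inessential). The variable and prefix cases are handled correctly and exactly as in the paper.

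The parallel case, however, rests on a false claim: the sequential chain $a^N$ is \emph{not} indecomposable. Since no co-action $\overline{a}$ is available, no communication can occur and $a.a\sim_\bb a\mathbin{\|}a$; more generally $a^N\sim_\bb a^{k}\mathbin{\|}a^{N-k}$ for every $0\le k\le N$. This is precisely why the paper's indecomposability lemmas (Lemma~\ref{lem:ai_indecomposable}, Lemma~\ref{lem:pn_indecomposable}) are stated for $a^{\le i}=\sum_{j=1}^{i}a^{j}$ and $p_n$ rather than for $a^{i}$; see also Remark~\ref{rmk:norm}. Consequently the dichotomy you extract from $q_1\mathbin{\|}\sigma(u_i'')\sim_\bb a^N$ --- that exactly one factor is $\sim_\bb\nil$ and the other is $\sim_\bb a^N$ --- does not follow: the intermediate splittings $q_1\sim_\bb a^{k}$, $\sigma(u_i'')\sim_\bb a^{N-k}$ with $0<k<N$ remain, and your depth calculus only yields $\depth(\sigma(u_i''))<N$, which does not exclude them. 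In addition, the closing argument for the subcase $\sigma(u_i'')\sim_\bb\nil$ is not sound as stated: a factor such as $u_i''=y$ with $\sigma(y)=\nil$ is $\sim_\bb\nil$ under your substitution without contributing any initial $\tau$-transition, so there is nothing to match in $\sigma(t)$ and no contradiction arises; and a recursive descent locating an occurrence of $x$ inside $u_i'$ would at best show that $x\in\var(u_i)$, not that $u_i$ \emph{is} the summand $x$. The paper closes this case without any appeal to unique decomposition, arguing via Lemma~\ref{lem:var_depth} that, once $x\in\var(u_j)$ is known, no term of the remaining shapes affords an $a$-transition to a process of depth exactly $m$; you would need to replace the indecomposability step by an argument of that kind (or invoke soundness on a second substitution that makes the would-be parallel factor observable) for your proof to go through.
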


\begin{proof}
Observe, first of all, that since $t$ and $u$ have no $\nil$ summands or factors, by Remark~\ref{rmk:summands} we can assume that $t  =  \sum_{i\in I} t_i$ and $u  =  \sum_{j\in J} u_j$ for some finite non-empty index sets $I, J$, where none of the $t_i$ ($i\in I$) and $u_j$ ($j\in J$) has $+$ as its head operator, and none of the $t_i$ ($i\in I$) and $u_j$ ($j\in J$) have $\nil$ summands or factors.
Therefore, if $t$ has a summand $x$, then that there is some index $i \in I$ such that $t_i = x$.
We aim to show that $u$ has a summand $x$ as well.
Consider the substitution $\sigma_{\nil}$ mapping each variable to $\nil$.  
Pick an integer $m$ larger than the rooted depth of $\sigma_{\nil}(t)$ and of $\sigma_{\nil}(u)$. 
Let $\sigma$ be the substitution mapping $x$ to the term $a^{m+1}$ and agreeing with $\sigma_{\nil}$ on all the other variables.

As $t \approx u$ is sound modulo rooted branching bisimilarity, we have that
$
\sigma(t) \sim_\rbb \sigma(u).
$
Moreover, the term $\sigma(t)$ affords the transition $\sigma(t) \trans[a] a^{m}$, for $t_i=x$ and $\sigma(x)=a^{m+1} \trans[a] a^{m}$. 
Hence, for some closed term $p$,
\[
\sigma(u)= \sum_{j\in J} \sigma(u_j) \trans[a] p \sim_\bb a^{m} \enspace .
\]
This means that there is a $j \in J$ such that $\sigma(u_j) \trans[a] p$. 
We claim that this $u_j$ can only be the variable $x$. 
To see that this claim holds, observe, first of all, that $x\in \var(u_j)$.
In fact, if $x$ did not occur in $u_j$, then we would reach a contradiction thus:
\[
m = 
\depth(p) < \rdepth(\sigma(u_j)) = 
\rdepth(\sigma_{\nil}(u_j)) \le
\rdepth(\sigma_{\nil}(u)) < 
m \enspace .
\]
Using this observation and Lemma~\ref{lem:var_depth}, it is not hard to show that, for each of the other possible forms $u_j$ may
have, $\sigma(u_j)$ does not afford an $a$-labelled transition leading to a term of depth $m$.  
We may therefore conclude that $u_j=x$, which was to be shown.
\end{proof}

We now proceed to analyse the properties of the following family of processes, which will play a crucial role in the proof of our first main result:
\[
p_n = \sum_{i = 2}^n a.a^{\le i}
\qquad (n \ge 2)
\]
where $a^{\le i} = \sum_{j = 1}^i a^j$, and $a^j = a.a^{j-1}$, for each $j\in \{i,\dots,i\}$, $i \in \{2,\dots,n\}$.

\begin{restatable}{lemma}{lemaiindecomposable}
\label{lem:ai_indecomposable}
Let $i\ge 2$.
Process $a^{\le i}$ is indecomposable.
\end{restatable}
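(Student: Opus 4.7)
The plan is to argue by contradiction, letting the observable depth function do all the work. First, $a^{\le i} \not\sim_\bb \nil$ is immediate: $a^{\le i} \trans[a]$ while $\nil$ has no transitions, and since $\tau \notin \init(a^{\le i})$ no stuttering match is available; alternatively, $\depth(a^{\le i}) = i \ge 2 > 0 = \depth(\nil)$ and Lemma~\ref{lem:bb_same_depth} settles it.

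For the main claim I would suppose, toward a contradiction, that $a^{\le i} \sim_\bb p \mathbin{\|} q$ with $p, q \not\sim_\bb \nil$. A short direct bisimulation argument shows that any process of observable depth $0$ is branching bisimilar to $\nil$, hence $\depth(p), \depth(q) \ge 1$; combined with Lemma~\ref{lem:bb_same_depth} and the inductive formula $\depth(p \mathbin{\|} q) = \depth(p) + \depth(q)$, this yields $\depth(p) + \depth(q) = i$.

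The crux is to match the ``short'' transition $a^{\le i} \trans[a] \nil$ that comes from the summand~$a$. By Definition~\ref{def:bb} there must exist a $\tau$-path $p \mathbin{\|} q \trans[\varepsilon] \hat r$ followed by a transition $\hat r \trans[a] r$ with $\hat r \sim_\bb a^{\le i}$ and $r \sim_\bb \nil$. A routine induction on the length of that $\tau$-path, using that every SOS rule for $\mathbin{\|}$ preserves the outermost parallel structure, gives $\hat r = \hat p \mathbin{\|} \hat q$ where $\hat p$ and $\hat q$ are reachable from $p$ and $q$ by sequences of transitions (with any synchronisations matched along the way). Since every such transition, whether $\tau$ or observable, is depth-non-increasing on its own component, $\depth(\hat p) \le \depth(p)$ and $\depth(\hat q) \le \depth(q)$; combined with $\depth(\hat p) + \depth(\hat q) = \depth(a^{\le i}) = \depth(p) + \depth(q)$ we are forced into equality, giving $\depth(\hat p), \depth(\hat q) \ge 1$. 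The contradiction then follows at once: the $a$-move $\hat r \trans[a] r$ cannot arise from a synchronisation (those produce $\tau$), so it is caused by either $\hat p$ or $\hat q$ alone. In both cases $r$ is of the form $\tilde p \mathbin{\|} \hat q$ or $\hat p \mathbin{\|} \tilde q$, and the untouched component contributes depth at least $1$, whence $\depth(r) \ge 1$, contradicting $r \sim_\bb \nil$. The main technical subtlety will be the careful handling of the parallel-structure preservation and the depth-bound tracking along the $\tau$-path in the presence of CCS synchronisations, but both reduce to straightforward inductions on the SOS rules.
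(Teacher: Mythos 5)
Your proof is correct and follows essentially the same strategy as the paper's: both argue by contradiction, match the transition $a^{\le i} \trans[a] \nil$ in the branching bisimulation game, decompose the matching sequence over the two parallel components, and derive a contradiction from observable depth via Lemma~\ref{lem:bb_same_depth}. The only (harmless) difference is that the paper first rules out synchronisation between $p$ and $q$ and concludes $\depth(p)\ge i$, hence $\depth(p\mathbin{\|}q)>i$, whereas your depth-conservation bookkeeping absorbs possible synchronisations along the $\tau$-path and instead contradicts $r\sim_\bb\nil$ directly.
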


\begin{proof}
Assume, towards a contradiction, that there are processes $p,q \not\sim_\bb \nil$ such that $a^{\le i} \sim_\bb p \mathbin{\|} q$.
As $a^{\le i} \trans[a] \nil$, we have that $p \mathbin{\|} q \trans[\varepsilon] r \trans[a] r'$ for some processes $r,r'$ such that $r \sim_\bb a^{\le i}$ and $r' \sim_\bb \nil$.
Notice that $p \not\sim_\bb \nil$ implies that there is at least one $p' \in \der(p)$ such that $p' \trans[\mu] p''$ for some process $p''$ and action $\mu \neq \tau$.
A similar property holds for $q$.
Since, moreover, $a^{\le i}$ can only perform sequences of $a$-moves, $p \mathbin{\|} q \sim_\bb a^{\le i}$ implies that also $p$ and $q$ can perform (weak) sequences of $a$-moves.
In particular, it follows that there is no derivative of $p$ or $q$ that can perform action $\overline{a}$.
As a consequence, $p$ and $q$ cannot synchronise.
This implies that there are processes $p',q'$ such that $p \trans[\varepsilon] p'$, $q \trans[\varepsilon] q'$ and $r = p' \mathbin{\|} q'$.
Then $r \trans[a] r'$ can follow either from $p' \trans[a] p''$, or $q' \trans[a] q''$, for some $p'',q''$.
Assume, without loss of generality, that $p' \trans[a] p''$ and $r' = p'' \mathbin{\|} q'$ (the case $q' \trans[a] q''$ is analogous).
Now $p'' \mathbin{\|} q' \sim_\bb \nil$ implies $p''\sim_\bb \nil$ and $q' \sim_\bb \nil$.
Hence, $p' \sim_\bb p' \mathbin{\|} q' = r \sim_\bb a^{\le i}$ and, thus, $\depth(p) \ge i$.
Since $q \not\sim_\bb \nil$ gives $\depth(q) \ge 1$, we get that $\depth(p \mathbin{\|} q) > i$.
This gives a contradiction with $p \mathbin{\|} q \sim_\bb a^{\le i}$, since $\depth(a^{\le i}) = i \neq \depth(p \mathbin{\|} q)$ (cf.\ Lemma~\ref{lem:bb_same_depth}).
\end{proof}

\begin{remark}
\label{rmk:norm}
We could not directly infer that $a^{\le i}$ is indecomposable from the fact that the norm of the process is $1$ (where the norm of a process is the length of the shortest maximal trace that it can perform).
This is due to the fact that we are considering the full merge operator.
Consider, for instance, the process $\tau.a + \alpha.a.\overline{\alpha} + \overline{\alpha}.\alpha.a + \alpha.\overline{\alpha}.a$.
This process has observable norm $1$, yet it is not indecomposable since it is branching bisimilar to $\alpha.a \mathbin{\|} \overline{\alpha}$.
Similarly, the process $\tau + \overline{\alpha}.\alpha + \alpha.\overline{\alpha}$ has norm $1$, but it is (rooted) branching bisimilar to $\alpha \mathbin{\|} \overline{\alpha}$.
\end{remark}

\begin{restatable}{lemma}{lempnindecomposable}
\label{lem:pn_indecomposable}
Let $n\ge 2$.
Process $p_n$ is indecomposable.
\end{restatable}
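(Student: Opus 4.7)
The plan is to follow the approach of Lemma~\ref{lem:ai_indecomposable}, proving the result by contradiction. Suppose $p_n \sim_\bb p \mathbin{\|} q$ with $p, q \not\sim_\bb \nil$. As in the proof of Lemma~\ref{lem:ai_indecomposable}, since $p_n$ and all its derivatives only perform actions in $\{a, \tau\}$, no reachable state of $p$ or of $q$ may perform any other action; hence $p$ and $q$ cannot synchronise, and every $\tau$-transition of $p \mathbin{\|} q$ comes from $p$ alone or from $q$ alone. Using that $\sim_\bb$ preserves observable depth and that depths add under $\mathbin{\|}$, we also obtain $\depth(p) + \depth(q) = \depth(p_n) = n+1$, with both summands at least $1$.

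The first key step is to simulate the transition $p_n \trans[a] a^{\le n}$. By the definition of branching bisimulation, this yields $p \mathbin{\|} q \trans[\varepsilon] \hat{p} \mathbin{\|} \hat{q} \trans[a] s$ with $\hat{p} \mathbin{\|} \hat{q} \sim_\bb p_n$ and $s \sim_\bb a^{\le n}$, where the absence of synchronisation forces $p \trans[\varepsilon] \hat{p}$, $q \trans[\varepsilon] \hat{q}$, and the concluding $a$-step to be performed by exactly one factor. Since $a^{\le n}$ is indecomposable by Lemma~\ref{lem:ai_indecomposable}, one factor of $s$ is $\sim_\bb \nil$ and the other is $\sim_\bb a^{\le n}$; a brief depth-based case split eliminates every scenario except, modulo swapping the names of $p$ and $q$, the following configuration: $\hat{q} \trans[a] \hat{q}'$ with $\hat{q}' \sim_\bb \nil$, $\hat{p} \sim_\bb a^{\le n}$, $\depth(p) = n$, and $\depth(q) = 1$.

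The pivotal step is then to upgrade $\hat{p} \sim_\bb a^{\le n}$ to $p \sim_\bb a^{\le n}$. Since $p_n$ has no $\tau$-transitions, each individual $\tau$-move of $p \mathbin{\|} q$ must be matched by stuttering, so its target remains $\sim_\bb p_n \sim_\bb p \mathbin{\|} q$; Corollary~\ref{cor:cancellation} applied along each single $\tau$-step of the weak transition $p \trans[\varepsilon] \hat{p}$ then yields $p \sim_\bb \hat{p} \sim_\bb a^{\le n}$. Compositionality of $\sim_\bb$ with respect to $\mathbin{\|}$ gives $a^{\le n} \mathbin{\|} q \sim_\bb p \mathbin{\|} q \sim_\bb p_n$. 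Now the transition $a^{\le n} \trans[a] \nil$ provided by the summand $a$ of $a^{\le n}$ induces $a^{\le n} \mathbin{\|} q \trans[a] \nil \mathbin{\|} q \sim_\bb q$, which must be matched by some $p_n \trans[a] a^{\le i}$ with $i \in \{2, \dots, n\}$. Hence $q \sim_\bb a^{\le i}$, so $\depth(q) = i \ge 2$, contradicting $\depth(q) = 1$.

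The main obstacle is the depth-based bookkeeping after the first simulation together with the realisation that the single transition $p_n \trans[a] a^{\le n}$ is not by itself enough for the contradiction: one must first use Corollary~\ref{cor:cancellation} to transfer $\sim_\bb a^{\le n}$ from the derivative $\hat{p}$ back to $p$, and only then can the second $a$-transition of $a^{\le n}$ (down to $\nil$) be exploited to trap $q$. This gives a uniform argument valid for every $n \ge 2$.
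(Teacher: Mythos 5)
Your proof is correct, and although it shares the paper's opening moves (argue by contradiction, rule out synchronisation between $p$ and $q$, simulate an $a$-transition of $p_n$ and invoke the indecomposability of $a^{\le i}$ from Lemma~\ref{lem:ai_indecomposable}), the route to the contradiction is genuinely different. The paper fixes an arbitrary $i$, performs a four-way case analysis on which component contributes the $\varepsilon$- and $a$-steps, and in each branch shows that $p_n$ would need an $a$-derivative branching bisimilar either to some $a^j$ or to a parallel composition $a^{\le i} \mathbin{\|} q$ with $q \not\sim_\bb \nil$, neither of which exists among the $a^{\le k}$. You instead target the deepest derivative $a^{\le n}$, use additivity of observable depth to pin down $\depth(p)=n$ and $\depth(q)=1$ exactly (which collapses the case analysis to a single surviving configuration), transfer $\sim_\bb a^{\le n}$ from the $\varepsilon$-derivative $\hat p$ back to $p$ by combining the fact that the $\tau$-free process $p_n$ can only match $\tau$-steps by stuttering with Corollary~\ref{cor:cancellation}, and then exploit the shallow transition $a^{\le n}\trans[a]\nil$ to force $q \sim_\bb a^{\le i}$ for some $i \ge 2$, contradicting $\depth(q)=1$. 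The price is a dependency on Corollary~\ref{cor:cancellation}, i.e., on the unique-decomposition machinery of Section~\ref{sec:indecomposables}; this is legitimate, since that machinery is established independently of Lemmas~\ref{lem:ai_indecomposable} and~\ref{lem:pn_indecomposable}, so there is no circularity. What it buys is one clean final contradiction in place of the paper's branch-by-branch argument.
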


\begin{proof}
Assume, towards a contradiction, that there are processes $p,q \not\sim_\bb \nil$ such that $a^{\le i} \sim_\bb p \mathbin{\|} q$.
Notice that $p \not\sim_\bb \nil$ implies that there is at least one $p' \in \der(p)$ such that $p' \trans[\mu] p''$ for some process $p''$ and action $\mu \neq \tau$.
A similar property holds for $q$.
Given any $i \in \{2,\dots,n\}$, we have that $p_n \trans[a] a^{\le i}$.
Hence, from $p_n \sim_\bb p \mathbin{\|} q$, we get that $p \mathbin{\|} q \trans[\varepsilon] r \trans[a] r'$ for some processes $r,r'$ such that $r \sim_\bb p_n$ and $r' \sim_\bb a^{\le i}$.
We remark that, since $p_n$ can only perform sequences of $a$-moves, $p \mathbin{\|} q \sim_\bb p_n$ implies that also $p$ and $q$ can perform (weak) sequences of $a$-moves.
In particular, it follows that there is no derivative of $p$ or $q$ that can perform action $\overline{a}$.
As a consequence, $p$ and $q$ cannot synchronise.
We can then distinguish four cases, according to how the sequence of transitions $p \mathbin{\|} q \trans[\varepsilon] r \trans[a] r'$ is derived:
\begin{itemize}
\item $p \mathbin{\|} q \trans[\varepsilon] p' \mathbin{\|} q \trans[a] p'' \mathbin{\|} q$, with $p' \mathbin{\|} q \sim_\bb p_n$ and $p'' \mathbin{\|} q \sim_\bb a^{\le i}$.
Since $q \not\sim_\bb \nil$ and $a^{\le i}$ is indecomposable by Lemma~\ref{lem:ai_indecomposable}, we get that $p'' \sim_\bb \nil$ and $q \sim_\bb a^{\le i}$.
Hence, given any $j \in \{0,\dots,i-1\}$, $a^{\le i} \trans[a] a^j$ implies that $q \trans[\varepsilon] q' \trans[a] q''$ for some $q',q''$ such that $q' \sim_\bb a^{\le i}$ and $q'' \sim_\bb a^j$.
Therefore, $p \mathbin{\|} q \trans[\varepsilon] p \mathbin{\|} q' \trans[a] p \mathbin{\|} q''$ and $p_n \sim_\bb p \mathbin{\|} q$ imply that $p_n \trans[a] r$ for some process $r$ such that $r \sim_\bb q'' \sim_\bb a^j$ ($p_n$ cannot perform any $\tau$-move, hence it must match any initial sequence of $\tau$-transitions by doing nothing).
However, $p_n \trans[a] a^{\le k}$, for any $k \in \{2,\dots,n\}$, and there are no $j,k$ such that $a^j \sim_\bb a^{\le k}$, thus giving a contradiction with $p_n \sim_\bb p \mathbin{\|} q$.
\item $p \mathbin{\|} q \trans[\varepsilon] p \mathbin{\|} q' \trans[a] p \mathbin{\|} q''$, with $p \mathbin{\|} q' \sim_\bb p_n$ and $p \mathbin{\|} q'' \sim_\bb a^{\le i}$. 
This case is symmetric to the previous one and therefore omitted.
\item $p \mathbin{\|} q \trans[\varepsilon] p' \mathbin{\|} q' \trans[a] p'' \mathbin{\|} q'$, with $p' \mathbin{\|} q' \sim_\bb p_n$ and $p'' \mathbin{\|} q'\sim_\bb a^{\le i}$.
Since $a^{\le i}$ is indecomposable (Lemma~\ref{lem:ai_indecomposable}), we can distinguish two cases:
\begin{itemize}
\item $p'' \sim_\bb \nil$ and $q' \sim_\bb a^{\le i}$.
Then we can proceed as in the previous case, and show that $p \mathbin{\|} q \trans[\varepsilon] p \mathbin{\|} q' \trans[a] a^j$, for some $j \le i-1$, and thus $p_n \not\sim_\bb p \mathbin{\|} q$. 
\item $p'' \sim_\bb a^{\le i}$ and $q' \sim_\bb \nil$.
In this case we have that $p \mathbin{\|} q \trans[\varepsilon] p' \mathbin{\|} q \trans[a] p'' \mathbin{\|} q$.
Hence, since $p_n$ cannot perform any $\tau$-move, $p_n \sim_\bb p \mathbin{\|} q$ requires that $p_n \trans[a] a^{\le k}$ for some $k \in \{2,\dots,n\}$ such that $a^{\le k} \sim_\bb a^{\le i} \mathbin{\|} q$.
As $a^{\le k}$ is indecomposable (Lemma~\ref{lem:ai_indecomposable}), $a^{\le i} \not\sim_\bb \nil$ for any $i \in \{2,\dots,n\}$, and by the proviso of the lemma $q \not\sim_\bb \nil$, there is no index $k \in \{2,\dots,n\}$ realising the required equivalence.
Hence, also this case gives a contradiction with $p_n \sim_\bb p \mathbin{\|} q$.
\end{itemize}
\item $p \mathbin{\|} q \trans[\varepsilon] p' \mathbin{\|} q' \trans[a] p' \mathbin{\|} q''$, with $p' \mathbin{\|} q' \sim_\bb p_n$ and $p' \mathbin{\|} q''\sim_\bb a^{\le i}$.
This case is symmetric to the previous one and therefore omitted.
\end{itemize}
\end{proof}

The next lemma characterises the parallel components of a process that is branching bisimilar to the process $a \mathbin{\|} p_n$, for some $n \ge 2$.

\begin{restatable}{lemma}{lemapnpar}
\label{lem:apn_par}
Let $n \ge 2$.
Assume that $p \mathbin{\|} q \sim_\bb a \mathbin{\|} p_n$ for some processes $p,q \not\sim_\bb \nil$.
Then either $p \sim_\bb a$ and $q \sim_\bb p_n$, or $p \sim_\bb p_n$ and $q \sim_\bb a$.
\end{restatable}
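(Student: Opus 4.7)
The plan is to derive the lemma as a direct consequence of the unique parallel decomposition result (Proposition~\ref{prop:unique_par}), using the fact that both $a$ and $p_n$ are indecomposable. Lemma~\ref{lem:pn_indecomposable} already gives indecomposability of $p_n$, so I first need to verify that $a$ is indecomposable as well. This I would do via a depth argument: if $a \sim_\bb p_1 \mathbin{\|} p_2$, then by Lemma~\ref{lem:bb_same_depth} we have $1 = \depth(a) = \depth(p_1) + \depth(p_2)$, so one of $p_1, p_2$ has observable depth $0$. Since we are in the recursion-free fragment of CCS, a depth-$0$ process has only finite chains of $\tau$-transitions terminating in $\nil$, and a straightforward application of the stuttering property shows that it is branching bisimilar to $\nil$. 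Hence $a$ is indecomposable, and $\lbag a, p_n \rbag$ is a parallel decomposition of $a \mathbin{\|} p_n$.

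Next, I would use Proposition~\ref{prop:unique_par} to obtain parallel decompositions of $p$ and $q$ into indecomposables, say $p \sim_\bb p^{(1)} \mathbin{\|} \cdots \mathbin{\|} p^{(k)}$ and $q \sim_\bb q^{(1)} \mathbin{\|} \cdots \mathbin{\|} q^{(\ell)}$. Since $p \not\sim_\bb \nil$ and $q \not\sim_\bb \nil$, and an empty parallel product is $\nil$, we must have $k, \ell \ge 1$. Composing these, the multiset $\lbag p^{(1)}, \dots, p^{(k)}, q^{(1)}, \dots, q^{(\ell)} \rbag$ is a parallel decomposition of $p \mathbin{\|} q$, and by the assumption $p \mathbin{\|} q \sim_\bb a \mathbin{\|} p_n$ it is also a parallel decomposition of $a \mathbin{\|} p_n$.

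Finally, the uniqueness part of Proposition~\ref{prop:unique_par} supplies a bijection between this combined multiset and $\lbag a, p_n \rbag$ that respects $\sim_\bb$. In particular $k + \ell = 2$, so $k = \ell = 1$, giving $p \sim_\bb p^{(1)}$ and $q \sim_\bb q^{(1)}$ with $\lbag p^{(1)}, q^{(1)} \rbag = \lbag a, p_n \rbag$ modulo $\sim_\bb$. The two cases of the bijection yield precisely the desired dichotomy: either $p \sim_\bb a$ and $q \sim_\bb p_n$, or $p \sim_\bb p_n$ and $q \sim_\bb a$. The only point requiring any real care is the indecomposability of $a$ (which is easy) and the observation that no indecomposable factor can be $\sim_\bb \nil$ under Definition~\ref{def:indecomposable}, so that the hypothesis $p, q \not\sim_\bb \nil$ is enough to guarantee non-empty decompositions.
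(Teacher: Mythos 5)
Your proof is correct, but it takes a genuinely different route from the paper's. You obtain the lemma as a corollary of the unique parallel decomposition theorem (Proposition~\ref{prop:unique_par}): after checking that $a$ is indecomposable (your depth argument is sound --- if $a \sim_\bb p_1 \mathbin{\|} p_2$ then one component has observable depth $0$, and a depth-$0$ process can only ever perform $\tau$-steps and is therefore branching bisimilar to $\nil$), you decompose $p$ and $q$ into indecomposables, note that the combined multiset must be matched bijectively against $\lbag a, p_n \rbag$, conclude $k=\ell=1$, and read off the dichotomy. There is no circularity in this: Proposition~\ref{prop:unique_par} is established in Section~\ref{sec:indecomposables} independently of this lemma. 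The paper instead gives a self-contained bisimulation-game argument: it first rules out any communication between $p$ and $q$ (no derivative of either can perform $\overline{a}$, since $a \mathbin{\|} p_n$ only performs $a$'s), then follows the transition $a \mathbin{\|} p_n \trans[a] \nil \mathbin{\|} p_n \sim_\bb p_n$, applies the indecomposability of $p_n$ (Lemma~\ref{lem:pn_indecomposable}) to the matching state of $p \mathbin{\|} q$, and pins down $p \sim_\bb a$ and $q \sim_\bb p_n$ by explicit depth computations and a further case analysis. Your version is shorter and reuses the heavy machinery already available; the paper's version needs only the indecomposability of $p_n$ and elementary depth reasoning, at the cost of a longer case analysis. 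The one ingredient you supply that the paper never states explicitly is the indecomposability of $a$, and your justification of it is adequate.
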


\begin{proof}
First of all, we notice that we can apply the same reasoning used in the proof of Lemma~\ref{lem:ai_indecomposable} to argue that no $\tau$-move performed by any derivative of $p \mathbin{\|} q$ can follow from a communication between the two parallel components, for otherwise we get a contradiction with $p \mathbin{\|} q \sim_\bb a \mathbin{\|} p_n$.

We have that $a \mathbin{\|} p_n \trans[a] \nil \mathbin{\|} p_n \sim_\bb p_n$.
Hence, as $p \mathbin{\|} q \sim_\bb a \mathbin{\|} p_n$, there are processes $r,r'$ such that $p \mathbin{\|} q \trans[\varepsilon] r \trans[a] r'$, $r \sim_\bb a \mathbin{\|} p_n$, and $r' \sim_\bb p_n$.
Since no communication can occur, we have that there are processes $p',q'$ such that $p \trans[\varepsilon] p'$, $q \trans[\varepsilon] q'$, and $r = p' \mathbin{\|} q'$.
As $r \trans[a] r'$, assume, without loss of generality, that $p' \trans[a] p''$ and $r' = p'' \mathbin{\|} q'$.
(Under this assumption we will obtain that $p \sim_\bb a$ and $q \sim_\bb p_n$. 
The symmetric case is obtained when $q' \trans[a] q''$ and $r' = p' \mathbin{\|} q''$, by applying the same reasoning.)
Since $p'' \mathbin{\|} q' \sim_\bb p_n$ and $p_n$ is indecomposable by Lemma~\ref{lem:pn_indecomposable}, we can distinguish the following two cases:
\begin{itemize}
\item $p'' \sim_\bb p_n$ and $q' \sim_\bb \nil$.
As $p' \trans[a] p''$, this implies that $\depth(p) \ge n+2$.
At the same time, $q \not\sim_\bb \nil$ gives $\depth(q) \ge 1$.
Consequently, we have that $\depth(p \mathbin{\|} q) = \depth(p) + \depth(q) \ge n+3 > n+2 = \depth(a \mathbin{\|} p_n)$, giving thus a contradiction with $p \mathbin{\|} q \sim_\bb a \mathbin{\|} p_n$.
\item $p'' \sim_\bb \nil$ and $q' \sim_\bb p_n$.
In this case, we have that $r = p' \mathbin{\|} q' \sim_\bb p' \mathbin{\|} p_n$.
Moreover, $q \trans[\varepsilon] q'$ and $\depth(q') = n+1$ give that $\depth(q) \ge n+1$.
Now, as $p \not\sim_\bb \nil$ gives $\depth(p) \ge 1$, from $\depth(p \mathbin{\|} q) = \depth(a \mathbin{\|} p_n) = n+2$ we infer that $\depth(q) = n+1$ and $\depth(p) = 1$.
As $a \mathbin{\|} p_n$ can perform only sequences of $a$-moves, and $\depth(p) = 1$, we obtain that whenever $p \wtrans[\alpha] p'$ then $\alpha = a$ and $p' \sim_\bb \nil$.

We can also show that there is no $p'$ such that $p \trans[\varepsilon] p'$ and $p' \sim_\bb \nil$.
Assume, towards a contradiction, that such a process $p'$ exists.
Then $p \mathbin{\|} q \trans[\varepsilon] p' \mathbin{\|} q \sim_\bb q$.
As $a \mathbin{\|} p_n$ cannot perform any $\tau$-transition, the branching bisimulation relation requires that $a \mathbin{\|} p_n \sim_\bb q$.
However, this is not possible since $\depth(a \mathbin{\|} p_n) \neq \depth(q)$.

We have therefore obtained that:
\begin{itemize}
\item whenever $p \wtrans[\alpha] p'$ then $\alpha = a$ and $p' \sim_\bb \nil$, and
\item there is no $p'$ such that $p \trans[\varepsilon] p'$ and $p' \sim_\bb \nil$.
\end{itemize}
This is enough to infer that $p \sim_\bb a$ (as the second condition guarantees that whenever $p \trans[\varepsilon] p' \trans[a] p''$ then $p' \sim_\bb a$ also holds.)

Let $p'$ be any process such that $p \trans[\varepsilon] p' \trans[a] p''$ for some $p'' \sim_\bb \nil$.
Then $p \mathbin{\|} q \trans[\varepsilon] p' \mathbin{\|} q \trans[a] p'' \mathbin{\|} q \sim_\bb q$.
As $p \mathbin{\|} q \sim_\bb a \mathbin{\|} p_n$, then $a \mathbin{\|} p_n \trans[\varepsilon] r \trans[a] r'$ for some processes $r,r'$ such that $r \sim_\bb p' \mathbin{\|} q$ and $r' \sim_\bb q$.
Since, however, $a \mathbin{\|} p_n$ cannot perform any $\tau$-transition, we get that $a \mathbin{\|} p_n \trans[a] r'$ for some $r' \sim_\bb q$.
Moreover, as $\depth(q) = n+1$, we have that either $r' = p_n$ or $r' = a \mathbin{\|} a^{\le n}$, as not other $a$-derivative of $a \mathbin{\|} p_n$ has observable depth $n+1$.
Hence, to conclude, it is enough to show that the case $r' = a \mathbin{\|} a^{\le n}$ yields a contradiction with $p \mathbin{\|} q \sim_\bb a \mathbin{\|} p_n$.
Assume that $q \sim_\bb a \mathbin{\|} a^{\le n}$.
Then, as $p \sim_\bb a$, we get that $p \mathbin{\|} q \sim_\bb a \mathbin{\|} (a \mathbin{\|} a^{\le n})$.
Now $a \mathbin{\|} (a \mathbin{\|} a^{\le n}) \trans[a] a \mathbin{\|} (a \mathbin{\|} a^{n-1}) \sim_\bb a^{n+1}$.
However, there is no process $r$ such that $a \mathbin{\|} p_n \trans[a] r$ and $r \sim_\bb a^{n+1}$.
Hence $a \mathbin{\|} p_n \not\sim_\bb a \mathbin{\|} (a \mathbin{\|} a^{\le n})$, giving thus a contradiction with $a \mathbin{\|} p_n \sim_\bb p \mathbin{\|} q$.
\end{itemize}
We have therefore obtained that $p \sim_\bb a$ and $q \sim_\bb p_n$, thus concluding the proof.
\end{proof}

We also need Lemma~\ref{lem:apn_par} to hold modulo rooted branching bisimilarity.

\begin{restatable}{lemma}{lemapnparrbb}
\label{lem:apn_par_rbb}
Let $n \ge 2$.
Assume that $p \mathbin{\|} q \sim_\rbb a \mathbin{\|} p_n$ for some processes $p,q \not\sim_\rbb \nil$.
Then either $p \sim_\rbb a$ and $q \sim_\rbb p_n$, or $p \sim_\rbb p_n$ and $q \sim_\rbb a$.
\end{restatable}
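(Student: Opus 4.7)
The plan is to reduce the rooted branching version to the already-proved Lemma~\ref{lem:apn_par} and then upgrade the conclusion using Lemma~\ref{lem:same_initials}.

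First I would argue that neither $p$ nor $q$ has $\tau$ as an initial action. Indeed, by the SOS rule $\SOSrule{t \trans[\mu] t'}{t \mathbin{\|} u \trans[\mu] t' \mathbin{\|} u}$, every initial move of a parallel component is also an initial move of the composition. Hence, if $\tau \in \init(p)$, say $p \trans[\tau] p'$, then $p \mathbin{\|} q \trans[\tau] p' \mathbin{\|} q$. By the root condition of $\sim_\rbb$, this would force $a \mathbin{\|} p_n \trans[\tau] r$ for some $r$; but $a \mathbin{\|} p_n$ has only $a$ in its set of initials (the components $a$ and $p_n$ share no complementary initial actions, since $p_n$ can only perform $a$-transitions), a contradiction. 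The same argument applied symmetrically shows $\tau \notin \init(q)$.

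Next, from $p \not\sim_\rbb \nil$ and $\tau \notin \init(p)$, the second part of Lemma~\ref{lem:same_initials} gives $p \not\sim_\bb \nil$; likewise $q \not\sim_\bb \nil$. Since $\sim_\rbb \subseteq \sim_\bb$, we also have $p \mathbin{\|} q \sim_\bb a \mathbin{\|} p_n$. We may therefore apply Lemma~\ref{lem:apn_par} and conclude, without loss of generality, that $p \sim_\bb a$ and $q \sim_\bb p_n$.

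Finally, to upgrade these equivalences to $\sim_\rbb$, I would invoke the first part of Lemma~\ref{lem:same_initials}: since $\tau \notin \init(p) \cup \init(a)$ (because $\init(a) = \{a\}$), $p \sim_\bb a$ implies $p \sim_\rbb a$; analogously, $\tau \notin \init(q) \cup \init(p_n)$ (since $\init(p_n) = \{a\}$), so $q \sim_\bb p_n$ yields $q \sim_\rbb p_n$. The symmetric case $p \sim_\bb p_n$, $q \sim_\bb a$ is handled identically. No step looks like a real obstacle; the main point is simply to observe that the root condition forbids $\tau$-initials on the left-hand side, which then lets Lemma~\ref{lem:same_initials} bridge $\sim_\bb$ and $\sim_\rbb$.
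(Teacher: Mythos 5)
Your proof is correct, but it takes a genuinely different and more modular route than the paper's. You first rule out $\tau$-initials of $p$ and $q$ via the root condition (sound, since $\init(a \mathbin{\|} p_n)=\{a\}$ and the two components of $a \mathbin{\|} p_n$ cannot synchronise), use the second half of Lemma~\ref{lem:same_initials} to pass from $p,q\not\sim_\rbb\nil$ to $p,q\not\sim_\bb\nil$, feed the weakened hypothesis $p \mathbin{\|} q \sim_\bb a \mathbin{\|} p_n$ into Lemma~\ref{lem:apn_par}, and finally upgrade the resulting $\sim_\bb$-equivalences back to $\sim_\rbb$ with the first half of Lemma~\ref{lem:same_initials}. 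The paper instead re-runs the decomposition argument directly in the rooted setting: it uses the root condition to obtain a single-step $a$-transition $p \mathbin{\|} q \trans[a] p' \mathbin{\|} q$ with $p' \mathbin{\|} q \sim_\bb p_n$, invokes indecomposability of $p_n$ (Lemma~\ref{lem:pn_indecomposable}) to conclude $q \sim_\bb p_n$ and hence $q \sim_\rbb p_n$, and then performs a separate case analysis on the transitions of $p$ to establish $p \sim_\rbb a$; it never actually calls Lemma~\ref{lem:apn_par}. Your reduction is shorter and avoids duplicating the decomposition work, at the cost of depending on the full strength of Lemma~\ref{lem:apn_par} (whose proof is the more involved one); the paper's direct argument is self-contained in the rooted setting and only needs indecomposability of $p_n$. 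Both arguments are sound, and all the lemma applications in your version check out against the stated hypotheses.
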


\begin{proof}
First of all, we notice that $\tau \not \in \init(a \mathbin{\|} p_n)$ implies $\tau \not \in \init(p)$ and $\tau \not \in \init(q)$.
In particular, this guarantees that $p,q \not\sim_\rbb \nil$ implies $p,q \not\sim_\bb \nil$ (Lemma~\ref{lem:same_initials}).

We have that $a \mathbin{\|} p_n \trans[a] \nil \mathbin{\|} p_n \sim_\bb p_n$.
Hence, as $p \mathbin{\|} q \sim_\rbb a \mathbin{\|} p_n$, there is a process $r$ such that $p \mathbin{\|} q \trans[a] r$ and $r \sim_\bb p_n$.
We can assume, without loss of generality, that $p \trans[a] p'$ and $r = p' \mathbin{\|} q$, for some process $p'$.
(Under this assumption we will obtain that $p \sim_\rbb a$ and $q \sim_\rbb p_n$. 
The symmetric case is obtained when $q \trans[a] q'$ and $r = p \mathbin{\|} q'$, by applying the same reasoning.)
Since $p' \mathbin{\|} q \sim_\bb p_n$, $q \not\sim_\bb \nil$, and $p_n$ is indecomposable by Lemma~\ref{lem:pn_indecomposable}, it follows that $p' \sim_\bb \nil$ and $q \sim_\bb p_n$.
Moreover, as $\init(q) = \init(p_n)$ and $\tau \not\in \init(p_n)$, by Lemma~\ref{lem:same_initials} we get $q \sim_\rbb p_n$.

Hence, $p \mathbin{\|} q \sim_\rbb p \mathbin{\|} p_n$.
Assume now that $p \trans[\mu] p'$ for some action $\mu$ and process $p'$.
Form $p \mathbin{\|} p_n \sim_\rbb a \mathbin{\|} p_n$ we infer $\mu = a$, and we can distinguish two cases, according to $a$-move is performed by $a \mathbin{\|} p_n$ to match the $a$-transition by $p \mathbin{\|} p_n$:
\begin{itemize}
\item $p' \mathbin{\|} p_n \sim_\bb p_n$.
As $p_n$ is indecomposable (Lemma~\ref{lem:pn_indecomposable}), we can directly conclude that $p' \sim_\bb \nil$.
\item $p' \mathbin{\|} p_n \sim_\bb a \mathbin{\|} a^{\le i}$ for some $i \in \{2,\dots,n\}$.
Due to the form of $p_n$, this case is not possible.
\end{itemize}
Thus, whenever $p \trans[a] p'$ we have that $p' \sim_\bb \nil$, which, together with $\init(p) = \{a\}$, gives $p \sim_\rbb a$.
\end{proof}

We conclude this section with the following result, showing a particular case in which we can establish that not only a variable $x$ triggers the behaviour of a term $t$, but that $t$ actually coincides with $x$.

\begin{restatable}{lemma}{lemaisum}
\label{lem:ai_sum}
Let $t$ be a term that does not have $+$ as head operator, and let $\sigma$ be a closed substitution.
Assume that $\sigma(t)$ has neither $\nil$ summands nor $\nil$ factors, and that $\sigma(t) \sim_\rbb \sum_{k = 1}^m aa^{\le i_k}$, for some $m > 1$, and $2 \le i_1 < \dots < i_m$.
Then $t = x$ for some variable $x$.
\end{restatable}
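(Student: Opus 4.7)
The plan is to proceed by a case analysis on the head operator of $t$. Since $t$ has neither $+$ as head operator (by hypothesis), the possibilities are $t = \nil$, $t = x$ for some variable $x$, $t = \mu.t'$, and $t = t_1 \mathbin{\|} t_2$. I will rule out all cases except $t = x$. A useful preliminary observation is that $\sigma(t) \sim_\rbb \sum_{k=1}^m a a^{\le i_k}$, together with the rootedness clause, implies $\init(\sigma(t)) = \{a\}$; in particular $\tau \notin \init(\sigma(t))$, and $\sigma(t) \not\sim_\bb \nil$.

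The case $t = \nil$ is immediate since $\sigma(\nil)$ has no initial transitions while the right-hand side has $m \ge 2$ of them. The case $t = \mu.t'$ is handled as follows. Then $\sigma(t) = \mu.\sigma(t')$ has $\sigma(t')$ as its unique initial derivative, and rootedness forces $\mu = a$. Matching each of the $m$ initial $a$-transitions of $\sum_{k=1}^m a a^{\le i_k}$ against the single $a$-transition of $\sigma(t)$ via rooted branching bisimilarity gives $a^{\le i_k} \sim_\bb \sigma(t')$ for every $k \in \{1,\dots,m\}$, hence $a^{\le i_1} \sim_\bb \cdots \sim_\bb a^{\le i_m}$. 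But Lemma~\ref{lem:bb_same_depth} entails that branching bisimilar processes have the same depth, and $\depth(a^{\le i_k}) = i_k$ are pairwise distinct by hypothesis. Since $m > 1$, this is a contradiction.

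The substantive case is $t = t_1 \mathbin{\|} t_2$, which I will dispatch using the unique parallel decomposition result. From the absence of $\nil$ factors in $\sigma(t)$ we obtain $\sigma(t_1), \sigma(t_2) \not\sim_\rbb \nil$. Combined with $\tau \notin \init(\sigma(t_i))$ (a consequence of $\tau \notin \init(\sigma(t))$) and Lemma~\ref{lem:same_initials}, this yields $\sigma(t_i) \not\sim_\bb \nil$ for $i \in \{1,2\}$. Hence the parallel decomposition of $\sigma(t_1) \mathbin{\|} \sigma(t_2)$ modulo $\sim_\bb$ consists of at least two indecomposable factors. On the other hand, I claim that the process $r = \sum_{k=1}^m a a^{\le i_k}$ is itself indecomposable. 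Granted this, Proposition~\ref{prop:unique_par} applied to $\sigma(t_1) \mathbin{\|} \sigma(t_2) \sim_\bb r$ yields the desired contradiction, leaving only $t = x$.

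The main obstacle is thus establishing the indecomposability of $r = \sum_{k=1}^m a a^{\le i_k}$. However, the proof of Lemma~\ref{lem:pn_indecomposable}, which treats the specific instance with consecutive indices $\{2,\dots,n\}$, adapts essentially verbatim: the only facts exploited there are that the $a$-derivatives of $p_n$ are the processes $a^{\le i}$, each of which is indecomposable by Lemma~\ref{lem:ai_indecomposable}, and that these derivatives are pairwise non-bisimilar (as they have distinct depths). Both properties hold for $r$ with the indices $i_1 < \cdots < i_m$ replacing $\{2,\dots,n\}$, and the argument by contradiction on a hypothetical decomposition $r \sim_\bb p \mathbin{\|} q$ with $p,q \not\sim_\bb \nil$ (noting that no derivative performs $\overline{a}$, so no communication contributes to any $\tau$-step) goes through unchanged.
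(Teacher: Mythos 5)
Your proof is correct, and the prefix case is essentially identical to the paper's (both force $\mu=a$ by the root condition and then derive $a^{\le i_1}\sim_\bb\sigma(t')\sim_\bb a^{\le i_m}$, contradicting Lemma~\ref{lem:bb_same_depth}). Where you genuinely diverge is the parallel case. The paper does not pass through the indecomposability of $\sum_{k=1}^m aa^{\le i_k}$; instead it plays the bisimulation game directly: it first matches $\sum_k aa^{\le i_k}\trans[a]a^{\le i_m}$ to deduce, via indecomposability of $a^{\le i_m}$ (Lemma~\ref{lem:ai_indecomposable}), that one component, say $\sigma(t'')$, is branching bisimilar to $a^{\le i_m}$, then matches the transition to $a^{\le i_1}$ and in each sub-case reaches a contradiction either because $a^{\le i_1}\not\sim_\bb a^{\le i_m}$ or because $\depth(a^{\le i_1}\mathbin{\|}a^{\le i_m})=i_1+i_m>i_m+1=\rdepth(\sum_k aa^{\le i_k})$. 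You instead establish the single claim that $\sum_{k=1}^m aa^{\le i_k}$ is itself indecomposable and conclude immediately (indeed, once that claim is in hand you do not even need Proposition~\ref{prop:unique_par}: the definition of indecomposability directly forces one of $\sigma(t_1),\sigma(t_2)$ to be $\sim_\bb\nil$). Your route is more modular and yields a reusable generalisation of Lemma~\ref{lem:pn_indecomposable}; its cost is precisely that generalisation, which you only sketch by asserting that the proof of Lemma~\ref{lem:pn_indecomposable} adapts. I checked that it does: the only properties used there are that the process performs only $a$-moves (so no communication), that its $a$-derivatives are the indecomposable and pairwise non-bisimilar processes $a^{\le i_k}$, and depth bookkeeping, all of which survive the replacement of $\{2,\dots,n\}$ by $i_1<\dots<i_m$ with $m>1$. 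A fully self-contained write-up should spell out that adapted proof rather than leave it at ``essentially verbatim'', but there is no gap in the underlying argument. The paper's inline analysis is, in effect, an unlabelled proof of the very indecomposability fact you isolate.
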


\begin{proof}
Assume towards a contradiction that $t$ is not a variable.
We proceed by a case analysis on the other possible forms $t$ may have:
\begin{itemize}
\item $t = \mu.t'$ for some $\mu \in \Acttau$ and term $t'$.
Then $\mu = a$ and $a^{\le i_1} \sim_\bb \sigma(t') \sim_\bb a^{\le i_m}$.
However, this is a contradiction because, as $i_1 < i_m$, the terms $a^{\le i_1}$ and $a^{\le i_m}$ have different observable depths, and therefore they cannot be branching bisimilar.
%%%
\item $t = t' \mathbin{\|} t''$.
Since $\sigma(t)$ has no $\nil$ factors we have that $\sigma(t'),\sigma(t'') \not \sim_\rbb \nil$.
In particular, notice that since $\tau \not\in \init(\sigma(t')) \cup \init(\sigma(t''))$ (as $\tau \not\in \init(\sum_{k = 1}^m aa^{\le i_k})$), it follows that $\sigma(t'),\sigma(t'') \not \sim_\bb \nil$ as well (Lemma~\ref{lem:same_initials}).
Moreover, observe that $\sum_{k =1}^m a.a^{\le i_k} \trans[a] a^{\le i_m}$.
As $\sigma(t) = \sigma(t') \mathbin{\|} \sigma(t'') \sim_\rbb \sum_{k=1}^m a.a^{\le i_k}$, there is a process $r$ such that $\sigma(t') \mathbin{\|} \sigma(t'') \trans[a] r$ and $r \sim_\bb a^{\le i_m}$.
Assume, without loss of generality, that $\sigma(t') \trans[a] p$ and $r = p \mathbin{\|} \sigma(t'')$, for some process $p$.
As $\sigma(t'') \not\sim_\bb \nil$ and $a^{\le i_m}$ is indecomposable (Lemma~\ref{lem:ai_indecomposable}), it follows that $p \sim_\bb \nil$ and $\sigma(t'') \sim_\bb a^{\le i_m}$.
Now notice that $\sum_{k=1}^m a.a^{\le i_k} \trans[a] a^{\le i_1}$.
We can distinguish two cases according to how this transition is matched by $\sigma(t') \mathbin{\|} \sigma(t'')$ in the rooted branching bisimulation game:
\begin{itemize}
\item $\sigma(t') \trans[a] p'$, for some process $p'$, and $p' \mathbin{\|} \sigma(t'') \sim_\bb a^{\le i_1}$.
Then we can proceed as above to show that $\sigma(t'') \sim_\bb a^{\le i_1}$.
As $a^{\le i_1} \not\sim_\bb a^{\le i_m}$, this gives a contradiction.
\item $\sigma(t'') \trans[a] q$, for some process $q$, and $\sigma(t') \mathbin{\|} q \sim_\bb a^{\le i_1}$.
Since $\sigma(t') \not\sim_\bb \nil$ and $a^{\le i_1}$ is indecomposable (Lemma~\ref{lem:ai_indecomposable}), it follows that $q \sim_\bb \nil$ and $\sigma(t') \sim_\bb a^{\le i_1}$.
This implies that $\sigma(t') \mathbin{\|} \sigma(t'') \sim_\bb a^{\le i_1} \mathbin{\|} a^{\le i_m}$, and thus $\depth(\sigma(t') \mathbin{\|} \sigma) = i_1 + i_m > i_m +1$ (since $i_1 \ge 2$).
In particular, since $\tau \not \in \init(\sigma(t') \mathbin{\|} \sigma(t'')) \cup \init(a^{\le i_1} \mathbin{\|} a^{\le i_m})$, we can infer that $\depth(\sigma(t') \mathbin{\|} \sigma) = \rdepth(\sigma(t') \mathbin{\|} \sigma(t''))$.
Therefore, we get a contradiction with $\sigma(t') \mathbin{\|} \sigma(t'') \sim_\rbb \sum_{k=1}^m a.a^{\le i_k}$, since $\rdepth(\sum_{k=1}^m a.a^{\le i_k}) = i_m+1$.
\end{itemize}
\end{itemize}
\end{proof}

%=============================================
%=============================================

\section{Proof of Proposition~\ref{prop:rbb_substitution_case}}

\proprbbsubstitutioncase*

\begin{proof}
Observe, first of all, that since $\sigma(t)=p$ and $\sigma(u)=q$ have no $\nil$ summands or factors, then neither do $t$ and $u$.  
Hence, by Remark~\ref{rmk:summands}, we have that for some finite non-empty index sets $I, J$,
$
t = \sum_{i\in I} t_i 
\text{ and } 
u = \sum_{j\in J} u_j,
$
where none of the $t_i$ ($i\in I$) and $u_j$ ($j\in J$) is $\nil$, has $+$ as its head operator, has $\nil$ summands and factors.

As $p=\sigma(t)$ has a summand rooted branching bisimilar to $a \mathbin{\|} p_n$, there is an index $i\in I$ such that $\sigma(t_i) \sim_\rbb a \mathbin{\|} p_n$. 

Our aim is now to show that there is an index $j\in J$ such that $\sigma(u_j) \sim_\rbb a \mathbin{\|} p_n$, proving that $q=\sigma(u)$ also has a summand rooted branching bisimilar to $a \mathbin{\|} p_n$. 

We proceed by a case analysis on the form $t_i$ may have.

\begin{enumerate}
\item \label{Case:t-var} 
\textcolor{blue}{\sc Case $t_i = x$ for some variable $x$}.
In this case, we have $\sigma(x) \sim_\rbb a \mathbin{\|} p_n$, and $t$ has $x$ as a summand. 
As $t \approx u$ is sound modulo rooted branching bisimilarity and neither $t$ nor $u$ have $\nil$ summands or factors, it follows that $u$ also has $x$ as a summand (Proposition~\ref{prop:same_var}). 
Thus there is an index $j\in J$ such that $u_j = x$, and, modulo rooted branching bisimulation, $\sigma(u)$ has $a \mathbin{\|} p_n$ as a summand, which was to be shown.
%%%

\item \label{Case:t-pre} 
\textcolor{blue}{\sc Case $t_i = \mu t'$ for some term $t'$}. 
This case is vacuous because, since the root condition implies $\mu = a$, and $a \sigma(t') \trans[a] \sigma(t')$ is the only transition afforded by $\sigma(t_i)$, this term cannot be bisimilar to $a \mathbin{\|} p_n$. 
Indeed $a \mathbin{\|} p_n$ can perform both, $a \mathbin{\|} p_n \trans[a] \nil \mathbin{\|} p_n \sim_\bb p_n$, and $a \mathbin{\|} p_n \trans[a] a \mathbin{\|} a^{\le i}$, for some $i \in \{2,\dots,n\}$.
Clearly, $p_n \not\sim_\bb a \mathbin{\|} a^{\le i}$ for any $i \in \{2,\dots,n\}$, and thus $\sigma(t')$ cannot be branching bisimilar to both of them.
%%%

\item \label{Case:t-par} 
\textcolor{blue}{\sc Case $t_i = t' \mathbin{\|} t''$ for some terms $t',t''$}. 
In this case, we have $\sigma(t') \mathbin{\|} \sigma(t'') \sim_\rbb a \mathbin{\|} p_n$.
As $\sigma(t_i)$ has no $\nil$ factors, it follows that $\sigma(t') \not\sim_\rbb \nil$ and $\sigma(t'') \not\sim_\rbb \nil$.  
Thus by Lemma~\ref{lem:apn_par_rbb} (and without loss of generality) $\sigma(t') \sim_\rbb a$ and $\sigma(t'') \sim_\rbb p_n$. 
Now, $t''$ can be written as $t'' = v_1 + \cdots + v_M,  (M > 0)$, where none of the summands $v_i$ is $\nil$ or a sum. 
Observe that, since $n$ is larger than the size of $t$, we have that $M < n$.
Hence, since $\sigma(t'') \sim_\rbb p_n$, there must be some $h \in \{1,\ldots,M\}$ such that $\sigma(v_h) \sim_\rbb \sum_{k=1}^{m} a a^{\le i_k}$ for some $m>1$ and $2 \le i_1 < \ldots < i_m \le n$. 
The term $\sigma(v_h)$ has no $\nil$ summands or factors, or else so would $\sigma(t'')$, and thus $p=\sigma(t)$. 
By Lemma~\ref{lem:ai_sum}, it follows that $v_h$ can only be a variable $x$ and thus that
\begin{equation}
\label{eq:sigma(x)}
\sigma(x) \sim_\rbb \sum_{k=i}^{m} a a^{\le i_k} \enspace .
\end{equation}
Observe, for later use, that, since $t'$ has no $\nil$ factors, the above equation yields that $x \not\in \var(t')$, or else $\sigma(t') \not\sim_\rbb a$ (Lemma~\ref{lem:var_depth}). 
So, modulo rooted branching bisimilarity, $t_i$ has the form $t' \mathbin{\|} (x+t''')$, for some term $t'''$, with $x \not\in \var(t')$ and $\sigma(t') \sim_\rbb a$.
    
Our order of business will now be to use the information collected so far to argue that $\sigma(u)$ has a summand rooted branching bisimilar to $a \mathbin{\|} p_n$. 
To this end, consider the substitution 
\[
\sigma' = \sigma[x \mapsto \overline{a} (a \mathbin{\|} p_n)] \enspace . 
\]
We have that 
\begin{align*}
\sigma'(t_i) &  ={} \sigma'(t') \mathbin{\|} \sigma'(t'') \\
& ={} \sigma(t') \mathbin{\|} \sigma'(t'') & \text{(As $x\not\in\var(t')$)} \\
& \sim_\rbb a \mathbin{\|} (\overline{a} (a \mathbin{\|} p_n) +\sigma'(t''')) & \text{(As $t'' = x + t'''$).}
\end{align*}
Thus, there is some process $p'$ such that $\sigma'(t_i) \trans[\tau] p' \sim_\bb a \mathbin{\|} p_n$, so that
\[
\sigma'(t) \trans[\tau] p' \sim_\bb a \mathbin{\|} p_n
\] 
also holds.
Since $t\approx u$ is sound modulo $\sim_\rbb$, it follows that 
\[
\sigma'(t) \sim_\rbb \sigma'(u) \enspace . 
\]
Hence, we can infer that there are a $j\in J$ and a $q'$ such that
\begin{equation}
\label{eq:A.14}
\sigma'(u_j) \trans[\tau] q' \sim_\bb a \mathbin{\|} p_n \enspace .
\end{equation}
Recall that, by one of the assumptions of the proposition, $\sigma(u) \sim_\rbb a \mathbin{\|} p_n$, and thus $\sigma(u)$ has rooted depth $n+2$. 
On the other hand, by Equation (\ref{eq:A.14}),
\[
\rdepth(\sigma'(u_j)) \ge n+3 \enspace . 
\]
Since $\sigma$ and $\sigma'$ differ only in the closed term they map variable $x$ to, it follows that
\begin{equation}
\label{eq:A.15}
x \in \var(u_j) \enspace .
\end{equation}
We now proceed to show that $\sigma(u_j) \sim_\rbb a \mathbin{\|} p_n$ by a further case analysis on the form a term $u_j$ satisfying Equation (\ref{eq:A.14}) and (\ref{eq:A.15}) may have.
\begin{enumerate}
\item \label{Case:uj-var} 
\textcolor{red}{\sc Case $u_j = x$}. 
This case is vacuous because $\sigma' (x) = \overline{a} (a \mathbin{\|} p_n) \ntrans[\tau]$, and thus this possible form for $u_j$ does not meet Equation (\ref{eq:A.14}).
%%%

\item \label{Case:uj-pre} 
\textcolor{red}{\sc Case $u_j = \mu u'$ for some term $u'$}. 
In light of Equation (\ref{eq:A.14}), we have that $\mu=\tau$ and $q'=\sigma'(u') \sim_\bb a \mathbin{\|} p_n$. 
Using Equation (\ref{eq:A.15}), we get $x \in \var(u')$ and, thus, $\depth(\sigma'(u')) \geq n+3$ (Lemma~\ref{lem:var_depth}).  
Since $a \mathbin{\|} p_n$ has depth $n+2$, this contradicts $q' \sim_\bb a \mathbin{\|} p_n$.
%%%

\item \label{Case:uj-hmerge} 
\textcolor{red}{\sc Case $u_j = u' \mathbin{\|} u''$ for some terms $u',u''$}. 
Our assumption that $\sigma(u)$ has no $\nil$ factors yields that none of the terms $u',u'',\sigma(u')$ and $\sigma(u'')$ is rooted branching bisimilar to $\nil$. 
In addition, $\init(\sigma(u)) = \{a\}$, gives $\tau \not\in \init(\sigma(u')) \cup \init(\sigma(u''))$, so that, by Lemma~\ref{lem:same_initials}, $\sigma(u'),\sigma(u'') \not\sim_\bb \nil$ holds.
Moreover, by Equation (\ref{eq:A.15}), either $x\in\var(u')$ or $x\in\var(u'')$ (and a simple observation on the depth of terms guarantees that it cannot occur in both).
      
By Equation (\ref{eq:A.14}), we have that $\sigma(u') \mathbin{\|} \sigma'(u'') \trans[\tau] q' \sim_\bb a \mathbin{\|} p_n$. 
Hence, we can distinguish three cases, according to the possible origins for such a transition according to Lemma~\ref{lem:closed2open_tau}:

\begin{enumerate} 
\item \textcolor{ForestGreen}{There is a term $w$ such that $u' \mathbin{\|} u'' \trans[\tau] w$.}
This case is vacuous, since $u' \mathbin{\|} u'' \trans[\tau] w$ implies $u \trans[\tau] w$, and, thus, $\sigma(u) \trans[\tau] \sigma(w)$, giving a contradiction with $\sigma(u) \sim_\rbb a \mathbin{\|} p_n$.

\item \textcolor{ForestGreen}{$u' \trans[\alpha] w$ and there is a variable $y \in \var(u'')$ such that $\sigma'(y) \trans[\overline{\alpha}] r_y$, $u' \mathbin{\|} u'' \trans[(y)]_{\overline{\alpha},\tau} c$, and $\sigma'[y \mapsto r_y](c) = q'$.}
(The symmetric case with $u'' \trans[\alpha] w$ and $y \in \var(u')$ can be treated in the same way and it is therefore omitted.)

Notice that if $\alpha \neq a$ then $\sigma(u') \trans[\alpha]$ implies $\sigma(u) \trans[\alpha]$, and thus $\init(\sigma(u)) \neq \init(a \mathbin{\|} p_n)$, giving a contradiction with $\sigma(u) \sim_\rbb a \mathbin{\|} p_n$.
Hence, $\alpha = a$ and, consequently, $\overline{\alpha} = \overline{a}$.

Moreover, if $y \neq x$ then $\sigma'(y) = \sigma(y)$ giving $\sigma(y) \trans[\overline{a}]$ and thus $\sigma(u) \trans[\overline{a}]$.
Also in this case we obtain a contradiction with $\sigma(u) \sim_\rbb a \mathbin{\|} p_n$.
Hence $y = x$.

Since $\sigma'(x) \trans[\overline{a}] a \mathbin{\|} p_n$, we have $\sigma'(u') \mathbin{\|} \sigma'(u'') \trans[\tau] \sigma'(w) \mathbin{\|} (a \mathbin{\|} p_n) \mathbin{\|} \sigma'(w')$ (where the general term $w'$ is included, as from $x \in \var(u'')$, and $u'' \trans[(x)]_{\mu}$, we can infer that $u''$ can be written in the general form $((x+w_1) \mathbin{\|} w') + w_2$).
From $\sigma'(w) \mathbin{\|} (a \mathbin{\|} p_n) \mathbin{\|} \sigma'(w') \sim_\bb a \mathbin{\|} p_n$ it follows (reasoning on the depth of terms) that $\sigma'(w) \sim_\bb \nil \sim_\bb \sigma'(w')$ and, in particular, $\sigma'(w) = \sigma(w)$ and $\sigma'(w') = \sigma(w')$.
Moreover, as $x \not \in \var(u')$, we have that $\sigma'(u') = \sigma(u')$.

Summarising, so far we have obtained that there is a process $r$ such that $\sigma(u') \trans[a] r$ and $r \sim_\bb \nil$, and that $x \in \var(u'')$ with $u \trans[(x)]_\mu$.
Next, we notice that $\sigma(u') \trans[a] r$ implies that $\sigma(u_j) \trans[a] r \mathbin{\|} \sigma(u'')$, so that $\sigma(u) \trans[a] r \mathbin{\|} \sigma(u'') \sim_\bb \sigma(u'')$.
From $\sigma(u) \sim_\rbb a \mathbin{\|} p_n$, it then follows that $a \mathbin{\|} p_n \trans[a] r'$ for some process $r'$ such that $r' \sim_\bb \sigma(u'')$.
Given the structure of $a \mathbin{\|} p_n$, we can distinguish two major cases, according to the possible form of $r'$:
\begin{itemize}
\item $r' = a \mathbin{\|} a^{\le i}$ for some $i \in \{2,\dots,n\}$.
We proceed to show that in this case it is not possible to have $r' \sim_\bb \sigma(u'')$.
Recall that, by Equation (\ref{eq:sigma(x)}), $\sigma(x) \sim_\rbb \sum_{k=1}^m aa^{\le i_k}$ for $m>1$ and $2 \le i_1 < \dots < i_m \le n$.
As $u'' \trans[(x)]_a$, by Lemma~\ref{lem:var_to_term} we have that $\sigma(u'') \trans[a] a^{\le i_1}$ and $\sigma(u'') \trans[a] a^{\le i_m}$
(where no parallel composition is considered because we have seen in the analysis above that any potential parallel component would be branching bisimilar to $\nil$).
In particular, $i_1 < i_m$ gives $a^{\le i_1} \not\sim_\bb a^{\le i_m}$ as the two processes have different depths (besides different branching structures).
On the one hand, we have that $r' \trans[a] a \mathbin{\|} a^j$ for some $j \in \{0,\dots,i-1\}$, and $a \mathbin{\|} a^j \sim_\bb a^{j+1} \not\sim_\bb a^{\le i_1},a^{\le i_m}$ for any $j \in \{0,\dots,i-1\}$ and $i_1,i_m \ge 2$.
On the other hand, $r' \trans[a] \nil \mathbin{\|} a^{\le i} \sim_\bb a^{\le i}$, but $a^{\le i}$ cannot be branching bisimilar to both $a^{\le i_1}$ and $a^{\le i_m}$.
Hence, $r' \not\sim_\bb \sigma(u'')$ in this case.

\item $r' = \nil \mathbin{\|} p_n$.
Then $\sigma(u'') \sim_\bb p_n$.
Notice that $\sigma(u) \sim_\rbb a \mathbin{\|} p_n$ implies $\init(\sigma(u'')) = \{a\}$.
Hence, by Lemma~\ref{lem:same_initials}, $\tau \not\in \init(\sigma(u'')) \cup \init(p_n)$ and $\sigma(u'') \sim_\bb p_n$ imply $\sigma(u'') \sim_\rbb p_n$.
In particular, it follows that $\rdepth(\sigma(u'')) = n+1$.
Since $\rdepth(\sigma(u)) = n+2 \ge \rdepth(\sigma(u_j)) = \rdepth(\sigma(u')) + \rdepth(\sigma(u''))$, we obtain that $\rdepth(\sigma(u')) = 1$.
In addition, $\init(\sigma(u')) = \{a\}$.
It follows that whenever $\sigma(u') \trans[\mu] q'$, for some process $q'$, then $\mu = a$ and $q' \sim_\bb \nil$.
Hence, $\sigma(u') \sim_\rbb a$.

Summarising, we have obtained that $\sigma(u_j) = \sigma(u') \mathbin{\|} \sigma(u'') \sim_\rbb a \mathbin{\|} p_n$, and it is thus the summand we were looking for.
\end{itemize}

\item \textcolor{ForestGreen}{There are variables $y,z$ such that $\sigma'(y) \trans[\alpha] r_y$, $\sigma'(z) \trans[\overline{\alpha}] r_z$, $u' \mathbin{\|} u'' \trans[(y,z)]_{\tau} c$, and $\sigma'[y \mapsto r_y, z \mapsto r_z](c) = q'$.}
We can proceed as in the previous case to argue that if both $y$ and $z$ are different from $x$, then we obtain a contradiction with $\sigma(u) \sim_\rbb a \mathbin{\|} p_n$.
Similarly, it cannot be the case that both $y$ and $z$ are equal to $x$, for otherwise they would not be able to synchronise as $\init(\sigma'(x)) = \{\overline{a}\}$.
Hence, we can assume, without loss of generality, that $y \neq x$ and $z = x$.
Thus, $\sigma'(y) = \sigma(y)$ and $\alpha = a$.
As we can also assume, without loss of generality, that $y \in \var(u')$ and $x \in \var(u'')$ (having both in one term gives a contradiction either with $q' \sim_\bb a \mathbin{\|} p_n$ or $u',u'' \not\sim_\rbb \nil$).
We can then proceed exactly as in the previous case to show that $\sigma(u') \sim_\rbb a$ and $\sigma(u'') \sim_\rbb p_n$, giving that $u_j$ is the desired summand.
\end{enumerate}
This completes the proof for the case $u_j = u' \mathbin{\| }u''$ for some terms $u',u''$.
\end{enumerate}
This completes the analysis for the case $t_i = t' \mathbin{\| }t''$ for some terms $t',t''$.
\end{enumerate}
The proof of Proposition~\ref{prop:rbb_substitution_case} is now complete. 
\end{proof}

%========================================
%=============================================

\section{Proof of Theorem~\ref{thm:rbb_preserves_property}}

\thmrbbpreservesproperty*

%------------------------------------------------
\begin{proof}
Assume that $\E$ is a finite axiom system over the language CCS that is sound modulo rooted branching bisimilarity.
Recall that, without loss of generality, we may assume that the closed terms involved in the proof of an equation $p \approx q$ have no $\nil$ summands or factors (by Proposition~\ref{Propn:proofswithout0_text}, as $\E$ may be assumed to be saturated), and that applications of symmetry happen first in equational proofs
(that is, $\E$ is closed with respect to symmetry). 
Hence, by the proviso of the theorem, the following hold, for some closed terms $p$ and $q$ and positive integer $n$ larger than the size of each term in the equations in $\E$:
\begin{enumerate}
\item $E \vdash p \approx q$,
\item $p \sim_\rbb q \sim_\rbb a \mathbin{\|} p_n$, 
\item $p$ and $q$ contain no occurrences of $\nil$ as a summand or factor, and
\item $p$ has a summand rooted branching bisimilar to $a \mathbin{\|} p_n$.
\end{enumerate}
We prove that $q$ also has a summand rooted branching bisimilar to $a \mathbin{\|} p_n$ by induction on the depth of the closed proof of the equation $p \approx q$ from $\E$. 
  
We proceed by a case analysis on the last rule used in the proof of $p \approx q$ from $\E$. 
The case of reflexivity is trivial, and that of transitivity follows immediately by using the inductive hypothesis twice. 
Below we only consider the other possibilities.
\begin{itemize}
\item \textcolor{blue}{\sc Case $E \vdash p \approx q$, because $\sigma(t) = p$ and $\sigma(u) = q$ for some equation $(t\approx u)\in E$ and closed substitution $\sigma$}. 
Since $\sigma(t) = p$ and $\sigma(u) = q$ have no $\nil$ summands or factors, and $n$ is larger than the size of each term mentioned in equations in $\E$, the claim follows by Proposition~\ref{prop:rbb_substitution_case}.
\item \textcolor{blue}{\sc Case $E \vdash p \approx q$, because $p=\mu p'$ and $q=\mu q'$ for some $p',q'$ such that $E \vdash p' \approx q'$}.
This case is vacuous because $p=\mu p'\not \sim_\rbb a \mathbin{\|} p_n$, and thus $p$ does not have a summand rooted branching bisimilar to $a \mathbin{\|} p_n$.
\item \textcolor{blue}{\sc Case $E \vdash p \approx q$, because $p = p'+p''$ and $q = q'+q''$ for some $p',q',p'',q''$ such that $E \vdash p' \approx q'$ and $E \vdash p'' \approx q''$}.
Since $p$ has a summand rooted branching bisimilar to $a \mathbin{\|} p_n$, we have that so does either $p'$ or $p''$. 
Assume, without loss of generality, that $p'$ has a summand rooted branching bisimilar to $a \mathbin{\|} p_n$. 
Since $p$ is bisimilar to $a \mathbin{\|} p_n$, so is $p'$. 
Using the soundness of $\E$ modulo $\sim_\rbb$, it follows that $q' \sim_\rbb a \mathbin{\|} p_n$. 
The inductive hypothesis now yields that $q'$ has a summand rooted branching bisimilar to $a \mathbin{\|} p_n$. 
Hence, $q$ has a summand rooted branching bisimilar to $a \mathbin{\|} p_n$, which was to be shown.
\item \textcolor{blue}{\sc Case $E \vdash p \approx q$, because $p = p' \mathbin{\|} p''$ and $q = q' \mathbin{\|} q''$ for some $p',q',p'',q''$ such that $E \vdash p' \approx q'$ and $E \vdash p'' \approx q''$}. 
Since the proof involves no uses of $\nil$ as a summand or a factor, we have that $p',p''\not\sim_\rbb \nil$ and $q',q''\not\sim_\rbb \nil$. 
It follows that $q$ is a summand of itself. 
By our assumptions, 
$
a \mathbin{\|} p_n \sim_\rbb q.
$
Therefore we have that $q$ has a summand bisimilar to $a \mathbin{\|} p_n$, and we are done.
\end{itemize}
This completes the proof of Theorem~\ref{thm:rbb_preserves_property}.
\end{proof}

%================================================
%=================================================

\section{Additional properties of terms and auxiliary transitions}

We discuss here some properties of configurations and auxiliary transitions, that we omitted from the main text in Section~\ref{sec:configurations_bis} because of space limits.
The proofs of the results in this section follow by simple inductions on the derivation of transitions and/or on the structure of terms (configurations), and are therefore omitted.

We recall that thanks to the axioms in Table~\ref{tab:axioms_b} we can prove the following classic result for $\ccslc$ terms, which will be useful in the rest of paper:

\begin{lemma}
\label{lem:summands}
For every $\ccslc$ term $t$ there are terms $t_1,\dots,t_n$ ($n \ge 0$) that do not have $+$ as head operator such that $t \approx \sum_{i = 1}^n t_i$ (by A0--A3).
\end{lemma}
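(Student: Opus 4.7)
\textbf{Plan of proof for Lemma~\ref{lem:summands}.} The plan is to proceed by a straightforward induction on the structure of the $\ccslc$ term $t$, using the convention (recalled in Section~\ref{sec:background}) that the empty sum represents $\nil$.

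\textbf{Base cases.} If $t = \nil$, the empty sum ($n = 0$) already gives $t \approx \sum_{i=1}^{0} t_i$. If $t$ is a variable $x$, an action-prefixed term $\mu.t'$, a parallel composition $t' \mathbin{\|} t''$, a left merge $t' \lmerge t''$, or a communication merge $t' \cmerge t''$, then $t$ does not have $+$ as head operator, so the claim holds with $n = 1$ and $t_1 = t$.

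\textbf{Inductive step.} The only nontrivial case is $t = t' + t''$. By the induction hypothesis, there are terms $t'_1, \dots, t'_m$ and $t''_1, \dots, t''_k$ (with $m, k \ge 0$), none of which has $+$ as head operator, such that $t' \approx \sum_{i=1}^{m} t'_i$ and $t'' \approx \sum_{j=1}^{k} t''_j$ are provable using A0--A3. Applying the congruence rule for $+$ yields $t \approx \sum_{i=1}^{m} t'_i + \sum_{j=1}^{k} t''_j$. If either sum is empty, axiom A0 (together with commutativity A1 if the empty side is on the left) removes the corresponding $\nil$ summand; otherwise, repeated applications of the associativity axiom A2 let us rebracket the expression as a single sum $\sum_{\ell=1}^{m+k} u_\ell$ whose summands are exactly $t'_1, \dots, t'_m, t''_1, \dots, t''_k$, none of which has $+$ as head operator. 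In either situation, $t$ is provably equal to a sum of the required form.

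\textbf{Main obstacle.} There is essentially no difficulty here; the only point requiring a little care is the handling of the empty sum convention in the inductive case, ensuring that when either $m = 0$ or $k = 0$ we invoke A0 to obtain a well-formed sum (rather than a literal ``$\nil + \cdots$'' that still has $+$ as head operator but an undesired $\nil$ summand). Since the axioms A0--A3 are exactly what we need to flatten and normalize a sum modulo its neutral element, no deeper argument is necessary.
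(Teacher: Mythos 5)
Your proof is correct and follows exactly the route the paper intends: the paper omits the proof of Lemma~\ref{lem:summands}, noting only that the results in that section ``follow by simple inductions on \dots the structure of terms,'' which is precisely the structural induction you carry out, including the careful use of A0 (and A1, A2) to handle empty sums and flattening. Note only that your non-$\nil$, non-$+$ cases are not really ``base cases'' of the induction but simply cases where the claim holds immediately with $n=1$; this does not affect correctness.
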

The terms $t_i$ are also called the \emph{summands} of $t$.

Moreover, notice that Lemmas~\ref{lem:c_structure} and~\ref{lem:var_to_term} presented for CCS configurations, can be extended in a straightforward manner to $\ccslc$ terms.

\begin{remark}
For Lemma~\ref{lem:var_to_term}.\ref{item:x_u_to_t}, Lemma~\ref{lem:substitution} guarantees that applying $\sigma$ to the term $t'$ that triggers, together with $x$, the transition $t \trans[(x)]_{\alpha,\tau}$, does not affect the possibility of synchronisation between $\sigma(x)$ and $\sigma(t')$.
\end{remark}

We also underline the following properties of variables $x_\mu \in \Var_{\Acttau}$.
We extend the definition of $\var(c)$ in accordance with the newly introduced set of variables:
$\var(c) = \{x \in \Var \mid x \text{ occurs in } c\} \cup \{x_\mu \in \Var_{\Acttau} \mid x_\mu \text{ occurs in } c\}$.

\begin{lemma}
\label{lem:xmu_occurrence}
Let $c \in \C_{\text{LC}}$ and $x \in \Var$.
If $c \trans[\ell]_\rho c'$, for some $c'$ and $\ell$ with $x \in \ell$,
then $c'$ is of the form $x_\mu \mathbin{\|} c''$ for some (possibly null) configuration $c''$ (modulo the axioms in Table~\ref{tab:axioms_b}).
Similarly, if $x_\mu \in \var(c)$, then $c$ is of the form $x_\mu \mathbin{\|} c''$ for some (possibly null) configuration $c''$ (modulo the axioms in Table~\ref{tab:axioms_b}).
\end{lemma}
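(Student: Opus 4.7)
\textbf{Proof plan for Lemma~\ref{lem:xmu_occurrence}.} The statement consists of two claims, and both are established by straightforward structural inductions. The role of the axioms in Table~\ref{tab:axioms_b} is exclusively to allow us to rearrange nested parallel compositions via commutativity and associativity (P1--P2), and, when convenient, to discard a $\nil$ parallel component (P0) so that the configuration $c''$ may be taken to be non-null whenever we wish to exhibit a concrete target.

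For the first claim, I would proceed by induction on the derivation of $c \trans[\ell]_\rho c'$, with $x$ appearing in the label $\ell$. The sole base case is an application of rule $(a_1)$ (or equivalently $(a'_1)$, $(c_1)$): here $\ell = (x)$, and the target configuration is literally $x_\mu$, which is of the form $x_\mu \mathbin{\|} c''$ with $c''$ null (or $x_\mu \mathbin{\|} \nil$ modulo P0). For the inductive cases, the rules split into two groups. The "single trigger" rules $(a_2)$, $(a_3)$, $(a'_3)$, $(a_5)$, $(a_6)$, $(a'_5)$, $(a'_6)$, $(c_4)$ (and their symmetric counterparts) all produce a target of the form $c \mathbin{\|} d$ where $c$ is the target of a premise transition $t \trans[\ell]_\rho c$ with $x \in \ell$; by the induction hypothesis $c = x_\mu \mathbin{\|} c''$, so the conclusion has the form $(x_\mu \mathbin{\|} c'') \mathbin{\|} d$, which is equal to $x_\mu \mathbin{\|} (c'' \mathbin{\|} d)$ modulo P2. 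The "double trigger" rules $(a_4)$ and $(a'_4)$ generate labels of the form $(x,y)$; if $x \in \ell$ we apply the induction hypothesis to the relevant premise to extract $x_\alpha$ as a parallel component of its target, and then P1--P2 suffice to move it to the outermost position of the whole target.

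For the second claim, I would induct on the structure of $c$. By the grammar of $\C_{\mathrm{LC}}$, the symbols $x_\mu \in \Var_{\Acttau}$ never appear inside terms $t$ (which are built from $\Var$), so the only way for $x_\mu$ to occur in $c$ is either $c = x_\mu$ itself (base case, trivially of the desired form with $c''$ null) or $c = c_1 \mathbin{\|} c_2$ with $x_\mu \in \var(c_i)$ for some $i \in \{1,2\}$. The induction hypothesis then gives $c_i = x_\mu \mathbin{\|} c_i''$, and one application of P1 and P2 rearranges $c_1 \mathbin{\|} c_2$ into $x_\mu \mathbin{\|} c''$ with $c''$ the remaining parallel composition.

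I do not expect any genuine obstacle: the induction is purely syntactic and each rule is handled by a one-line appeal to commutativity/associativity of $\mathbin{\|}$. The only mild care is bookkeeping the symmetric counterparts of the parallel-composition rules, and (in the first claim) observing that in rule $(a_4)$ the label $(x,y)$ may have $x = y$, in which case the two occurrences of $x_\mu$ produced by the two premises are still distinguishable via their action subscripts (as emphasised in Example~\ref{ex:variabili_vi_odio}), so picking either one as the extracted parallel component yields a valid witness.
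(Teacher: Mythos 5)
Your proof is correct and follows exactly the route the paper indicates for this lemma (the paper omits the details, stating only that the results of that section ``follow by simple inductions on the derivation of transitions and/or on the structure of terms (configurations)''): induction on the derivation for the first claim and structural induction on $c$ for the second, with P0--P2 handling the rearrangement of parallel components. Your handling of the rule cases, including the $(a_4)$/$(a'_4)$ double-trigger case and the $x=y$ subtlety resolved by the action subscripts, is accurate.
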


The following two lemmas, that correspond to Lemma~\ref{lem:closed2open_alpha} and Lemma~\ref{lem:closed2open_tau} respectively, allow us to show how a transition $\sigma(t) \trans[\mu] p$ can stem from transitions of the $\ccslc$ term $t$ and of the process $\sigma(x)$, for $x \in \var(t)$.

\begin{lemma}
\label{lem:closed2open_alpha_ccslc}
Let $\alpha \in \Act \cup \overline{\Act}$, $t$ be a $\ccslc$ term, $\sigma$ be a closed substitution, and $p$ be a process.
If $\sigma(t) \trans[\alpha] p$, then one of the following holds:
\begin{enumerate}
\item \label{c2o_prefix_ccslc}
There is a $\ccslc$ term $t'$ s.t.\ $t \trans[\alpha] t'$ and $\sigma(t') = p$.
\item \label{c2o_x_ccslc}
There are a variable $x$, a process $q$ and a configuration $c$ s.t.\ $\sigma(x) \trans[\alpha] q$, $t \trans[(x)]_\alpha c$, and $\sigma[x_\alpha \mapsto q](c) = p$.
\end{enumerate}
\end{lemma}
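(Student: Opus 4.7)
The plan is to prove the lemma by straightforward structural induction on the $\ccslc$ term $t$ (equivalently, by induction on the derivation of $\sigma(t) \trans[\alpha] p$, since the SOS rules are syntax-directed). The proof follows exactly the pattern of Lemma~\ref{lem:closed2open_alpha} in the CCS setting, with two new cases to consider: the case $t = t_1 \lmerge t_2$, which behaves like the ``left-component'' subcase of $\mathbin{\|}$, and the case $t = t_1 \cmerge t_2$, which turns out to be vacuous because the SOS rule for $\cmerge$ in Table~\ref{tab:sos_rules_ccslc} can only produce a $\tau$-labelled transition, which cannot match an observable action $\alpha \in \Act \cup \overline{\Act}$.

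For the base cases, if $t = \nil$ there are no transitions; if $t = \mu.t'$ then $\sigma(t) = \mu.\sigma(t') \trans[\mu] \sigma(t')$ forces $\mu = \alpha$, so item~\ref{c2o_prefix_ccslc} holds with $t'$ itself. The variable case $t = x$ is the heart of item~\ref{c2o_x_ccslc}: since $\sigma(x) \trans[\alpha] p$, we apply the auxiliary rule $(a'_1)$ (identical to $(a_1)$) to obtain $x \trans[(x)]_\alpha x_\alpha$ and take $c = x_\alpha$ and $q = p$, noting that $\sigma[x_\alpha \mapsto p](x_\alpha) = p$ by construction.

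For the inductive cases, consider $t = t_1 \star t_2$ where $\star$ is one of $+$, $\mathbin{\|}$, or $\lmerge$. By inspection of the corresponding SOS rule in Table~\ref{tab:sos_rules} or Table~\ref{tab:sos_rules_ccslc}, the transition $\sigma(t) \trans[\alpha] p$ must derive from a transition $\sigma(t_i) \trans[\alpha] p_i$ of one subterm (or, for $\lmerge$, specifically $t_1$). Applying the induction hypothesis to $t_i$ yields either item~\ref{c2o_prefix_ccslc} or item~\ref{c2o_x_ccslc} for the subterm; in the former case the SOS rule for $\star$ lifts $t_i \trans[\alpha] t_i'$ to a transition of $t$, and in the latter case the auxiliary rule from Table~\ref{tab:ell_rules_ccslc} (namely $(a'_2)$ for $+$, the unstated rule for $\mathbin{\|}$ analogous to $(a_3)$, or $(a'_3)$ for $\lmerge$) lifts $t_i \trans[(x)]_\alpha c_i$ to $t \trans[(x)]_\alpha c$ for the appropriate enlarged configuration $c$. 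The one piece of bookkeeping is that, when passing from $c_1$ to $c_1 \mathbin{\|} t_2$ (or symmetrically), we must verify $\sigma[x_\alpha \mapsto q](c_1 \mathbin{\|} t_2) = p_1 \mathbin{\|} \sigma(t_2) = p$, which is immediate because the subscripted variable $x_\alpha \in \Var_{\Acttau}$ is disjoint from $\Var$ and hence cannot occur in the original term $t_2$.

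The case $t = t_1 \cmerge t_2$ is, as noted, vacuous. Hence there is no real obstacle: the proof is a routine structural induction whose only subtlety is confirming that substitution with the fresh subscripted variables $x_\mu \in \Var_{\Acttau}$ distributes correctly over the parallel structure of configurations, a property already implicit in the design of configurations in Section~\ref{sec:configurations_bis}.
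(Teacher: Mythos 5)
Your proof is correct and takes the same route the paper indicates: the paper omits the proof of this lemma, noting only that it ``follows by simple inductions on the derivation of transitions and/or on the structure of terms'', and your structural induction (with the variable case feeding item~2, the $\lmerge$ case handled like the left component of $\mathbin{\|}$, and the $\cmerge$ case vacuous for observable $\alpha$) is exactly that routine argument. The bookkeeping point you flag --- that $\sigma[x_\alpha \mapsto q]$ acts as $\sigma$ on the untouched component because $x_\alpha \in \Var_{\Acttau}$ is disjoint from $\Var$ --- is the right detail to check, and it goes through.
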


\begin{lemma}
\label{lem:closed2open_tau_ccslc}
Let $t$ be a $\ccslc$ term, $\sigma$ be a closed substitution, and $p$ be a process.
If $\sigma(t) \trans[\tau] p$, then one of the following holds:
\begin{enumerate}
\item \label{lem:c2o_prefix_ccslc}
There is a $\ccslc$ term $t'$ s.t.\ $t \trans[\tau] t'$ and $\sigma(t') = p$.
\item \label{lem:c2o_x_ccslc}
There are a variable $x$, a process $q$, and a configuration $c$ s.t.\ $\sigma(x) \trans[\tau] q$, $t \trans[(x)]_\tau c$, and $\sigma[x_\tau \mapsto q](c) = p$.
\item \label{lem:c2o_xu_ccslc}
There are a variable $x$, a process $q$, and a configuration $c$ s.t., for some $\alpha \in \Act\cup\overline{\Act}$, $\sigma(x) \trans[\alpha] q$, $t \trans[(x)]_{\alpha,\tau} c$, and $\sigma[x_\alpha \mapsto q](c) = p$.
\item \label{lem:c2o_xy_ccslc}
There are variables $x,y$, processes $q_x,q_y$ and a configuration $c$ s.t., for some $\alpha \in \Act\cup\overline{\Act}$, $\sigma(x) \trans[\alpha] q_x$, $\sigma(y) \trans[\overline{\alpha}] q_y$, $t \trans[(x,y)]_\tau c$, and $\sigma[x_\alpha \mapsto q_x, y_{\overline{\alpha}} \mapsto q_y](c) = p$.
\end{enumerate}
\end{lemma}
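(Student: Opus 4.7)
The plan is to argue by structural induction on the derivation of $\sigma(t) \trans[\tau] p$ using the SOS rules in Table~\ref{tab:sos_rules}. Each of the four alternatives in the conclusion corresponds to a distinct provenance of the $\tau$-step: (\ref{lem:c2o_prefix}) an explicit $\tau$-prefix in $t$ itself, (\ref{lem:c2o_x}) a $\tau$-move of a substituted variable, (\ref{lem:c2o_xu}) a synchronisation in which a variable communicates with a subterm, and (\ref{lem:c2o_xy}) a synchronisation between two substituted variables. The job of the induction is to show that these alternatives jointly cover every derivation shape, with the auxiliary rules of Table~\ref{tab:ell_rules} supplying the appropriate witness transitions over configurations.

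The two base cases are immediate. If $t = \tau.t'$, then $t \trans[\tau] t'$ with $\sigma(t') = p$, yielding~(\ref{lem:c2o_prefix}). If $t$ is a variable $x$, then $\sigma(x) \trans[\tau] p$; meanwhile $x \trans[(x)]_\tau x_\tau$ by rule $(a_1)$ and $\sigma[x_\tau \mapsto p](x_\tau) = p$, yielding~(\ref{lem:c2o_x}). The inductive step for choice $t = t_1 + t_2$ is equally direct: the transition must arise from one summand, the induction hypothesis supplies one of the four alternatives for that summand, and rule $(a_2)$ lifts every witness transition of $t_i$ to one of $t$ without altering the associated configuration or substitution.

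The substantive work lies in the parallel case $t = t_1 \mathbin{\|} t_2$. If the $\tau$-step is derived by an interleaving rule from, say, $\sigma(t_1) \trans[\tau] p_1$ with $p = p_1 \mathbin{\|} \sigma(t_2)$, then the induction hypothesis on $\sigma(t_1)$ returns one of the four alternatives for $t_1$, and rule $(a_3)$ attaches $t_2$ on the right to promote each witness to $t_1 \mathbin{\|} t_2$. The real case, and the main obstacle, is synchronisation: $\sigma(t_1) \trans[\alpha] p_1$ and $\sigma(t_2) \trans[\overline{\alpha}] p_2$ with $p = p_1 \mathbin{\|} p_2$. Here I would invoke the companion result Lemma~\ref{lem:closed2open_alpha} on each side, producing on each a dichotomy between a transition of the subterm itself and a transition triggered by a substituted variable. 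The resulting four sub-subcases line up exactly with the remaining auxiliary rules: both transitions from subterms use the native synchronisation rule of Table~\ref{tab:sos_rules} and land in~(\ref{lem:c2o_prefix}); one side from a variable and the other from a subterm use rule $(a_5)$ or its symmetric $(a_6)$ and land in~(\ref{lem:c2o_xu}); both sides triggered by variables $x$ and $y$ use $(a_4)$ and land in~(\ref{lem:c2o_xy}).

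The main subtlety is bookkeeping in the fully variable-triggered sub-subcase: we must verify that the extended substitution $\sigma[x_\alpha \mapsto p_1, y_{\overline{\alpha}} \mapsto p_2]$ distributes correctly over the configuration $c_1 \mathbin{\|} c_2$ so as to reconstruct $p_1 \mathbin{\|} p_2$. This hinges on the uniqueness of the subscripted witnesses $x_\alpha$ and $y_{\overline{\alpha}}$ even when $x = y$, which is precisely the phenomenon highlighted in Example~\ref{ex:variabili_vi_odio} and which motivates recording the action inside the augmented variable name. Once this is in hand, and the symmetric instances of the interleaving and $(a_5)$/$(a_6)$ subcases are dispatched by the same argument, the four conclusions are exhausted and the induction closes.
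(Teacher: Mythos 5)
Your induction is set up over the wrong grammar: the statement is about $\ccslc$ terms, but your case analysis only covers $\nil$, variables, $\tau$-prefixing, choice, and $\mathbin{\|}$, i.e.\ the syntax of plain CCS. You never treat the cases $t = t_1 \lmerge t_2$ and $t = t_1 \cmerge t_2$, and you consistently cite the CCS artefacts (Table~\ref{tab:sos_rules}, Table~\ref{tab:ell_rules}, rules $(a_1)$--$(a_6)$, Lemma~\ref{lem:closed2open_alpha}) rather than their $\ccslc$ counterparts (Table~\ref{tab:sos_rules_ccslc}, Table~\ref{tab:ell_rules_ccslc}, rules $(a'_3)$--$(a'_6)$, Lemma~\ref{lem:closed2open_alpha_ccslc}). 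As written, your argument proves Lemma~\ref{lem:closed2open_tau} for CCS, not the lemma in question; the paper's own proof of this statement singles out $t = t_1 \cmerge t_2$ as \emph{the} interesting case, precisely because it is what the extension to $\ccslc$ adds.

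The good news is that the missing cases are handled by exactly the machinery you already deploy. For $t = t_1 \cmerge t_2$, a $\tau$-step of $\sigma(t_1 \cmerge t_2)$ can only be a synchronisation $\sigma(t_1) \trans[\alpha] p_1$, $\sigma(t_2) \trans[\overline{\alpha}] p_2$ with $p = p_1 \mathbin{\|} p_2$, and the paper runs verbatim the four-way sub-subcase analysis you give for the synchronisation subcase of $\mathbin{\|}$: apply Lemma~\ref{lem:closed2open_alpha_ccslc} to each side and dispatch the combinations via the native rule of Table~\ref{tab:sos_rules_ccslc} and the auxiliary rules $(a'_4)$, $(a'_5)$, $(a'_6)$, with the same bookkeeping on $x_\alpha$ and $y_{\overline{\alpha}}$ (note the targets are still of the form $c_1 \mathbin{\|} c_2$, since both the concrete and auxiliary rules for $\cmerge$ produce a parallel composition). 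For $t = t_1 \lmerge t_2$, only $\sigma(t_1)$ can move, so this is your interleaving subcase with rule $(a'_3)$ in place of $(a_3)$. Add these two cases (and the $\mathbin{\|}$ and $+$ cases you already have, restated for $\ccslc$ configurations) and the proof is complete and coincides with the paper's.
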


\begin{proof}
The proof follows by induction on the derivation of the transition $\sigma(t) \trans[\tau] p$.
The interesting case is the base case $t = t_1 \cmerge t_2$, for some CCS terms $t_1,t_2$, that we expand below.

According to the operational semantics of the communication merge, a $\tau$-move by $\sigma(t_1 \cmerge t_2)$ can only be generated by a communication between $\sigma(t_1)$ and $\sigma(t_2)$.
We can assume, without loss of generality, that $\sigma(t_1) \trans[\alpha] p_1$ and $\sigma(t_2) \trans[\overline{\alpha}] p_2$, for some processes $p_1,p_2$ such that $p_1 \mathbin{\|} p_2 = p$.
The symmetric case of $\sigma(t_1) \trans[\overline{\alpha}]$ and $\sigma(t_2) \trans[\alpha]$ follows by applying the same arguments we use below, switching the roles of $\sigma(t_1)$ and $\sigma(t_2)$.
By Lemma~\ref{lem:closed2open_alpha_ccslc}, from $\sigma(t_1) \trans[\alpha] p_1$ we can infer that either $t_1 \trans[\alpha] t_1'$ for some term $t_1'$ such that $\sigma(t_1') = p_1$, or there are a variable $x$, a process $q_1$ and a configuration $c_1$ such that $\sigma(x) \trans[\alpha] q_1$, $t_1 \trans[(x)]_\alpha c_1$, and $\sigma[x_\alpha \mapsto q_1](c_1) = p_1$.
Similarly, by Lemma~\ref{lem:closed2open_alpha_ccslc}, from $\sigma(t_2) \trans[\overline{\alpha}] p_2$ we can infer that either $t_2 \trans[\overline{\alpha}] t_2'$ for some term $t_2'$ such that $\sigma(t_2') = p_2$, or there are a variable $y$, a process $q_2$ and a configuration $c_2$ such that $\sigma(y) \trans[\overline{\alpha}] q_2$, $t_2 \trans[(y)]_{\overline{\alpha}} c_2$, and $\sigma[y_{\overline{\alpha}} \mapsto q_2](c_2) = p_2$.
We can then distinguish four cases, depending on how the possible derivations of the transitions of $t_1$ and $t_2$ mentioned above are combined in the derivation of the synchronisation. 
\begin{enumerate}
\item Case $t_1 \trans[\alpha] t_1'$ and $t_2 \trans[\overline{\alpha}] t_2'$.

In this case, the operational semantics of $\cmerge$ allows us to directly infer that $t_1 \cmerge t_2 \trans[\tau] t_1' \mathbin{\|} t_2'$.
Hence we have that there is $t' = t_1' \mathbin{\|} t_2'$ such that $t \trans[\tau] t'$ and $\sigma(t') = p_1 \mathbin{\|} p_2 = p$.

\item Case $\sigma(x) \trans[\alpha] q_1$, $t_1 \trans[(x)]_\alpha c_1$, $\sigma[x_\alpha \mapsto q_1](c_1) = p_1$, and $t_2 \trans[\overline{\alpha}] t_2'$.

As $t_1 \trans[(x)]_\alpha c_1$ and $t_2 \trans[\overline{\alpha}] t_2'$ we can apply the auxiliary rule ($a_5$) and obtain that $t_1 \cmerge t_2 \trans[(x)]_{\alpha,\tau} c_1 \mathbin{\|} t_2'$.
Hence we have that there are a variable $x$, a subterm $t_2$ of $t$, a process $q_1$ and a configuration $c = c_1 \mathbin{\|} t_2'$ such that $\sigma(x) \trans[\alpha] q_1$, $t_2 \trans[\overline{\alpha}] t_2'$, $t \trans[(x)]_{\alpha,\tau} c$ and $\sigma[x_\alpha \mapsto q_1](c) = \sigma[x_\alpha \mapsto q_1](c_1) \mathbin{\|} \sigma(t_2') = p_1 \mathbin{\|} p_2 = p$.

\item Case $t_1 \trans[\alpha] t_1'$, $\sigma(y) \trans[\overline{\alpha}] q_2$, $t_2 \trans[(y)]_{\overline{\alpha}} c_2$, and $\sigma[y_{\overline{\alpha}} \mapsto q_2](c_2) = p_2$.

This is the symmetrical of the previous case and it follows by applying the same reasoning used above, using the auxiliary rule ($a_6$) in place of ($a_5$).

\item Case $\sigma(x) \trans[\alpha] q_1$, $t_1 \trans[(x)]_\alpha c_1$, $\sigma[x_\alpha \mapsto q_1](c_1) = p_1$, $\sigma(y) \trans[\overline{\alpha}] q_2$, $t_2 \trans[(y)]_{\overline{\alpha}} c_2$, and $\sigma[y_{\overline{\alpha}} \mapsto q_2](c_2) = p_2$.

As $t_1 \trans[(x)]_\alpha c_1$ and $t_2 \trans[(y)]_{\overline{\alpha}} c_2$, we can apply the auxiliary rule ($a_4$) and obtain thus $t_1 \cmerge t_2 \trans[(x,y)]_{\tau} c_1 \mathbin{\|} c_2$.
Hence we have obtained that there are variables $x,y$, processes $q_1,q_2$ and a configuration $c = c_1 \mathbin{\|} c_2$ such that $\sigma(x) \trans[\alpha] q_1$, $\sigma(y) \trans[\overline{\alpha}] q_2$, $t \trans[(x,y)]_\tau c$, and $\sigma[x_\alpha \mapsto q_1, y_{\overline{\alpha}} \mapsto q_2](c) = 
\sigma[x_\alpha \mapsto q_1](c_1) \mathbin{\|} \sigma[y_{\overline{\alpha}} \mapsto q_2](c_2) = 
p_1 \mathbin{\|} p_2 = p$, where we can distribute the substitutions over $\mathbin{\|}$ since $x_\alpha$ and $y_{\overline{\alpha}}$ are unique by construction (even if $x = y$, the two subscripts allow us to distinguish them). 
\end{enumerate}
\end{proof}

\begin{remark}
\label{rmk:single_step}
It is worth noticing that in Lemmas~\ref{lem:closed2open_alpha_ccslc} and~\ref{lem:closed2open_tau_ccslc}, we only consider the derivation of a \emph{single computation step}.
In particular, each (occurrence of a) variable can trigger at most one (possibly nondeterministic) computation step of a term.
In order words, we cannot use a single (occurrence of a) variable to derive a sequence of transitions.
This step-by-step approach is applied the substitutions as well.
Consider the term $x \mathbin{\|} x$ and let $\sigma$ be a closed substitution such that $\sigma(x) = \mu.p_1 + \mu.p_2$, with $p_1 \neq p_2$.
It is immediate that $\sigma(x) \mathbin{\|} \sigma(x) \trans[\mu] p_1 \mathbin{\|} \sigma(x) \trans[\mu] p_1 \mathbin{\|} p_2$ is a valid computation.
At the same time, we have that $x \mathbin{\|} x \trans[x] x_{\dd} \mathbin{\|} x$, and $x_{\dd} \mathbin{\|} x \trans[x] x_\dd \mathbin{\|} x_{\dd}$.
As you can see, there is no way in which we can distinguish the two occurrences of $x_\dd$ in the last term.
Thus, the substitution $\sigma[x_\dd \mapsto p_1, x_\dd \mapsto p_2](x_\dd \mathbin{\|} x_\dd)$ is not well defined.
However, this is not a problem, since the substitution is applied at each step, thus $\sigma(x) \mathbin{\|} \sigma(x) \trans[\mu] \sigma[x_\dd \mapsto p_1] (x_\dd \mathbin{\|} x) = p_1 \mathbin{\|} \sigma(x)$ and $p_1 \mathbin{\|} \sigma(x) \trans[\mu] \sigma[x_\dd \mapsto p_2](p_1 \mathbin{\|} x_\dd) = p_1 \mathbin{\|} p_2$.
\end{remark}

%========================================================
%==========================================================

\section{Proof of Theorem~\ref{thm:bb_on_open}}

Before proceeding to the proof, we list some properties that can be directly inferred from the operational semantics given in Tables~\ref{tab:sos_rules},~\ref{tab:sos_rules_ccslc},~\ref{tab:ell_rules_ccslc}, and~\ref{tab:c_rules}.
They can be seen as an extension of Lemma~\ref{lem:var_to_term} from terms to configurations.

\begin{lemma}
\label{lem:config_properties}
Let $c$ be a $\ccslc$ configuration and $\sigma$ be a closed substitution.
Then:
\begin{enumerate}
\item For all $\mu \in \Acttau$, whenever $c \trans[\mu] c'$ for some configuration $c'$, then $\sigma(c) \trans[\mu] \sigma(c')$.

\item For all variables $x,y \in \Var$:
\begin{enumerate}
\item For all $\mu \in \Acttau$, whenever $c \trans[(x)]_\mu c'$ for some configuration $c'$, then $\sigma(x) \trans[\mu] p_x$ implies $\sigma(c) \trans[\mu] \sigma[x_\mu \mapsto p_x](c')$.
\item Whenever $c \trans[(x,y)]_\tau c'$ for some configuration $c'$, then $\sigma(x) \trans[\alpha] p_x$ and $\sigma(y) \trans[\overline{\alpha}] p_y$ imply $\sigma(c) \trans[\tau] \sigma[x_\alpha \mapsto p_x, y_{\overline{\alpha}} \mapsto p_y](c')$, for any $\alpha \in \Act\cup\overline{\Act}$. 
\item For all $\alpha \in \Act\cup\overline{\Act}$, whenever $c \trans[(x)]_{\alpha,\tau} c'$ for some configuration $c'$, then $\sigma(x) \trans[\alpha] p_x$ implies $\sigma(c) \trans[\tau] \sigma[x_\alpha \mapsto p_x](c')$.
\end{enumerate}

\item Whenever $c \trans[x_\mu] c'$ for some configuration $c'$, then $c \not\sim_\bb \nil$, and $\sigma(c) = \sigma(c')$ for all closed substitutions $\sigma$.
\end{enumerate}
\end{lemma}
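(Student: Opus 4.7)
The plan is to prove each of the three items by a straightforward induction on the derivation of the given transition from $c$, mirroring the SOS rules in Tables~\ref{tab:sos_rules},~\ref{tab:sos_rules_ccslc},~\ref{tab:ell_rules_ccslc}, and~\ref{tab:c_rules}. Throughout, I rely on the fact that every closed substitution distributes over parallel composition, i.e.\ $\sigma(c_1 \mathbin{\|} c_2) = \sigma(c_1) \mathbin{\|} \sigma(c_2)$, and that $\sigma$ extended by a fresh binding for some $x_\mu \in \Var_{\Acttau}$ agrees with $\sigma$ on every configuration that does not contain $x_\mu$.

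For item 1, a transition $c \trans[\mu] c'$ arises either from a term-level derivation (rules in Tables~\ref{tab:sos_rules} and~\ref{tab:sos_rules_ccslc} applied to a term subconfiguration), in which case the claim is the $\ccslc$-extension of Lemma~\ref{lem:substitution}, or from rule $(c_3)$ of Table~\ref{tab:c_rules} applied to a parallel composition $c = c_1 \mathbin{\|} c_2$ with $c_1 \trans[\mu] c_1'$. In the latter case, the inductive hypothesis yields $\sigma(c_1) \trans[\mu] \sigma(c_1')$, and distributivity then delivers $\sigma(c) \trans[\mu] \sigma(c')$. The synchronisation case for $\mathbin{\|}$ is treated analogously by combining two inductive hypotheses.

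For item 2, I would prove (a), (b) and (c) simultaneously by induction on the derivation. The base cases are those in which $c$ is a $\ccslc$ term; these follow from the $\ccslc$-analogue of Lemma~\ref{lem:var_to_term}, whose proof from the rules of Tables~\ref{tab:ell_rules} and~\ref{tab:ell_rules_ccslc} parallels the CCS case already treated. The inductive step deals with rule $(c_4)$: given $c = c_1 \mathbin{\|} c_2$ with $c_1 \trans[\ell]_\rho c_1'$, the inductive hypothesis supplies the required transition of $\sigma(c_1)$ under the appropriate extension $\sigma'$ of $\sigma$ with fresh subscripted bindings, and distributivity over $\mathbin{\|}$ then yields $\sigma(c) \xitrans \sigma'(c_1' \mathbin{\|} c_2) = \sigma'(c')$. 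The crucial point---and the main technical subtlety of the whole proof---is that $\sigma$ and $\sigma'$ must agree on everything occurring in $c_2$: this is guaranteed by Lemma~\ref{lem:xmu_occurrence}, which ensures that any subscripted variable appearing in $c_1'$ occurs at most once in the entire configuration, so the fresh binding(s) introduced by $\sigma'$ cannot clash with $c_2$. The same observation (applied to both $x_\alpha$ and $y_{\overline{\alpha}}$, which carry distinct subscripts since $\alpha \neq \overline{\alpha}$) handles the two-variable extension needed for~(b).

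For item 3, a direct structural induction on the derivation of $c \trans[x_\mu] c'$ suffices, since only rules $(c_1)$ and $(c_2)$ of Table~\ref{tab:c_rules} can produce such a transition. In the base case $(c_1)$, $c = c' = x_\mu$, so $\sigma(c) = \sigma(c')$ trivially; moreover $c$ admits the observable (non-$\tau$) outgoing transition $\trans[x_\mu]$, whereas $\nil$ has no transitions at all, so $c \not\sim_\bb \nil$. In the inductive step $(c_2)$, $c = c_1 \mathbin{\|} c_2$ with $c_1 \trans[x_\mu] c_1'$ and $c' = c_1' \mathbin{\|} c_2$: the inductive hypothesis delivers $\sigma(c_1) = \sigma(c_1')$ and $c_1 \not\sim_\bb \nil$, whence $\sigma(c) = \sigma(c')$, and $c$ itself still admits the $\trans[x_\mu]$ transition, so $c \not\sim_\bb \nil$.
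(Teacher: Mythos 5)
Your proof is correct and takes the route the paper itself indicates: the paper states this lemma without proof, remarking only that the properties ``can be directly inferred from the operational semantics'' as an extension of Lemma~\ref{lem:var_to_term} to configurations, and your induction on the derivation of the transition (with the term-level cases discharged by the $\ccslc$-analogues of Lemmas~\ref{lem:substitution} and~\ref{lem:var_to_term}, and the configuration-level rules $(c_1)$--$(c_4)$ handled in the inductive step) is exactly that argument spelled out. Your observation that the subscripted variables $x_\mu$ are unique by construction, so that the extended substitution cannot clash with the other parallel component, is the same justification the paper uses when distributing substitutions over $\mathbin{\|}$ in the proof of Lemma~\ref{lem:closed2open_tau}.
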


The following lemma proves a fundamental, albeit immediate, property of branching bisimilar configurations: they contain the same variables.

\begin{lemma}
\label{lem:same_variables}
Assume that $c_1 \sim_\bb c_2$.
Then, given any variable $x \in \Var$, $x \in \var(c_1)$ if{f} $x \in \var(c_2)$.
Similarly, for any $\mu \in \Acttau$, $x_\mu \in \var(c_1)$ if{f} $x_\mu \in \var(c_2)$.
\end{lemma}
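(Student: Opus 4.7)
The plan is to treat the two assertions---one for variables in $\Var_{\Acttau}$ and one for variables in $\Var$---separately, exploiting in each case the specific transitions whose existence depends on such a variable occurring in the configuration.

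I would first dispose of the $x_\mu \in \Var_{\Acttau}$ case. By Lemma~\ref{lem:xmu_occurrence}, $x_\mu \in \var(c_1)$ means that $c_1$ has, modulo the axioms of Table~\ref{tab:axioms_b}, the form $x_\mu \mathbin{\|} c''$ for some $c''$, and hence admits the transition $c_1 \trans[x_\mu] c_1'$ derivable using rules $(c_1)$ and $(c_2)$ of Table~\ref{tab:c_rules}. Since the label $x_\mu$ differs from $\tau$, Definition~\ref{def:open_bb} forces a matching $c_2 \trans[\varepsilon] c_2'' \trans[x_\mu] c_2'$. A straightforward induction on derivations of $\trans[x_\mu]$-transitions, using that only rules $(c_1)$ and $(c_2)$ generate such transitions, shows that any source of such a transition must contain $x_\mu$; hence $x_\mu \in \var(c_2'')$. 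Finally, an inspection of every SOS rule producing a $\trans[\tau]$-transition (Tables~\ref{tab:sos_rules}, \ref{tab:sos_rules_ccslc} and rule $(c_3)$ of Table~\ref{tab:c_rules}) shows that $c \trans[\tau] c'$ implies $\var(c') \subseteq \var(c)$, which yields $x_\mu \in \var(c_2)$.

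For $x \in \Var$, the argument rests on an auxiliary claim that I would prove first: for every configuration $c$ with $x \in \var(c) \cap \Var$ there exists a sequence of transitions $c = d_0 \xitrans d_1 \xitrans \cdots \xitrans d_k$ whose final step is an auxiliary transition $d_{k-1} \trans[(x)]_\nu d_k$ triggered by $x$. The claim is proven by structural induction on $c$, combined with a nested structural induction on terms. The base case $c = x$ is immediate via rule $(a_1)$, which gives $x \trans[(x)]_\nu x_\nu$ for any $\nu \in \Acttau$. For $c = \mu.t'$ with $x \in \var(t')$ the transition $\mu.t' \trans[\mu] t'$ reduces the problem to $t'$, to which the inductive hypothesis applies. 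For $c = c_1' \mathbin{\|} c_2'$, assuming without loss of generality that $x \in \var(c_1')$, the inductive hypothesis applied to $c_1'$ yields a witnessing sequence, which is then lifted to the parallel context through rules $(c_3)$ and $(c_4)$. The cases for choice, left merge, and communication merge on terms are handled by analogous liftings using the corresponding SOS rules.

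Given the auxiliary claim, I would apply the bisimulation game step by step along the witnessing sequence: each transition $d_{i-1} \xitrans d_i$ is matched from the $c_2$-side either by a stuttering match (when its label is $\tau$) or by a composition $c_2^{(i-1)} \trans[\varepsilon] e_i' \xitrans e_i$ with $d_i \sim_\bb e_i$. In particular, the terminal auxiliary transition $d_{k-1} \trans[(x)]_\nu d_k$ must be matched by $e_{k-1} \trans[\varepsilon] e_{k-1}' \trans[(x)]_\nu e_k'$. By inspection of rules $(a_1)$--$(a_6)$ and their $\ccslc$ counterparts in Table~\ref{tab:ell_rules_ccslc}, a $\trans[(x)]_\nu$-transition out of $e_{k-1}'$ entails $x \in \var(e_{k-1}')$. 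Since no SOS rule introduces a new $\Var$-variable along any transition---i.e., $\var(c') \cap \Var \subseteq \var(c) \cap \Var$ whenever $c \xitrans c'$---tracing backwards along the sequence from $c_2$ to $e_{k-1}'$ yields $x \in \var(c_2)$. The converse direction of the biconditional follows from the symmetry of $\sim_\bb$.

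The main technical obstacle will be the nested induction in the auxiliary claim: for every configuration shape---in particular those involving the $\ccslc$-specific operators $\lmerge$ and $\cmerge$---one must exhibit a concrete transition sequence that peels away outer operators until $x$ becomes directly triggerable. Particular care is needed for the communication merge, where the triggering transition may simultaneously involve two variables (rule $(a'_4)$), requiring a separate argument that the triggering is really attributable to the intended variable $x$ rather than to its synchronising companion.
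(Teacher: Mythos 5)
Your treatment of the $\Var_{\Acttau}$ half is sound: $x_\mu$ always affords the transition $x_\mu \trans[x_\mu] x_\mu$, only rules $(c_1)$ and $(c_2)$ produce $\trans[x_\mu]$-labelled transitions, and plain $\trans[\tau]$-steps never introduce variables from $\Var_{\Acttau}$, so the game argument goes through. (The paper offers no proof of this lemma at all, declaring it ``immediate'', so there is no official argument to compare against.)

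The $\Var$ half, however, rests on an auxiliary claim that is false: it is not true that every configuration containing $x \in \Var$ admits a transition sequence ending in a step triggered by $x$. The configuration $\nil \cmerge x$ contains $x$ but has no outgoing transitions of any kind, since every rule for $\cmerge$ (both in Table~\ref{tab:sos_rules_ccslc} and in Table~\ref{tab:ell_rules_ccslc}) requires both arguments to move and $\nil$ cannot; the same holds for $\nil \lmerge x$ and for $a.\nil \cmerge a.x$. Worse, these examples do not merely break your proof --- they refute the $\Var$ half of the statement itself: $\nil \cmerge x$ and $\nil$ are both deadlocked in the configuration LTS, hence $\nil \cmerge x \sim_\bb \nil$ (this is exactly the content of the sound axiom C0, and L0 gives the analogous fact for $\nil \lmerge x$), yet $x \in \var(\nil \cmerge x) \setminus \var(\nil)$. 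So no proof of the first assertion can succeed for arbitrary $\ccslc$ configurations; it holds only under an additional hypothesis guaranteeing that every occurrence of $x$ is ``live'' (e.g., for normal forms in the sense of Definition~\ref{def:rbb_nf_ccslc}, where your triggerability argument does work). Note that the paper only ever invokes the $\Var_{\Acttau}$ half of this lemma (in the proof of Lemma~\ref{lem:only_terms_ccslc}), which is the half you prove correctly. You partially sensed the danger --- your closing remark flags $\cmerge$ as delicate --- but the obstruction is not attributing a synchronisation to the right variable; it is that the required transition may simply not exist.
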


Moreover, we introduce the notion of \emph{derivative} to configurations.

\begin{definition}
[Derivative]
For a configuration $c$, the set of \emph{derivatives} of $c$, notation $\der(c)$, is the least set containing $c$ that is closed under $\twoheadrightarrow$, i.e., the least set satisfying:
\begin{itemize}
\item $c \in \der(c)$, and
\item if $c' \in \der(c)$ and $c' \xitrans c''$, for some label $\xi$, then $c'' \in \der(c)$.
\end{itemize}
\end{definition}

The following lemma can be proved by induction over the structure of configurations (noticing that $\der(x_\dd) = \{x_\dd\}$ and $\der(x_\mu) = \{x_\mu\}$, for all $x_\dd \in \Var_\dd$ and $x_\mu \in \Var_{\Acttau}$).

\begin{lemma}
\label{lem:der_finite}
For every $\ccslc$ configuration $c$, the set $\der(c)$ is finite.
\end{lemma}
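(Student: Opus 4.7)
The plan is to prove the lemma by structural induction on the configuration $c$, following the grammar $c ::= x_\mu \mid t \mid c_1 \mathbin{\|} c_2$, with a sub-induction on the structure of $t$ for the middle base case. This is exactly the hint the authors give, and the payoff is that each inductive step bounds $\der(c)$ by a Cartesian product of already-finite sets.

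For the base case $c = x_\mu$ with $x_\mu \in \Var_{\Acttau}$, the only inference rule applicable to $x_\mu$ as source is $(c_1)$ in Table~\ref{tab:c_rules}, which produces the self-loop $x_\mu \trans[x_\mu] x_\mu$. Hence $\der(x_\mu) = \{x_\mu\}$, a singleton. For the base case $c = t$ with $t$ a $\ccslc$ term, I would perform a sub-induction on the structure of $t$. The cases $t=\nil$ (no outgoing transitions) and $t = x \in \Var$ (yielding $\der(x) = \{x\} \cup \{x_\mu \mid \mu \in \Acttau\}$ via rule $(a_1)$ combined with the previous base case) are immediate, the latter relying on the standing assumption that $\Acttau$ is finite. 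For prefixing $\mu.t'$ and choice $t_1 + t_2$, inspection of the rules in Tables~\ref{tab:sos_rules} and \ref{tab:ell_rules_ccslc} shows that every one-step successor lies in $\der(t')$ or in $\der(t_1) \cup \der(t_2)$, which are finite by the sub-induction hypothesis. For the binary operators $\mathbin{\|}$, $\lmerge$ and $\cmerge$, the rules in Tables~\ref{tab:sos_rules}, \ref{tab:sos_rules_ccslc} and \ref{tab:ell_rules_ccslc} (including the synchronising rules $(a_4)$--$(a_6)$) produce successors of the form $c' \mathbin{\|} c''$ with $c'$ equal to $t_1$ or a one-step successor of $t_1$ and $c''$ equal to $t_2$ or a one-step successor of $t_2$; this reduces to the inductive step for parallel composition of configurations handled next.

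For the inductive step $c = c_1 \mathbin{\|} c_2$, I would establish by induction on the length of a transition sequence that every element of $\der(c_1 \mathbin{\|} c_2)$ has the form $c_1' \mathbin{\|} c_2'$ with $c_1' \in \der(c_1)$ and $c_2' \in \der(c_2)$. The crucial observation enabling this is that Table~\ref{tab:c_rules} contains only propagation rules $(c_2)$, $(c_3)$, $(c_4)$ and their symmetric counterparts, and no synchronising rule acting on the outermost $\mathbin{\|}$ of a configuration; the synchronising rules $(a_4)$--$(a_6)$ apply exclusively when both arguments are terms, and are already accounted for in the sub-induction on terms. Consequently $\der(c_1 \mathbin{\|} c_2)$ embeds into the Cartesian product $\der(c_1) \times \der(c_2)$ via $\mathbin{\|}$, which is finite by the induction hypothesis on $c_1$ and $c_2$.

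The main obstacle I anticipate is purely bookkeeping: the operational semantics is spread across four tables and one must carefully enumerate all inference rules to confirm that, in each case, no one-step successor escapes the expected product upper bound. A secondary subtlety is the status of the label-$x_\mu$ transitions generated by rule $(c_1)$: these never introduce new configurations because they are self-loops, and rules $(c_2)$ propagate them componentwise, so they contribute nothing to the derivative count beyond what is already present. Once these routine verifications are carried out the induction closes and finiteness of $\der(c)$ follows for every $c \in \C_{\text{LC}}$.
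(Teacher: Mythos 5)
Your proof is correct and follows essentially the same route as the paper, which merely asserts that the lemma ``can be proved by induction over the structure of configurations'' after noting the base cases $\der(x_\mu)=\{x_\mu\}$; your fleshing out of the sub-induction on terms and the embedding of $\der(c_1 \mathbin{\|} c_2)$ into $\der(c_1)\times\der(c_2)$ is exactly the intended argument. One point worth flagging: your computation $\der(x)=\{x\}\cup\{x_\mu \mid \mu\in\Acttau\}$ correctly exposes that, because of rule $(a_1)$, the lemma as stated literally requires $\Acttau$ to be finite --- an assumption the paper never makes explicit at this point --- so you are, if anything, more careful than the source.
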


We can now prove Theorem~\ref{thm:bb_on_open}.

\thmbbonopen*

\begin{proof}
We prove the statement only for branching bisimilarity $\sim_\bb$.
The proofs for the other equivalences can be carried out in a similar fashion.

We prove the two implications separately.

$(\Rightarrow)$.
Assume that $c_1 \sim_\bb c_2$.
Our aim is to show that $\sigma(c_1) \sim_\bb \sigma(c_2)$ for all closed substitutions $\sigma$.
To this end it is enough to prove that the relation
\[
\rel = \{ (\sigma(c), \sigma(c')) \mid c \sim_\bb c', \sigma \text{closed substitution}\}
\]
is a branching bisimulation.
This is an immediate consequence of Lemma~\ref{lem:substitution}, Lemma~\ref{lem:config_properties} and Definition~\ref{def:open_bb}.\\

$(\Leftarrow)$.
Assume now that $\sigma(c_1) \sim_\bb \sigma(c_2)$ for all closed substitutions $\sigma$.
We aim to show that $c_1 \sim_\bb c_2$ according to Definition~\ref{def:open_bb}.
To this end we proceed by induction on the number of variables occurring in $c_1$ and $c_2$, i.e. on $|\var(c_1) \cup \var(c_2)|$.
\begin{itemize}
\item Base case: $\var(c_1) \cup \var(c_2) = \emptyset$.

In this case, we have that $c_1$ and $c_2$ are closed CCS terms, giving that $c_i = \sigma(c_i)$ for all closed substitutions $\sigma$, $i = 1,2$, and also that the all the transitions from $c_1,c_2$ and their derivatives are of the form $\trans[\mu]$ for some $\mu \in \Acttau$.
It is then immediate to verify that $c_1 \sim_\bb c_2$.
%%%

\item Inductive step: $\var(c_1) \cup \var(c_2) \neq \emptyset$.
This means that there is at least one variable $x \in \Var$ or $y_\mu \in \Var_{\Acttau}$ in the union.
Assume that we actually have both, i.e., $x,y_\mu \in \var(c_1) \cup \var(c_2)$.
This assumption does not invalidate the generality of our approach: the cases in which only $x$ or only $y_\mu$ occur in $\var(c_1) \cup \var(c_2)$ can be easily obtained from the combined case that we present here. 

Let $\alpha \in \Act\cup\overline{\Act}$.
It is immediate to verify that, for $n,m > 0$, $\alpha^n \sim_\bb \alpha^m$ if{f} $n = m$ (notice that $\alpha \neq \tau$).
Since the set of derivatives of $c_1$ and $c_2$ are finite (Lemma~\ref{lem:der_finite}), we can find $n > 0$ such that 
\begin{equation}
\label{eq:alphan_not_bb_c}
\alpha^n \not\sim_\bb c \quad \text{ for each } c \in \der(c_1) \cup \der(c_2).
\end{equation}
In particular, this implies that 
\begin{equation}
\label{eq:alphan_not_bb_c_sub}
\begin{split}
& \alpha^n \not\sim_\bb [x \mapsto \alpha^{n+2}](c), \\
& \alpha^n \not\sim_\bb [y_\mu \mapsto \alpha^{n+1}](c), \text{ and} \\
& \alpha^n \not\sim_\bb [x \mapsto \alpha^{n+2}, y_\mu \mapsto \alpha^{n+1}](c).
\end{split}
\end{equation}
In fact, we have that:
\begin{itemize}
\item If $x,y_\mu \not \in \var(c)$, then $[x \mapsto \alpha^{n+2}](c) = [y_\mu \mapsto \alpha^{n+1}](c) = [x \mapsto \alpha^{n+2}, y_\mu \mapsto \alpha^{n+1}](c) = c$, and the relation immediately follows from Equation (\ref{eq:alphan_not_bb_c}).
The same reasoning applies to all derivatives $c$ such that $c \ntrans[\ell]_\rho $ for any $\ell,\rho$ with $x \in \ell$, and for all those such that $c \ntrans[y_\mu]$. 
\item If $y_\mu \in \var(c)$, then $c$ is of the form $y_\mu \mathbin{\|} c'$ for some configuration $c'$ (Lemma~\ref{lem:xmu_occurrence}), and it then immediate to verify that $[y_\mu \mapsto \alpha^{n+1}](c)$ can perform a sequence of $n+1$ $\alpha$-moves, whereas $\alpha^n$ can perform at most $n$ of such transitions.
\item If $x \in \var(c)$, then the only interesting case to analyse is that of an occurrence of $x$ in $c$ within the scope of communication merge. 
All other cases can be treated as done for $y_\mu$.
We remark here that a potential occurrence of $x$ in the scope of $\cmerge$ is precisely the reason why $x$ is mapped to $\alpha^{n+2}$ whereas $y_\mu$ is mapped to $\alpha^{n+1}$.

According to the operational semantics in Table~\ref{tab:ell_rules_ccslc}, we can distinguish three cases:
\begin{itemize}
\item $c \trans[(x,y)]_\tau c'$ for some variable $y$ and configuration $c'$.
Hence, the occurrence of $x$ is in a subterm of the form $x \cmerge y$.
Consequently, no synchronisation can occur in $[x \mapsto \alpha^{n+2}](c)$ between $\alpha^{n+2}$ and $y$, and this case can be treated as the case in which $x \not \in \var(c)$.

\item $c \trans[(x)]_{\beta,\tau} c'$ for some configuration $c'$ and $\beta \neq \alpha$.
Hence, $x$ occurs in a subterm of the form $x \cmerge \overline{\beta}.t$ for some $t$, and we can proceed as in the previous case.

\item $c \trans[(x)]_{\alpha,\tau} c'$ for some configuration $c'$.
The occurrence of $x$ is then in a subterm of the form $x \cmerge \overline{\alpha}.t$.
Hence, in this case $[x \mapsto \alpha^{n+2}](c)$ has a derivative (the one obtained from the synchronisation of $\alpha^{n+2}$ with $\overline{\alpha}.t$) that can perform a trace with at least $n+1$ $\alpha$-transitions.
Clearly, $\alpha^n$ has no such a derivative.
\end{itemize}
\end{itemize}
We now notice that for every closed substitution $\sigma$ it holds that
\[
\sigma[x \mapsto \alpha^{n+2}, y_\mu \mapsto \alpha^{n+1}](c_1)
\sim_\bb
\sigma[x \mapsto \alpha^{n+2}, y_\mu \mapsto \alpha^{n+1}](c_2).
\]
Since 
\[
\left|
\begin{array}{c}
\var([x \mapsto \alpha^{n+2}, y_\mu \mapsto \alpha^{n+1}](c_1))\; \cup \\ \var([x \mapsto \alpha^{n+2}, y_\mu \mapsto \alpha^{n+1}](c_2))
\end{array}
\right|
< |\var(c_1) \cup \var(c_2)|,
\]
we can apply the inductive hypothesis and obtain that
\begin{equation}
\label{eq:derivatives_all_bb}
[x \mapsto \alpha^{n+2}, y_\mu \mapsto \alpha^{n+1}](c_1)
\sim_\bb
[x \mapsto \alpha^{n+2}, y_\mu \mapsto \alpha^{n+1}](c_2).
\end{equation}
Hence, to conclude our proof, we proceed to show that this implies that $c_1 \sim_\bb c_2$.
This can be done simply by showing that the relation
\begin{align*}
\rel = \{ (c,c') \mid & (c,c') \in \der(c_1) \times \der(c_2) \text{ and } \\
& [x \mapsto \alpha^{n+2}, y_\dd \mapsto \alpha^{n+1}](c) \sim_\bb [x \mapsto \alpha^{n+2}, y_\dd \mapsto \alpha^{n+1}](c')\}
\end{align*}
is a branching bisimulation.
This follows from Equation (\ref{eq:derivatives_all_bb}), Lemma~\ref{lem:substitution}, Lemma~\ref{lem:config_properties} and Equation (\ref{eq:alphan_not_bb_c_sub}), which ensures that whenever $c \rel c'$, then $x \in \var(c)$ if{f} $x \in \var(c')$, (and similarly for $y_\mu$).
\end{itemize}
\end{proof}

%===================================================
%====================================================

\section{Proof of Proposition~\ref{prop:rbb_nf_ccslc}}

Since we defined $\sim_\rbb$ over the set of $\ccslc$ configurations (that include $\ccslc$ terms), before proceeding we prove that, in equational proofs from $\E_\rbb$, terms are never transformed into configurations, in the sense that if a configuration $x_\mu$ is not introduced in the proof by a substitution instance of an axiom, then it cannot be introduced by any application of an axiom or of a rule of equational logic.
Hence, we can prove our results on the set of $\ccslc$ terms (thus simplifying the technical development).

\begin{lemma}
\label{lem:only_terms_ccslc}
\begin{enumerate}
\item \label{lem:only_terms_ccslc_1}
If $\E_\rbb \vdash t \approx u$ and $t$ is a $\ccslc$ term (i.e. it does not contain any occurrence of a variable from $\Var_{\Acttau}$), then also $u$ is a $\ccslc$ term.
\item \label{lem:only_terms_ccslc_2}
If $\E_\rbb \vdash t \approx u$ and $t$ is a $\ccslc$ term, then a proof of $t \approx u$ from $\E_\rbb$ uses only equations over $\ccslc$ terms. 
\end{enumerate}
\end{lemma}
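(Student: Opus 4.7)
The plan is to prove both parts by a simultaneous induction on the depth of the derivation $\E_\rbb \vdash t \approx u$. The crucial observation underlying the whole argument is two-fold: first, no axiom listed in Table~\ref{tab:axioms_rbb} mentions any variable from $\Var_{\Acttau}$---every axiom of $\E_\rbb$ is an equation between $\ccslc$ terms; second, substitutions employed in equational proofs over $\ccslc$ are mappings $\sigma\colon \Var \to \{\ccslc \text{ terms}\}$, and hence cannot introduce any variable from $\Var_{\Acttau}$. Together, these two facts will guarantee that no single step of an equational proof starting from a $\ccslc$ term can produce a term containing a variable from $\Var_{\Acttau}$.

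For part~\ref{lem:only_terms_ccslc_1}, I would case-split on the last rule of equational logic applied in the derivation of $t \approx u$. Reflexivity is immediate. Transitivity chains two applications of the induction hypothesis: if $\E_\rbb \vdash t \approx v$ and $\E_\rbb \vdash v \approx u$ with $t$ a $\ccslc$ term, then the induction hypothesis yields that $v$ is a $\ccslc$ term, and a second application gives that $u$ is too (we may assume, as in Section~\ref{sec:background}, that $\E_\rbb$ is closed under symmetry, so the symmetry rule need not be considered). The congruence rules for prefixing, choice, parallel composition, left merge, and communication merge are handled uniformly: their premises involve proper subterms of $t$ and $u$, so if $t$ is a $\ccslc$ term then so are its relevant subterms, and by the induction hypothesis the corresponding subterms of $u$ are $\ccslc$ terms as well, whence so is $u$. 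Finally, if $t \approx u$ is a substitution instance $\sigma(t_0) \approx \sigma(u_0)$ of an axiom $t_0 \approx u_0 \in \E_\rbb$, then since $u_0$ is itself a $\ccslc$ term containing no variable from $\Var_{\Acttau}$ and $\sigma$ maps variables to $\ccslc$ terms, $u = \sigma(u_0)$ is a $\ccslc$ term.

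Part~\ref{lem:only_terms_ccslc_2} then follows from the very same induction. At every internal node of the proof tree of $t \approx u$, the endpoints of the corresponding step are related by a shorter derivation from $\E_\rbb$; by part~\ref{lem:only_terms_ccslc_1} applied to the sub-derivation connecting $t$ to that endpoint, each intermediate term appearing in the proof is itself a $\ccslc$ term. Hence the entire derivation can be viewed as a proof using only equations over $\ccslc$ terms.

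I expect no genuine obstacle in this argument: the lemma is essentially a sanity check ensuring that the configurational framework of Section~\ref{sec:configurations_bis}, introduced purely for the semantic purpose of defining $\sim_\rbb$ on open terms, does not interfere with equational reasoning about $\ccslc$ terms. The only point that warrants a little care is verifying the case analysis for the substitution rule: one must check each axiom schema in Table~\ref{tab:axioms_rbb} to confirm that both sides are $\ccslc$ terms and that $\var(u_0) \subseteq \var(t_0)$ (taking into account the closure of $\E_\rbb$ under symmetry); this verification is immediate by inspection.
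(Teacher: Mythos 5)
Your argument hinges on the assumption that substitutions in equational proofs are maps $\sigma\colon\Var\to\{\ccslc\text{ terms}\}$ and therefore ``cannot introduce any variable from $\Var_{\Acttau}$''. That is precisely the point at issue: since $\sim_\rbb$ is defined over configurations in Section~\ref{sec:configurations_bis}, the equational logic a priori operates over configurations, and a substitution instance of an axiom may map a variable to a configuration containing some $x_\mu\in\Var_{\Acttau}$. The paper does not assume this away; it explicitly considers the case in which a substitution introduces such an $x_\mu$ and rules it out by showing that an occurrence of $x_\mu$, once introduced, persists through every equation of the proof (no rule of equational logic and no axiom of $\E_\rbb$ can erase it), which contradicts the endpoints being $\ccslc$ terms. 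Taking your assumption as given makes the lemma nearly vacuous, but it begs the question the lemma is meant to settle. A second, concrete problem is your claimed side condition $\var(u_0)\subseteq\var(t_0)$ for every axiom (after closing under symmetry): this is false for C0, L0, C5, C7 and DT3 read right-to-left, e.g.\ $\nil\approx\nil\cmerge x$ has a variable on the right that does not occur on the left, so the ``immediate by inspection'' verification does not go through, and with it the only safeguard your substitution case had against a variable outside $\var(t_0)$ being instantiated by a configuration.

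For comparison, the paper proves part~\ref{lem:only_terms_ccslc_1} by a purely semantic argument that sidesteps the substitution issue entirely: $\E_\rbb\vdash t\approx u$ and soundness give $t\sim_\rbb u$, and Lemma~\ref{lem:same_variables} yields $\var(t)=\var(u)$, so $u$ contains no variable from $\Var_{\Acttau}$. Part~\ref{lem:only_terms_ccslc_2} is then proved by induction on the length of the proof, with the persistence argument for $x_\mu$ sketched above handling the substitution case. Your derivation of part~\ref{lem:only_terms_ccslc_2} from part~\ref{lem:only_terms_ccslc_1} applied to sub-derivations is structurally fine, but it inherits the gap in your proof of part~\ref{lem:only_terms_ccslc_1}. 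To repair your proposal, either replace your inductive proof of part~\ref{lem:only_terms_ccslc_1} by the soundness argument, or add the persistence-of-$x_\mu$ argument to handle substitutions into configurations.
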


\begin{proof}
The first item directly follows from the soundness of $\E_\rbb$ modulo $\sim_\rbb$. 
In fact, as $t \approx u$ implies $t \sim_\rbb u$, by Lemma~\ref{lem:same_variables} we get that $\var(t) = \var(u)$.
Hence, $\var(t) \cap \Var_\dd = \emptyset$ implies that $\var(u) \cap \Var_\dd = \emptyset$ as well.

Let us now deal with the second item.
First of all, we notice that since $t$ is a $\ccslc$ term, we can apply Lemma~\ref{lem:only_terms_ccslc}.\ref{lem:only_terms_ccslc_1} and obtain that $u$ is a $\ccslc$ term as well.
The proof then proceeds by induction on the length of the proof of $t \approx u$ from $\E_\rbb$, where the inductive step is carried out by a case analysis on the last rule of equational logic that is used in the proof.
The proof is standard and therefore we omit it.
We only want to point out that the only way in which a variable $x_\mu \in \Var_{\Acttau}$ can occur in an equational proof is by means of an application of a substitution.
However, we remark that, in our setting, substitutions can only be applied to axioms in equational proofs.
We can then distinguish two cases:
\begin{itemize}
\item Either no substitution instance happen in the proof for $t \approx u$, or it happens, but it does not map any variable into a configuration containing an occurrence of a variable in $\Var_{\Acttau}$.
In this case, the proof follows by induction.
\item A substitution instance is applied and it introduces a configuration with (at least) an occurrence of a variable $x_\mu$ from $\Var_{\Acttau}$.
Then we can prove that $x_\mu$ is not removed by the rules of equational logic.
More precisely, one can show that if $\E_\rbb \vdash c_1 \approx c_2$ for some configurations $c_1,c_2$, and $x_\mu \in \var(c_1)$, then, not only $x_\mu \in \var(c_2)$, but $x_\mu$ occurs in all the equations in the equational proof. 
This property can be proved by induction on the length of the proof of $c_1 \approx c_2$ from $\E_\rbb$.
Hence, in this case we get a contradiction with the proviso of the Lemma stating that $t$ is a $\ccslc$ term.
\end{itemize}
\end{proof}

The property of terms in Remark~\ref{rmk:summands} can be extended to normal forms: every normal form can be rewritten modulo A0--A3 as a summation of simple normal forms.

\begin{lemma}
\label{lem:simple_sums}
For each normal form $N$ there is a sequence (possibly empty) of simple normal forms $S_1,\dots,S_n$ such that $N \approx \sum_{i=1}^n S_i$ (by A0--A3).
\end{lemma}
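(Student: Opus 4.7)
The plan is to prove Lemma~\ref{lem:simple_sums} by a straightforward structural induction on the normal form $N$, following the grammar given in Definition~\ref{def:rbb_nf_ccslc}. Recall that $N$ is generated by $N ::= \nil \mid S \mid N + N$, where $S$ ranges over simple normal forms, i.e., terms of the shape $\mu.N'$, $x \lmerge N'$, $(x \cmerge \alpha) \lmerge N'$, or $(x \cmerge y) \lmerge N'$. The statement to be established is purely syntactic: using only axioms A0--A3 (identity, commutativity, associativity, and idempotence of $+$), every normal form can be rearranged into a flat summation $\sum_{i=1}^{n} S_i$ of simple normal forms, where by convention $\sum_{i=1}^{0} S_i = \nil$.

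For the base cases, if $N = \nil$, then we take the empty sum ($n = 0$), which syntactically represents $\nil$ according to the convention adopted earlier in the paper for summations. If $N = S$ is already a simple normal form, then $N$ itself is a summation with a single summand ($n = 1$, $S_1 = S$), and no axiom application is needed.

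For the inductive step, suppose $N = N_1 + N_2$ where $N_1$ and $N_2$ are normal forms. By the induction hypothesis, there exist simple normal forms $S_1^{(1)},\dots, S_{n_1}^{(1)}$ and $S_1^{(2)},\dots, S_{n_2}^{(2)}$ such that $N_1 \approx \sum_{i=1}^{n_1} S_i^{(1)}$ and $N_2 \approx \sum_{j=1}^{n_2} S_j^{(2)}$ are derivable using only A0--A3. Applying the congruence rule for $+$ we obtain $N \approx \sum_{i=1}^{n_1} S_i^{(1)} + \sum_{j=1}^{n_2} S_j^{(2)}$, and by repeated use of A1 (commutativity) and A2 (associativity) this can be rewritten as a single flat summation $\sum_{k=1}^{n_1+n_2} S_k$ of simple normal forms. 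In the boundary cases where $n_1 = 0$ or $n_2 = 0$, one or both of the summands $N_i$ equals $\nil$; an application of A0 (in the form $x + \nil \approx x$ or, via A1, $\nil + x \approx x$) removes the $\nil$ contribution and yields the desired flat sum.

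There is no real obstacle here: the entire argument is a routine structural induction, and A3 is not even strictly required for the rewriting (it would only be needed if one wished to eliminate duplicate summands, which the statement does not demand). The only mild subtlety is the bookkeeping around the empty-sum convention representing $\nil$, which is handled uniformly by A0.
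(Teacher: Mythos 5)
Your proof is correct and is exactly the routine structural induction the paper has in mind; the paper states Lemma~\ref{lem:simple_sums} without proof, treating it as an immediate consequence of the grammar of normal forms and axioms A0--A2 (with A3 indeed unnecessary, as you observe). The bookkeeping for the empty-sum convention via A0 is handled appropriately.
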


\begin{remark}
\label{rmk:rbb_nf_ccslc}
Notice that variables in $\Var_{\Acttau}$ cannot occur in normal forms.
\end{remark}

Each term can be proven equal using $\E_\rbb$ to a normal form.

\proprbbnfccslc*

\begin{proof}
The proof can be carried out by induction over the size of term exactly like in the proof of Lemma 4.4 in \cite{AFIL09}, and therefore we omit it.
\end{proof}

\begin{remark}
As direct consequence of Proposition~\ref{prop:rbb_nf_ccslc}, we can henceforth assume that each $\ccslc$ term $t$ can be expressed in the general form
\[
t \approx
\sum_{i \in I} \mu_i N_i + 
\sum_{j \in J} x_j \lmerge N_j +
\sum_{h \in H} (x_h \cmerge \alpha_h) \lmerge N_h +
\sum_{k \in K} (x_k \cmerge x'_k) \lmerge N_k
.
\]
\end{remark}

%====================================================
%====================================================

\section{Proof of Proposition~\ref{prop:rbb_provable_ccslc}}

Before proceeding to the proof, we present an auxiliary result.

\begin{lemma}
\label{lem:cancellation_xmu}
Let $x_\mu \in \Var_{\Acttau}$.
Let $c_1,c_2$ be two configurations such that either $x_\mu \not\in \var(c_1) \cup \var(c_2)$, or $x_\mu \in \var(c_1) \cap \var(c_2)$.
If $x_\mu \mathbin{\|} c_1 \sim_\bb x_\mu \mathbin{\|} c_2$, then $c_1 \sim_\bb c_2$.
\end{lemma}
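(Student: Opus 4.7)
The plan is to lift the statement from the configuration level to the closed-term level, apply parallel cancellation for processes, and then descend again. Starting from the hypothesis $x_\mu \mathbin{\|} c_1 \sim_\bb x_\mu \mathbin{\|} c_2$, I would apply Theorem~\ref{thm:bb_on_open} to obtain $\sigma(x_\mu) \mathbin{\|} \sigma(c_1) \sim_\bb \sigma(x_\mu) \mathbin{\|} \sigma(c_2)$ for every closed substitution $\sigma$. Since $\sigma(x_\mu)$, $\sigma(c_1)$, and $\sigma(c_2)$ are all closed terms, Corollary~\ref{cor:cancellation} (lifted to the $\ccslc$ setting) lets me cancel the common $\sigma(x_\mu)$ factor, yielding $\sigma(c_1) \sim_\bb \sigma(c_2)$. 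A second application of Theorem~\ref{thm:bb_on_open} then delivers $c_1 \sim_\bb c_2$.

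The role of the side condition on $x_\mu$ is to guarantee that the cancellation makes sense. When $x_\mu \not\in \var(c_1) \cup \var(c_2)$, the argument runs straight through, because $\sigma(x_\mu)$ contributes to neither $\sigma(c_1)$ nor $\sigma(c_2)$, so the added factor $\sigma(x_\mu)$ on each side is truly a separate parallel component that can be removed by Corollary~\ref{cor:cancellation}. When $x_\mu \in \var(c_1) \cap \var(c_2)$, Lemma~\ref{lem:xmu_occurrence} ensures that both $c_1$ and $c_2$ have the shape $x_\mu \mathbin{\|} c_i'$ modulo the axioms in Table~\ref{tab:axioms_b}, which are sound for $\sim_\bb$ on configurations; this symmetric occurrence on both sides is precisely what is needed in order that cancelling a single $\sigma(x_\mu)$ still leaves us with the desired bisimilarity between $\sigma(c_1)$ and $\sigma(c_2)$. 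An asymmetric situation (where $x_\mu$ were present on one side but not the other) would correspond, after substitution, to mismatched parallel factors that would make cancellation inapplicable, which is exactly why such combinations are excluded from the hypothesis.

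The main obstacle is extending Corollary~\ref{cor:cancellation} from pure CCS to $\ccslc$. That result was derived from the unique parallel decomposition in Proposition~\ref{prop:unique_par} via power cancellation for the weak decomposition order $\preceq$ on $\Proc/{\sim_\bb}$. The arguments behind Proposition~\ref{prop:rttransprops} and Proposition~\ref{prop:bb_cancellation} rely only on LTS-level properties of $\sim_\bb$ and $\mathbin{\|}$, namely inverse well-foundedness of the transition relation, compatibility and precompositionality of $\mathbin{\|}$, the Archimedean property of observable depth, and the stuttering property of branching bisimilarity. None of these is affected by the addition of $\lmerge$ and $\cmerge$, since the new operators introduce no parallel factors of their own at the level of the reached states. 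Consequently the unique parallel decomposition and the cancellation law transfer to $\ccslc$ processes without modification, validating the cancellation step that drives the proof.
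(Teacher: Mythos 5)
Your strategy has a genuine gap at its very first step: the appeal to the left-to-right direction of Theorem~\ref{thm:bb_on_open} on configurations that contain a variable from $\Var_{\Acttau}$. The configuration $x_\mu$ is semantically inert: by rule $(c_1)$ its only transition is the self-loop $x_\mu \trans[x_\mu] x_\mu$, and by Lemma~\ref{lem:config_properties} such transitions leave $\sigma(c)$ unchanged rather than exposing any behaviour of the closed term substituted for $x_\mu$. Consequently, configuration-level bisimilarity of $x_\mu \mathbin{\|} c_1$ and $x_\mu \mathbin{\|} c_2$ controls nothing about how $\sigma(x_\mu)$ behaves inside the closed instances; it cannot even count occurrences of $x_\mu$. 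Concretely, $x_\mu \mathbin{\|} x_\mu \sim_\bb x_\mu \mathbin{\|} \nil$ as configurations (each has only the $x_\mu$-labelled self-loop), yet with $\sigma(x_\mu)=a$ one gets $a \mathbin{\|} a \not\sim_\bb a \mathbin{\|} \nil$, and $x_\mu \not\sim_\bb \nil$. This example does double duty: it shows that the inference from $x_\mu \mathbin{\|} c_1 \sim_\bb x_\mu \mathbin{\|} c_2$ to $\sigma(x_\mu) \mathbin{\|} \sigma(c_1) \sim_\bb \sigma(x_\mu) \mathbin{\|} \sigma(c_2)$ is unsound for configurations with $\Var_{\Acttau}$ variables, and it shows that the lemma is false without the side condition --- yet your chain of deductions never uses the side condition in a load-bearing way (you invoke it only to explain why cancellation ``makes sense'', after the problematic implication has already been asserted, so your argument would equally well ``prove'' the false unrestricted statement). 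In the cases where the side condition does hold, recovering the closed-instance bisimilarity essentially requires already knowing $\sigma(c_1) \sim_\bb \sigma(c_2)$ and applying congruence, i.e., the conclusion you are trying to prove; the route through closed instances is therefore either unsound or circular.

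The paper avoids this entirely by arguing directly on the configuration LTS: it shows that the set of pairs $(c,c')$ with $x_\mu \mathbin{\|} c \sim_\bb x_\mu \mathbin{\|} c'$ and satisfying the side condition is itself a branching bisimulation. The key points are that every $\trans[\nu]$ or $\trans[\ell]_\rho$ transition of $x_\mu \mathbin{\|} c_i$ must originate from $c_i$, that the side condition disposes of the $\trans[x_\mu]$ self-loops, and that Lemma~\ref{lem:xmu_occurrence} guarantees the side condition is preserved along derivatives. If you want a cancellation-style argument, it must be run at the level of configurations, not of closed instances. Your secondary claim --- that Corollary~\ref{cor:cancellation} extends from CCS to closed $\ccslc$ terms --- is plausible (every closed $\ccslc$ term is strongly bisimilar to a CCS process), but it is also left unproved; it is not, however, the main problem.
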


\begin{proof}
To prove the statement it is enough to show that the relation
\[
\rel = \{(c,c') \mid x_\mu \mathbin{\|} c \sim_\bb x_\mu \mathbin{\|} c' \text{ and  either } x_\mu \not\in \var(c) \cup \var(c') \text{, or } x_\mu \in \var(c) \cap \var(c')\}
\]
is a branching bisimulation.

Let $c_1,c_2 \in \rel$. 
We consider first the case of $c_1,c_2$ such that $x_\mu \not \in \var(c_1) \cup \var(c_2)$.
Notice that, in this case, we have that there are no $c_1',c_2'$ such that $c_1 \trans[\varepsilon] c_1' \trans[x_\mu]$ and $c_2 \trans[\varepsilon] c_2' \trans[x_\mu]$.
Assume that $c_1 \xitrans c_1'$.
Clearly, this gives $x_\mu \mathbin{\|} c_1 \trans[\xi] x_\mu \mathbin{\|} c_1'$.
We proceed by a case analysis on the possible forms of $\xi$ and on how $x_\mu \mathbin{\|} c_2'$ matches the transition from $x_\dd \mathbin{\|} c_1$, according to Definition~\ref{def:open_bb}:
\begin{itemize}
\item $\xi = \tau$ and $x_\mu \mathbin{\|} c_1' \sim_\bb x_\mu \mathbin{\|} c_2$.
Clearly, in this case, $x_\mu \not \in \var(c_1') \cup \var(c_2)$, and thus $(c_1',c_2) \in \rel$.
\item $c_1 \xitrans c_1'$ either because $c_1 \trans[\mu] c_1'$ for some $\mu \in \Acttau$, or $c_1 \trans[\ell]_\rho c_1'$ for some $\ell$ with $x \not\in \ell$, $x \in\ell$ but $\rho \neq \mu,(\mu,\tau)$.
(Notice that this includes also the case of $\xi = \tau$ and $x_\mu \mathbin{\|} c_2$ matching the $\tau$-transition from $x_\mu \mathbin{\|} c_1$ through a sequence of silent moves.)
Then $x_\mu \mathbin{\|} c_1 \xitrans x_\mu \mathbin{\|} c_1'$, and $x_\mu \mathbin{\|} c_1 \sim_\bb x_\mu \mathbin{\|} c_2$ implies that there are configurations $r,r'$ such that $x_\mu \mathbin{\|} c_2 \trans[\varepsilon] r \xitrans r'$, $r \sim_\bb x_\mu \mathbin{\|} c_1$, and $r' \sim_\bb x_\mu \mathbin{\|} c_1'$.
Since none of the transitions in the sequence $x_\mu \mathbin{\|} c_2 \trans[\varepsilon] r \xitrans r'$ can be performed by $x_\mu$, we have that there are $c_2',c_2''$ such that $c_2 \trans[\varepsilon] c_2' \xitrans c_2''$, $r = x_\mu \mathbin{\|} c_2'$ and $r' = x_\mu \mathbin{\|} c_2''$.
Moreover, from the considerations above, we can also infer that $x_\mu \not \in \var(c_2'),\var(c_2'')$.
Therefore, we have obtained that there are configurations $c_2',c_2''$ such that $c_2 \trans[\varepsilon] c_2' \xitrans c_2''$, $(c_1,c_2') \in \rel$, and $(c_1',c_2'') \in \rel$.
\item $c_1 \xitrans c_1'$ because $c_1 \trans[\ell]_\rho c_1'$ with $x \in \ell$ and $\rho\in\{\mu,(\mu,\tau)\}$.
Then $c_1 \xitrans c_1'$ where, by Lemma~\ref{lem:xmu_occurrence}, $c_1'$ is of the form $x_\mu \mathbin{\|} c$ for some (possibly null) configuration $c$.
As $x_\mu \mathbin{\|} c_1 \sim_\bb x_\mu \mathbin{\|} c_2$, it follows that $x_\mu \mathbin{\|} c_2 \trans[\varepsilon] r \xitrans r'$ for some configurations $r,r'$ such that $r \sim_\bb x_\mu \mathbin{\|} c_1$ and $r' \sim_\bb x_\mu \mathbin{\|} c_1'$.
Also in this case, none of the transitions in the sequence $x_\mu \mathbin{\|} c_2 \trans[\varepsilon] r \xitrans r'$ can be due to $x_\mu$.
Hence, there are configurations $c_2',c_2''$ such that $c_2 \trans[\varepsilon] c_2' \xitrans c_2''$, $r = x_\mu \mathbin{\|} c_2'$, and $r' = x_\mu \mathbin{\|} c_2''$.
Moreover, we notice that $x_\mu \not \in \var(c_2')$, whereas, by Lemma~\ref{lem:xmu_occurrence}, $x_\mu \in \var(c_2'')$.
Summarising, we have obtained that there are configurations $c_2',c_2''$ such that $c_2 \trans[\varepsilon] c_2' \xitrans c_2''$, $(c_1,c_2') \in \rel$, and $(c_1',c_2'') \in \rel$. 
\end{itemize}
Symmetrically, we can prove that any transition from $c_2$ is matched by $c_1$ according to the definition of branching bisimilarity. 

Consider now the case in which $c_1 \xitrans c_1'$ and $x_\mu \in \var(c_1) \cap \var(c_2)$.
The proof follows as in the previous case, by noticing that, by the operational semantics of $\mathbin{\|}$ defined in Tables~\ref{tab:sos_rules_ccslc},~\ref{tab:ell_rules_ccslc}, and~\ref{tab:c_rules}, for any configuration $c$ it holds that $x_\mu \in \var(c)$ implies $x_\mu \in \var(c')$ for all $c' \in \der(c)$.
Moreover, there is an extra case that we need to consider in the case analysis over the possible forms of the label $\xi$, namely:
\begin{itemize}
\item $\xi = x_\mu$.
This case is trivial as $x_\mu \in \var(c_i)$ implies $c_i \trans[x_\mu] c_i$, for $i \in \{1,2\}$.
\end{itemize}
\end{proof}

\begin{remark}
\label{rmk:index_notation}
We are aware that, in the following proof, the notation used for the sets of indexes is a bit heavy. 
We did our best to to explain every step in the proof so that the reader has always a clear idea of what we are doing.
\end{remark}

\proprbbprovableccslc*

\begin{proof}
By Proposition~\ref{prop:rbb_nf_ccslc} and Lemma~\ref{lem:only_terms_ccslc} it is enough to prove the statement for normal forms.
So let $t$ and $u$ be branching bisimilar normal forms.
In particular we can write them in the general forms
\begin{align*}
t \approx{} &
\sum_{i \in I} \mu_i.N_i +
\sum_{j \in J} x_j \lmerge N_j +
\sum_{h \in H} (x_h \cmerge \alpha_h) \lmerge N_h +
\sum_{k \in K} (x_k \cmerge x'_k) \lmerge N_k
\\
u \approx{} &
\sum_{\bar{i} \in \bar{I}} \nu_{\bar{i}}.M_{\bar{i}} +
\sum_{\bar{j} \in \bar{J}} y_{\bar{j}} \lmerge M_{\bar{j}} +
\sum_{\bar{h} \in \bar{H}} (y_{\bar{h}} \cmerge \beta_{\bar{h}}) \lmerge M_{\bar{h}} +
\sum_{\bar{k} \in \bar{K}} (y_{\bar{k}} \cmerge y'_{\bar{k}}) \lmerge M_{\bar{k}},
\end{align*}
where all the $N_i,N_j,N_h,N_k,M_{\bar{i}},M_{\bar{j}},M_{\bar{h}},M_{\bar{k}}$ are themselves in normal form.
We proceed to prove that $\E_\rbb \vdash \mu.t \approx \mu.u$ by induction over the sum of the sizes of $t$ and $u$.

As $t \sim_\bb u$, we can distinguish three cases, according to the form of the actions $\mu_i,\nu_{\bar{i}}$, for $i \in I, \bar{i} \in \bar{I}$.
\begin{enumerate}
\item\label{case:uno} There is no $i \in I$ such that $\mu_i = \tau$ and $N_i \sim_\bb u$, and there is no $\bar{i} \in \bar{I}$ such that $\nu_{\bar{i}} = \tau$ and $t \sim_\bb M_{\bar{i}}$.

In this case, as $t \sim_\bb u$, each transition $t \xitrans t'$ must be matched by a transition $u \xitrans u'$ for some $u'$ such that $t' \sim_\bb u'$.
Given the characterisation of the semantics of terms, and the definition of $\sim_\bb$ (Definition~\ref{def:open_bb}), we can infer the following:
\begin{itemize}
\item For each summand $\mu_i.N_i$ there is a summand $\nu_{\bar{i}}.M_{\bar{i}}$ such that $\nu_{\bar{i}} = \mu_i$ and $N_i \sim_\bb M_{\bar{i}}$.
Symmetrically, for each summand $\nu_{\bar{i}}.M_{\bar{i}}$ there is a summand $\mu_i.N_i$ such that $\mu_i = \nu_{\bar{i}}$ and $N_i \sim_\bb M_{\bar{i}}$.
As the sum of the sizes of $N_i$ and $M_{\bar{i}}$ is strictly smaller than $\size(t)+\size(u)$, by induction we get that 
\begin{equation}
\label{eq:rbb_provable_I}
\E_\rbb \vdash \mu_i.N_i \approx \mu_i.M_{\bar{i}} = \nu_{\bar{i}}.M_{\bar{i}}.
\end{equation}
\item For each $k \in K$ there is a $\bar{k} \in \bar{K}$ such that, for any $\alpha \in \Act\cup \overline{\Act}$: 
\begin{enumerate}
\item $t \trans[(x_k,x'_k)]_\tau x_{k,\alpha} \mathbin{\|} x'_{k,\overline{\alpha}} \mathbin{\|} N_k$;
\item $u \trans[(y_{\bar{k}},y'_{\bar{k}})]_\tau y_{\bar{k},\alpha} \mathbin{\|} y'_{\bar{k},\overline{\alpha}} \mathbin{\|} M_{\bar{k}}$;
\item\label{item:tre} $x_k \cmerge x'_k = y_{\bar{k}} \cmerge y'_{\bar{k}}$ modulo C1;
\item\label{item:quattro} $x_{k,\alpha} \mathbin{\|} x'_{k,\overline{\alpha}} \mathbin{\|} N_k \sim_\bb y_{\bar{k},\alpha} \mathbin{\|} y'_{\bar{k},\overline{\alpha}} \mathbin{\|} M_{\bar{k}}$.
\end{enumerate}
Symmetrical relations hold for each $\bar{k} \in \bar{K}$.
As $x_{k,\alpha},x'_{k,\overline{\alpha}},y_{\bar{k},\alpha},y'_{\bar{k},\overline{\alpha}}$ are univocally determined by, respectively, $x_k,x'_k,y_{\bar{k}},y'_{\bar{k}}$, from item (\ref{item:tre}) above we infer that $x_{k,\alpha} \mathbin{\|} x'_{k,\overline{\alpha}} = y_{\bar{k},\alpha} \mathbin{\|} y'_{\bar{k},\overline{\alpha}}$ modulo D1.
Hence, by item (\ref{item:quattro}) and two applications of Lemma~\ref{lem:cancellation_xmu}, we obtain that $N_k \sim_\bb M_{\bar{k}}$.
As the sum of the sizes of $N_k$ and $M_{\bar{k}}$ is strictly smaller than $\size(t)+\size(u)$, by induction we get that $\E_\rbb \vdash \tau.N_k \approx \tau.M_{\bar{k}}$.
(Please notice that since $N_k$ and $M_{\bar{k}}$ are normal forms, they are in particular CCS terms, and induction is therefore well defined on them.)
Then, we have 
\begin{align*}
(x_k \cmerge x'_k) \lmerge N_k &
\stackrel{\scalebox{0.7}{(TL)}}{\approx{}} (x_k \cmerge x'_k) \lmerge \tau.N_k \\
& \approx{} (x_k \cmerge x'_k) \lmerge \tau.M_{\bar{k}} \\
& \stackrel{\scalebox{0.7}{(TL)}}{\approx{}} (x_k \cmerge x'_k) \lmerge M_{\bar{k}} \\
& \approx{} (y_{\bar{k}} \cmerge y'_{\bar{k}}) \lmerge M_{\bar{k}}
\end{align*}
Summarising, we have obtained that for each $k \in K$ (respectively, $\bar{k} \in \bar{K}$) there is a $\bar{k} \in \bar{K}$ (respectively, $k \in K$) such that:
\begin{equation}
\label{eq:rbb_provable_K}
\E_\rbb \vdash (x_k \cmerge x'_k) \lmerge N_k \approx (y_{\bar{k}} \cmerge y'_{\bar{k}}) \lmerge M_{\bar{k}}.
\end{equation}
\item For each $j \in J$ there is a $\bar{j} \in \bar{J}$ such that, given any $\mu \in \Acttau$:
\begin{enumerate}
\item $t \trans[(x_j)]_\mu x_{j,\mu} \mathbin{\|} N_j$;
\item $u \trans[(y_{\bar{j}})]_\mu y_{\bar{j},\mu} \mathbin{\|} M_{\bar{j}}$;
\item $x_j = y_{\bar{j}}$;
\item $x_{j,\mu} \mathbin{\|} N_j \sim_\bb y_{\bar{j},\mu} \mathbin{\|} M_{\bar{j}}$.
\end{enumerate}
Symmetrical relations hold for each $\bar{j} \in \bar{J}$.
We can then proceed as the case of indexes in $K, \bar{K}$ and obtain that for each $j \in J$ (respectively, $\bar{j} \in \bar{J}$) there is a $\bar{j} \in \bar{J}$ (respectively, $j \in J$) such that:
\begin{equation}
\label{eq:rbb_provable_J}
\E_\rbb \vdash x_j \lmerge N_j \approx y_{\bar{j}} \lmerge M_{\bar{j}}.
\end{equation}
\item For each $h \in H$ there is a $\bar{h} \in \bar{H}$ such that, given any $\alpha \in \Act\cup\overline{\Act}$:
\begin{enumerate}
\item $t \trans[(x_h)]_{\overline{\alpha_h},\tau} x_{h,\overline{\alpha_h}} \mathbin{\|} N_h$;
\item $u \trans[(y_{\bar{h}})]_{\overline{\beta_{\bar{h}}},\tau} y_{\bar{h},\overline{\beta_{\bar{h}}}} \mathbin{\|} M_{\bar{h}}$;
\item $x_h = y_{\bar{h}}$ and $\alpha_h = \beta_{\bar{h}}$;
\item $x_{h,\overline{\alpha_h}} \mathbin{\|} N_h \sim_\bb y_{\bar{h},\overline{\beta_{\bar{h}}}} \mathbin{\|} M_{\bar{h}}$.
\end{enumerate}
Symmetrical relations hold for each $\bar{h} \in \bar{H}$.
We can then proceed as the case of indexes in $K, \bar{K}$ and obtain that for each $h \in H$ (respectively, $\bar{h} \in \bar{H}$) there is a $\bar{h} \in \bar{H}$ (respectively, $h \in H$) such that:
\begin{equation}
\label{eq:rbb_provable_H}
\E_\rbb \vdash (x_h \cmerge \alpha_h) \lmerge N_h \approx (y_{\bar{h}} \cmerge \beta_{\bar{h}}) \lmerge M_{\bar{h}}.
\end{equation}
\end{itemize}
Equations (\ref{eq:rbb_provable_I})--(\ref{eq:rbb_provable_H}) taken together give $\E_\rbb \vdash t \approx u$, from which it is immediate to infer $\E_\rbb \vdash \mu.t \approx \mu.u$, for any $\mu \in \Acttau$, and the proof is complete is this case.
%%%

\item Assume now that $\mu_i = \tau$ and $N_i \sim_\bb u$ for some $i \in I$, and that $\nu_{\bar{i}} = \tau$ and $t \sim_\bb M_{\bar{i}}$ for some $\bar{i} \in \bar{I}$.
Clearly, we have that $N_i \sim_\bb u \sim_\bb t \sim_\bb M_{\bar{i}}$, and $\size(N_i) + \size(M_{\bar{i}}) < \size(t) + \size(u)$, so that by induction we obtain
\[
\E_\rbb \vdash \mu.N_i \approx \mu.M_{\bar{i}}
\qquad
\E_\rbb \vdash \mu.t \approx \mu.M_{\bar{i}}
\qquad
\E_\rbb \vdash \mu.N_i \approx \mu.u
\]
from which $\E_\rbb \vdash \mu.t \approx \mu.u$ can be inferred, and the proof is complete in this case.
%%%

\item Assume that there is an index $i \in I$ such that $\mu_i = \tau$ and $N_i \sim_\bb u$, but there is no $\bar{i} \in \bar{I}$ such that $\nu_{\bar{i}} = \tau$ and $t \sim_\bb M_{\bar{i}}$.
(The symmetric case can be treated similarly and it is therefore omitted.)
For every summand $\tau.N_i$ of $t$ with $N_i \sim_\bb u$ we have that the sum of the sizes of $N_i$ and $u$ is strictly smaller than $\size(t)+\size(u)$.
Hence, by induction we obtain that $\E_\rbb \vdash \tau.N_i \approx \tau.u$ for all such summands.
Thus, possibly applying axioms A0--A3, we can infer that
\begin{equation}
\E_\rbb \vdash t \approx \tau.u + N
\end{equation}
where 
\[
N = \sum_{i \in I_t} \mu_i.N_i +
\sum_{j \in J} x_j \lmerge N_j +
\sum_{h \in H} (x_h \cmerge \alpha_h) \lmerge N_h +
\sum_{k \in K} (x_k \cmerge x'_k) \lmerge N_k
\]
with $I_t = \{i \in I \mid \mu_i \neq \tau \vee N_i \not\sim_\bb u\}$.
Given the condition on the indexes in $I_t$, and considering that but there is no $\bar{i} \in \bar{I}$ such that $\nu_{\bar{i}} = \tau$ and $t \sim_\bb M_{\bar{i}}$, it is immediate to verify that whenever $N \xitrans C$ then $u \xitrans c$ for some $c$ such that $C \sim_\bb c$.
In particular, by applying the same reasoning used in the analysis of case~\ref{case:uno} above, we have:
\begin{itemize}
\item for each $i \in I_t$ there is a $\bar{i}_i \in \bar{I}$ such that
\[
\E_\rbb \vdash \mu_i.N_i \approx \nu_{\bar{i}_i}.M_{\bar{i}_i};
\]
\item for each $j \in J$ there is a $\bar{j}_j \in \bar{J}$ such that 
\[
\E_\rbb \vdash x_j \lmerge N_j \approx y_{\bar{j}_j} \lmerge M_{\bar{j}_j};
\]
\item for each $h \in H$ there is a $\bar{h}_h \in \bar{H}$ such that 
\[
\E_\rbb \vdash (x_h \cmerge \alpha_h) \lmerge N_h \approx (y_{\bar{h}_h} \cmerge \beta_{\bar{h}_h}) \lmerge M_{\bar{h}_h};
\]
\item for each $k \in K$ there is a $\bar{k}_k \in \bar{K}$ such that 
\[
\E_\rbb \vdash (x_k \cmerge x'_k) \lmerge N_k \approx (y_{\bar{k}_k} \cmerge y'_{\bar{k}_k}) \lmerge M_{\bar{k}_k}.
\]
\end{itemize}
Summarising, we have obtained that
\[
\E_\rbb \vdash u \approx N + M
\]
where
\begin{align*}
M ={} &
\sum \{\nu_{\bar{i}}.M_{\bar{i}} \mid \bar{i} \neq \bar{i}_i \text{ for all } i\} +
\sum \{y_{\bar{j}} \lmerge M_{\bar{j}} \mid \bar{j} \neq \bar{j}_j \text{ for all } j \} + \\
& \sum \{(y_{\bar{h}} \cmerge \beta_{\bar{h}} \mid \lmerge M_{\bar{h}} \mid \bar{h} \neq \bar{h}_h \text{ for all } h\} +
\sum \{(y_{\bar{k}} \cmerge y'_{\bar{k}} \mid \lmerge M_{\bar{k}} \mid \bar{k} \neq \bar{k}_k \text{ for all } k\}.
\end{align*}
Then:
\begin{align*}
\E_\rbb \vdash \mu.t \approx{} &
\mu.(\tau.u + N) \\
\approx{} & 
\mu.(\tau.(N + M) + N) \\
\stackrel{\scalebox{0.7}{(TB)}}{\approx{}} &
\mu.(N + M) \\
\approx{} &
\mu.u
\end{align*}
and the proof follows also in this case.
\end{enumerate}
\end{proof}

%=======================================================
%======================================================

\section{Proof of Theorem~\ref{thm:rbb_complete_ccslc}}

\thmrbbcompleteccslc*

\begin{proof}
To prove the claim, it is enough to prove that $\E_\rbb \vdash t \approx t + u$, since, by symmetry of $\sim_\rbb$, this also gives $\E_\rbb \vdash u \approx t + u$ and thus that $\E_\rbb \vdash t \approx u$.

By Proposition~\ref{prop:rbb_nf_ccslc} and Lemma~\ref{lem:only_terms_ccslc}, $t$ and $u$ can be written in normal form as follows:
\begin{align*}
t \approx{} &
\sum_{i \in I} \mu_i.N_i +
\sum_{j \in J} x_j \lmerge N_j +
\sum_{h \in H} (x_h \cmerge \alpha_h) \lmerge N_h +
\sum_{k \in K} (x_k \cmerge x'_k) \lmerge N_k
\\
u \approx{} &
\sum_{\bar{i} \in \bar{I}} \nu_{\bar{i}}.M_{\bar{i}} +
\sum_{\bar{j} \in \bar{J}} y_{\bar{j}} \lmerge M_{\bar{j}} +
\sum_{\bar{h} \in \bar{H}} (y_{\bar{h}} \cmerge \beta_{\bar{h}}) \lmerge M_{\bar{h}} +
\sum_{\bar{k} \in \bar{K}} (y_{\bar{k}} \cmerge y'_{\bar{k}}) \lmerge M_{\bar{k}},
\end{align*}
where all the $N_i,N_j,N_h,N_k,M_{\bar{i}},M_{\bar{j}},M_{\bar{h}},M_{\bar{k}}$ are themselves in normal form.
As $t \sim_\rbb u$ and the equations in $\E_\rbb$ are sound modulo rooted branching bisimilarity, we can apply the same reasoning used in case~\ref{case:uno} of the proof of Proposition~\ref{prop:rbb_provable_ccslc} and obtain that:
\begin{itemize}
\item For each index $\bar{i} \in \bar{I}$ there is an index $i_{\bar{i}} \in I$ such that $\mu_{i_{\bar{i}}} = \nu_{\bar{i}}$ and $N_{i_{\bar{i}}} \sim_\bb M_{\bar{i}}$.
Then, for each $\bar{i} \in \bar{I}$, by Proposition~\ref{prop:rbb_provable_ccslc} we obtain that 
\begin{equation}
\label{eq:rbb_complete_I}
\E_\rbb \vdash \mu_{i_{\bar{i}}}.N_{i_{\bar{i}}} \approx \mu_{i_{\bar{i}}}.M_{\bar{i}} = \nu_{\bar{i}}.M_{\bar{i}}.
\end{equation}
\item For each $\bar{j} \in \bar{J}$ there is a $j_{\bar{j}} \in J$ such that $y_{\bar{j}} = x_{j_{\bar{j}}}$ and $M_{\bar{j}} \sim_\bb N_{j_{\bar{j}}}$.
Then, for each $\bar{j} \in \bar{J}$, by Proposition~\ref{prop:rbb_provable_ccslc} and axiom TL we obtain that
\begin{equation}
\label{eq:rbb_complete_J}
\E_\rbb \vdash x_{j_{\bar{j}}} \lmerge N_{j_{\bar{j}}} \approx
x_{j_{\bar{j}}} \lmerge \tau.N_{j_{\bar{j}}} \approx y_{\bar{j}} \lmerge \tau.M_{\bar{j}} \approx y_{\bar{j}} \lmerge M_{\bar{j}}.
\end{equation}
\item For each $\bar{h} \in \bar{H}$ there is a $h_{\bar{h}} \in H$ such that $y_{\bar{h}} = x_{h_{\bar{h}}}$, $\beta_{\bar{h}} = \alpha_{h_{\bar{h}}}$ and $M_{\bar{h}} \sim_\bb N_{h_{\bar{h}}}$.
Then, for each $\bar{h} \in \bar{H}$, by Proposition~\ref{prop:rbb_provable_ccslc} and axiom TL we obtain that
\begin{equation}
\label{eq:rbb_complete_H}
\E_\rbb \vdash (x_{h_{\bar{h}}} \cmerge \alpha_{h_{\bar{h}}}) \lmerge N_{h_{\bar{h}}} \approx (x_{h_{\bar{h}}} \cmerge \alpha_{h_{\bar{h}}}) \lmerge \tau.N_{h_{\bar{h}}} \approx (y_{\bar{h}} \cmerge \beta_{\bar{h}}) \lmerge \tau.M_{\bar{h}} \approx (y_{\bar{h}} \cmerge \beta_{\bar{h}}) \lmerge M_{\bar{h}}.
\end{equation}
\item For each $\bar{k} \in \bar{K}$ there is a $k_{\bar{k}} \in K$ such that $y_{\bar{k}} \cmerge y'_{\bar{k}} = x_{k_{\bar{k}}} \cmerge x'_{k_{\bar{k}}}$, modulo C1, and $M_{\bar{k}} \sim_\bb N_{k_{\bar{k}}}$.
Then, for each $\bar{k} \in \bar{K}$, by Proposition~\ref{prop:rbb_provable_ccslc} and axiom TL we obtain that
\begin{equation}
\label{eq:rbb_complete_K}
\E_\rbb \vdash (x_{k_{\bar{k}}} \cmerge x'_{k_{\bar{k}}}) \lmerge N_{k_{\bar{k}}} \approx (x_{k_{\bar{k}}} \cmerge x'_{k_{\bar{k}}}) \lmerge \tau.N_{k_{\bar{k}}} \approx (y_{\bar{k}} \cmerge y'_{\bar{k}}) \lmerge \tau.M_{\bar{k}} \approx (y_{\bar{k}} \cmerge y'_{\bar{k}}) \lmerge M_{\bar{k}}.
\end{equation}
\end{itemize}
The fact that $\E_\rbb \vdash t \approx t + u$ then immediately follows from Equations (\ref{eq:rbb_complete_I})--(\ref{eq:rbb_complete_K}).
\end{proof}

\end{document}